\newtheorem{theorem}{Theorem}
\newtheorem{lemma}[theorem]{Lemma}
\newtheorem{proposition}[theorem]{Proposition}
\newtheorem{corollary}[theorem]{Corollary}
\newtheorem{definition}[theorem]{Definition}
\newtheorem*{framework*}{Framework}
\title{A Framework for the Design of Efficient Diversification Algorithms to NP-Hard Problems}
\author[1]{Waldo Gálvez}
\author[2]{Mayank Goswami}
\author[3]{Arturo Merino}
\author[4]{\\GiBeom Park}
\author[5]{Meng-Tsung Tsai}
\author[6]{Victor Verdugo}
\affil[1]{Universidad de Concepción, Chile, \texttt{wgalvez@inf.udec.cl}}
\affil[2]{Queens College of CUNY, USA, \texttt{mayank.goswami@qc.cuny.edu}}
\affil[3]{Universidad de O'Higgins, Chile, \texttt{arturo.merino@uoh.cl}}
\affil[4]{CUNY Graduate Center, USA, \texttt{gpark1@gradcenter.cuny.edu}}
\affil[5]{Academia Sinica, Taiwan, \texttt{mttsai@iis.sinica.edu.tw}}
\affil[6]{Pontificia Universidad Católica de Chile, \texttt{victor.verdugo@uc.cl}}
\date{}
\begin{document}

\clearpage
\pagenumbering{arabic}

\maketitle
\begin{abstract}
    There has been considerable recent interest in computing a diverse collection of solutions to a given optimization problem, both in the AI and theory communities. Given a classical optimization problem $\Pi$ (e.g., spanning tree, minimum cuts, maximum matching, minimum vertex cover) with input size $n$ and an integer $k\geq 1$, the goal is to generate a collection of $k$ maximally diverse solutions to $\Pi$. This diverse-X paradigm not only allows the user to generate very different solutions, but also helps make systems more secure and robust by handling uncertainty, and achieve energy efficiency. 
    For problems $\Pi$ in P (such as spanning tree and minimum cut), there are efficient $\text{poly}(n,k)$ approximation algorithms available for the diverse variants [Hanaka et al. AAAI 2021, 2022, 2023, Gao et al. LATIN 2022, de Berg et al. ISAAC 2023]. In contrast, only FPT algorithms are known for NP-hard problems such as vertex covers and independent sets [Baste et al. IJCAI 2020, Eiben et al. SODA 2024, Misra et al. ISAAC 2024, Austrin et al. ICALP 2025], but in the worst case, these algorithms run in time $\exp((kn)^c)$ for some $c>0$. In this work, we address this gap and give $\text{poly}(n,k)$ or $f(k)\text{poly}(n)$ time approximation algorithms for diversification variants of several NP-hard problems such as knapsack, maximum weight independent sets (MWIS) and minimum vertex covers in planar graphs, geometric (rectangle) knapsack, enclosing points by polygon, and MWIS in unit-disk-graphs of points in convex position. Our results are achieved by developing a general framework and applying it to problems with textbook dynamic-programming algorithms to find one solution.
\end{abstract}

\clearpage
\tableofcontents

\clearpage

\vspace{-3mm}\section{Introduction}

\vspace{-1mm}Computing a collection of diverse solutions to a given optimization problem has gained a lot of attention recently~\cite{fomin2020diverse,baste2019fpt,hanaka2021finding,hanaka2022computing,abboud2022improved,gao2022obtaining,de2023finding,schulhof2025finding,misra2024parameterized,austrin2024geometry}.
While classical algorithms are tailored to produce one solution, the task here is to output a collection of $k \geq 1  $ solutions that are \emph{maximally dispersed} in the solution space.
In general, one is given a diversity measure on the space of $k$-tuples of solutions to a problem, and the goal is to output the set of $k$ solutions that maximize this measure.

When solutions can be represented as a subset of the input, the metric for diversity measure is the size of the symmetric difference between two sets; that is, given two feasible solutions $X \in \cF$ and $Y \in \cF$ of a given optimization problem, $d(X,Y)=|X \Delta Y|$. This is extended to a $k$-tuple of solutions by considering either the average, or the minimum pairwise distance between all pairs of solutions. For example, if $\cF$ is the family of all minimum spanning trees of a given graph $G$, then the task is to find $k$ spanning trees whose average (or, minimum) of pairwise distances is maximized. We will mostly focus on the average for now, and discuss the minimum later.

The weighted setting is also of interest, e.g., $\cF$ is the family of all \emph{minimum} spanning trees of $G$. However, if $G$ has a unique minimum spanning tree, then the problem of returning $k$ diverse minimum spanning trees becomes uninteresting. The natural approach here is to enlarge the set of solutions by allowing approximations. We call such approximately optimal solutions {\it nice}. 
In fact, if we replace the minimum spanning tree above by an NP-hard problem, say, maximum weight independent set (MWIS), and we want polynomial time algorithms, then \emph{allowing approximations becomes necessary},\footnote{In fact, for any constant $\varepsilon > 0$ approximating maximum independent set to within a factor of $n^{1-\varepsilon}$ for $n$-node graphs remains NP-hard~\cite{zuckerman07}.
This is one of the reasons why in this paper we focus on planar graphs.} as otherwise we cannot even solve the problem for $k=1$. 

Thus, we consider the setting where we have a quality function $ \sigma:\cF\to \RR_{\geq0}$ assigning a value to each feasible solution, and a niceness factor $c\in (0,1)$.
For maximization problems, we say that a solution is \defi{$c$-optimal} if its objective is at least $c\cdot \max_{S \in \cF} \sigma(S)$. Similarly, for minimization problems, we say that a solution is $c$-optimal if its objective is at most $({1}/{c}) \cdot \min_{S \in \cF} \sigma(S)$.

Now, we give the formal definition for \defi{diverse and nice optimization} as follows.
\begin{definition}[Diverse and Nice Optimization]\label{def:dno}
    The input is a four-tuple $(I,k,\sigma,c)$, where $I$ is a ground set with $n\coloneqq|I|$, $k\geq1$ is an integer, $\sigma:2^I\rightarrow\mathbb{R}_{\geq0}$ is a quality function, and $c\in(0,1)$ is a niceness factor. Let $ \cF_c =\{ S\subseteq I: \text{$S$ is $c$-optimal} \} $.
    The \emph{diverse and nice optimization} problem asks to find a collection $ \cS=\{S_1,\ldots,S_k\} \subseteq \cF_c $ of size $k$ (distinct whenever $|\cF_c|\geq k$, otherwise as a multiset) so as to maximize $\sum_{i<j}|S_i\Delta S_j|$.
    When neither the quality function nor the niceness factor $c$ is given, we refer to this problem as a \defi{diverse optimization problem}.
\end{definition}

\textbf{Remark.} There are two measures in the problem statement: the diversity measure defining the diversity of a collection of $k$ solutions, and a quality measure prescribing a required quality for every solution. Also note that, in order to exploit the quality-diversity tradeoff, $c$ is input by the user, unlike in approximation algorithms where one would like $c$ to be as close to $1$ as possible.

\vspace{2mm}\noindent\textbf{Motivation.} A natural motivation is to present the end user with a diverse set of solutions to choose from according to some, perhaps unknown to the algorithm designer, preference. For example, \cite{baste2019fpt} mentions the task of \emph{generating different floor plans for an architect}. With some simplifications, this involves diverse solutions to the \emph{rectangle packing knapsack problem}, where we want to pack diverse sets of high-value items in a large box. To this end, diversification of the simpler, 1D knapsack problem seems a natural first step. More reasons for diversification include  robustness, reliability, and security ~\cite{gao2022obtaining}, algorithmic fairness ~\cite{aumuller2020fair}, and portfolio optimization~\cite{drygala2025data}.

Another exciting motivation for diversification is \emph{energy efficiency}: objects such as a dominating set in unit-disk graphs are very useful in monitoring a sensor network. Heuristics for computing a domatic partition (partitioning the vertices of the graph into dominating sets) were presented in~\cite{islam2009maximizing}. The idea is to alternate different dominating sets on a sleep-wake cycle, thereby minimizing the load. This motivates the problem of generating diverse dominating sets in unit disk graphs.\footnote{One may also want to \emph{route} on the set of active sensors, motivating the problem of diverse \emph{connected} dominating sets (CDS). Algorithms to find one CDS were developed in the influential work of ~\cite{demaine2008bidimensionality}.} While we cannot handle arbitrary Unit-Disk Graphs(UDGs), we show in \Cref{sec: unit-disk} an application of our technique to the special case when the UDG is of points in convex position, a setting considered recently in~\cite{tkachenko2024dominating}. However, we observe that a) historically, \emph{planar graphs} have been a natural class to study before unit-disk graphs, and b) the MWIS and minimum weight vertex cover (MWVC) problems in planar graphs have received considerable attention too. Hence, in this paper we ask for diverse collections of \emph{maximum weight independent sets and minimum weight vertex covers in planar graphs}. 

Despite considerable research, there are \emph{no polynomial time algorithms}, even with approximation guarantees, known for obtaining diverse solutions to \emph{any NP-hard optimization problem}. All existing results either give polynomial (in $n$ and $k$) time approximations for problems in P, or FPT algorithms (that are exponential in $n$ the worst case) for NP-hard problems.

A natural class of NP-hard problems arise from packing and covering. In this paper, we give the first polynomial time approximation algorithms, in some cases even PTASs, for several problems such as \emph{knapsack, MWISs in planar graphs and in special unit disk graphs, rectangle packing (also called 2d knapsack), and enclosing polygons}. In the process of developing these approximation algorithms, we develop a general framework.
For ease of exposition, we exhibit our framework and its application to \emph{two problems from the list of those we can attack: diverse knapsack and MWISs in planar graphs}.

\begin{restatable}[Diverse Knapsack]{definition}{ResDiverseKnapsackDefinition}\label{def: diverse-knapsack}
    Let $I=[n]$ be a set of $n$ items, where each $ i \in I$ has weight $w_i>0$ and profit $u_i>0$, let $k\geq2$ be an integer, let $W>0$ be the knapsack capacity, let $c\in(0,1)$ be a niceness factor, and let $\cF^W_c$ be the collection of feasible and $c$-optimal solutions, i.e., $\cF^W_c = \{ S\subseteq I: \text{$w(S)\leq W$ and $S$ is $c$-optimal}\}$.
    The \emph{diverse knapsack} problem asks to find a collection $\cS=\{S_1,\ldots,S_k\} \subseteq \cF_c^W $ of size $k$ (distinct whenever $|\cF^W_c|\geq k$, otherwise as a multiset) so as to maximize $\sum_{i<j}|S_i\Delta S_j|$.
\end{restatable}

\begin{definition}[DMWIS-PG and DMWVC-PG]\label{def: dmwis-pg}
    Let $G=(V,E)$ be a vertex-weighted planar graph, let $k\geq2$ be an integer, and let $ \cF_c = \{
    S\subseteq V: \text{$S$ is independent and $c$-optimal}\} $. The \emph{Diverse $c$-Maximum Weight Independent Sets} (DMWIS-PG) problem asks to find a collection $ \cS = \{S_1,\ldots,S_k\} \subseteq \cF_c$ of size $k$ (distinct whenever $|\cF_c|\geq k$, otherwise as a multiset) so as to maximize $ \sum_{i<j}|S_i\Delta S_j| $. The \emph{Diverse $c$-Minimum Weight Vertex Covers} (DMWVC-PG) is defined analogously.
\end{definition}

\noindent\textbf{Related Work.} For knapsack, the concept of diversity has been explored \emph{within a solution} in~\cite{galvez2022approximation}, where items have colors, and one wants to pack solutions satisfying certain diversity constraints on the color distribution of the items in the solution. However, we have not seen any work on obtaining diverse collections of knapsack, nor for the geometric knapsack versions.

Finding diverse MWISs and MWVCs has, on the other hand, received considerable attention with several FPT results. This line of work focuses on $c=1$, i.e., algorithms that return \emph{optimal} solutions and maximize the diversity \emph{exactly}. While it is clear that such algorithms cannot be polynomial in $n$ and $k$ for $\mathsf{NP}$-complete problems like MWISs in (planar) graphs~\cite{garey1979computers}, even problems such as finding a diverse pair of maximum matchings is hard~\cite{suomela}. Thus the work in this area focuses on fixed-parameter-tractable algorithms that avoid an exponential dependence on the input size $n$; see, e.g.,~\cite{baste2019fpt,baste2022diversity,eiben2024determinantal,funayama2024parameterized,fomin2020diverse,fomin2023diverse,kumabe2024max,shida2024finding,misra2024parameterized,austrin2024geometry}.

The works on the \textsc{Diverse Vertex Cover} problem are the most relevant to us. The algorithm in~\cite{baste2019fpt} runs in time $ 2^{k\psi}n^{O(1)} $  where $\psi$ denotes the \emph{size of each solution} (e.g., the size of a minimum vertex cover or a maximum independent set), and the algorithm in~\cite{baste2022diversity} runs in time $ 2^{\omega k} \psi^{O(k)}n^{O(1)} $, where $\omega$ represents the \emph{treewidth} of the input graph.

While the above $ 2^{k\psi}n^{O(1)} $ or $ 2^{\omega k} \psi^{O(k)}n^{O(1)}$ result is impressive and important in the FPT context, in our setting there is a limitation: planar graphs can have large treewidth and large independent sets or vertex covers, i.e., $\psi$ or $\omega$ could be $n^{\Omega(1)}$. Even if $k=2$, this translates to a runtime of $2^{n^{\Omega(1)}}$, which could be prohibitive for many applications.\footnote{Recalling our motivating example of diverse dominating sets in unit-disk graphs, \cite{islam2009maximizing} shows that the size of a minimum dominating set in a sensor network deployed on a 600m X 600m square goes from 15 to 35 as the number of nodes increases from 100 to 1000 (Figure 5). Assuming a runtime of $2^{k \psi}$ where $\psi$ is the size of a dominating set, the computational task for generating $k=4$ solutions when each dominating set has a size of $15$ will take at least 5 years on a 5GHz computer.}

\section{Results}\label{sec:contribution}

In the next section, we present our framework. Before we state the results arising from this framework, we define the notions of approximation and resource augmentation.  
\begin{definition}
    We say that an algorithm is a \defi{$\beta$-approximation with $\alpha$-resource augmentation for $c$-optimal solutions} (abbreviated as \defi{$\beta$-\apx with $\alpha$-\ra}) for the diverse and nice optimization problem if for every integer $ k \geq 1 $ it computes $k$ many $(\alpha c)$-optimal solutions $S_1,\dots,S_k $ such that for any $c$-optimal solutions $S'_1,\dots,S'_k$, we have $ \textstyle\sum_{i<j} |S_i \Delta S_j| \geq \beta \sum_{i<j} |S'_i \Delta S'_j| $.
\end{definition}

We remark that whenever one of $\alpha$ and $\beta$ is equal to one, we omit the qualifier from the statement.

\begin{restatable}[Diverse Knapsack]{theorem}{ResDiverseKnapsackThm}\label{thm: diverse-knapsack}
    Let $\delta, \varepsilon, \gamma, c\in(0,1)$ be given.
    \textsc{Diverse Knapsack} admits the following algorithms.
    
    1) An $ O(n^5k^5/\delta)$-time $\tbetak$-\apx algorithm with $(1-\delta)$-\ra
    
    2) An $ \knapsackRun $-time $(1-\varepsilon)$-\apx algorithm with $(1-\delta)$-\ra whose output solutions each have weight at most $(1+\gamma)W$. 
    That is, the problem admits a PTAS when we allow a small capacity violation.
\end{restatable}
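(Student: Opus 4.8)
The first move is to make the diversity objective separable over items. For a collection $\cS=\{S_1,\dots,S_k\}$, write $x_i=|\{j:i\in S_j\}|$ for the number of solutions containing item $i$. A short double-counting argument gives
\[
\sum_{1\le j<\ell\le k}|S_j\Delta S_\ell|=\sum_{i\in I}x_i(k-x_i),
\]
since item $i$ is charged once for each pair $(S_j,S_\ell)$ that separates it. Each summand $x_i(k-x_i)$ is concave with maximum at $x_i=k/2$, and, crucially, the objective depends on $\cS$ only through the membership counts $(x_i)_{i\in I}$. I would record this as a lemma, since it is exactly what lets us optimize over counts instead of over the $k$ solutions individually, sidestepping the $\exp(k)$ cost of carrying all solutions in a state.

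The engine is to realize the $k$ solutions as an integral flow of value $k$ on a knapsack dynamic-programming DAG, built so that each routed unit is automatically feasible and $c$-optimal. I would take nodes $(i,\tilde w,\tilde p)$ recording the item index together with a rounded running weight and running profit, the two standard arcs per item (``take $i$''/``skip $i$''), a source into layer $0$, and as sinks exactly the end-nodes $(n,\tilde w,\tilde p)$ with $\tilde w\le W$ and $\tilde p\ge\tau$ for a threshold $\tau\approx(1-\delta)c\cdot\mathrm{OPT}$ (with $\mathrm{OPT}$ the optimal profit, itself approximated). A unit of flow source-to-sink is then one feasible, $(1-\delta)c$-optimal solution, and the flow on the ``take $i$'' arcs equals $x_i$. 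To encode the separable concave reward $\sum_i x_i(k-x_i)$ I would split each ``take $i$'' arc into $k$ unit-capacity copies with marginal rewards $k-1,k-3,\dots,1-k$ (the successive increments of $x_i(k-x_i)$); maximizing reward becomes a single min-cost flow of value $k$, whose constraint matrix is totally unimodular, so there is an integral optimum that decomposes into $k$ paths, i.e.\ $k$ solutions. The decreasing marginal rewards push each $x_i$ toward $k/2$, which simultaneously drives diversity and discourages duplicated paths; any residual duplicates (when $|\cF^W_c|\ge k$) I would remove by a local perturbation costing a negligible amount of diversity.

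The two items correspond to two discretizations of this DAG. For item (2) I would round weights and profits \emph{down} onto a grid of relative width $\Theta(\varepsilon)$: rounding profits down is what costs the $(1-\delta)$ resource augmentation and a $(1-\varepsilon)$ factor in attainable diversity against the true $c$-optimal optimum, while rounding weights down is precisely what forces the mild $(1+\gamma)W$ capacity violation, yielding the PTAS-with-violation and the running time $\knapsackRun$. For item (1), to avoid \emph{any} capacity violation I would instead round weights \emph{up} (so $\tilde w\le W$ forces true weight $\le W$) and keep only a coarse profit grid of size $O(n/\delta)$; the flow then runs on a DAG of polynomial size, giving the $O(n^5k^5/\delta)$ bound. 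Rounding up discards some near-boundary solutions, so the flow no longer matches the true optimum exactly, and I would instead certify its quality structurally.

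The factor $\tbetak$ for item (1) comes from a two-solution comparison. With $D_{\max}:=\max\{|A\Delta B|:A,B\in\cF^W_c\}$, every collection satisfies $\sum_{j<\ell}|S_j\Delta S_\ell|\le\binom{k}{2}D_{\max}$, whereas taking $\lceil k/2\rceil$ copies of a maximizing $A$ and $\lfloor k/2\rfloor$ copies of $B$ already attains $\lfloor k/2\rfloor\lceil k/2\rceil D_{\max}$, within a factor $\tfrac{k}{2(k-1)}\ge\tfrac12$ of the upper bound. Since this two-solution count profile is a feasible integral flow, the flow's optimum is at least as large, which delivers the claimed guarantee (the $(1-\delta)$ rounding is absorbed because enlarging niceness only enlarges $D_{\max}$). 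The step I expect to be the main obstacle is the coupling that the whole plan is organized around: feasibility $w\le W$ and niceness $u\ge c\cdot\mathrm{OPT}$ are \emph{per-solution} constraints, while the diversity objective couples the solutions together. Pushing the per-solution constraints into the sink set and the coupled objective into separable convex arc costs is what reconciles them; the delicate part is choosing the rounding on both the weight and profit axes so that feasibility is preserved on one side while near-optimal diversity survives on the other, all without an $\exp(k)$ blow-up and while meeting the distinctness requirement.
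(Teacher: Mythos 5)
Your reduction to min-cost flow breaks at its central step: encoding the reward $\sum_i x_i(k-x_i)$ as separable arc costs. In the knapsack DP DAG, ``take item $i$'' is not one arc but a whole family of arcs, one for every state $(\tilde w,\tilde p)$ at layer $i-1$ (the arc from $(i-1,\tilde w,\tilde p)$ to $(i,\tilde w+w_i,\tilde p+u_i)$). The count $x_i$ is the \emph{sum} of the flow over this family, and a concave reward on an aggregate of flows across arcs with different endpoints cannot be written as a sum of per-arc costs: your unit-capacity splitting with marginal rewards $k-1,k-3,\dots,1-k$ is valid only when all units taking item $i$ traverse literally the same arc. The natural repair---funneling every ``take $i$'' transition through a single hub so that $x_i$ becomes the flow on one arc---destroys exactly the bookkeeping the DAG exists for: flow conservation at the hub lets a unit enter from state $(\tilde w,\tilde p)$ and leave toward $(\tilde w'+w_i,\tilde p'+u_i)$ with $(\tilde w',\tilde p')\neq(\tilde w,\tilde p)$, so units of flow no longer correspond to feasible, $c$-optimal solutions. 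Without this step, the total-unimodularity and path-decomposition conclusions have nothing to run on.

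Even granting the flow model, three further problems remain. First, your certificate for item (1) proves the wrong constant: comparing the upper bound $\binom{k}{2}D_{\max}$ with the two-cluster lower bound $\lfloor k/2\rfloor\lceil k/2\rceil D_{\max}$ gives a factor $\lfloor k/2\rfloor\lceil k/2\rceil/\binom{k}{2}\approx 1/2$, whereas the theorem claims $\tbetak=1-\tfrac{2}{k+1}$, which tends to $1$; for every $k\ge 5$ your bound is strictly weaker than the claim. Second, that certificate needs the diameter pair $A,B\in\cF^W_c$ to survive your rounding, but rounding weights \emph{up} (your item (1) scheme) shrinks the comparison class $\cF^W_c$ rather than enlarging it, and your ``niceness only enlarges $D_{\max}$'' remark addresses profits, not weights. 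Third, the problem definition demands $k$ \emph{distinct} solutions whenever $|\cF^W_c|\ge k$; a flow decomposition can repeat paths (indeed the count-profile optimum may only be attained by multisets), and ``local perturbation at negligible cost'' is precisely the nontrivial step, not an argument. The paper avoids all of this by a different route: for item (1) it never discretizes weights at all---weight is the \emph{value} of a DP indexed by (items, profit requirement, rarity score), fed into a $k$-best enumeration and the local-search swap framework, which is where the factor $\tbetak$ actually comes from; item (2) adds a separate exact multi-solution DP for $k\le 2/\varepsilon$ and a profit/weight scaling lemma, the weight scaling being the sole source of the $(1+\gamma)W$ violation.
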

Due to space constraints, proof of \Cref{thm: diverse-knapsack} is deferred to \Cref{sec: diverse-knapsack-proof}.

\begin{restatable}{theorem}{RestBiApxISPG}{{\normalfont[DMWIS-PG and DMWVC-PG]}}
\label{thm:bi-apx-pg}
    For DMWIS-PG problem, there exists a $2^{O(k\delta^{-1}\varepsilon^{-2})}n^{O(\varepsilon^{-1})}$-time $(1-\varepsilon)$-\apx algorithm with $(1-\delta)$-\ra
    When $k=O(\log n)$, this is a PTAS. 
    The same statement holds for the DMWVC-PG problem.
\end{restatable}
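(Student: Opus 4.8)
The plan rests on the observation that the diversity objective \emph{decomposes over vertices}: writing $a_v=|\{i: v\in S_i\}|$ for the number of solutions that select $v$, one has $\sum_{i<j}|S_i\Delta S_j|=\sum_{v\in V}a_v(k-a_v)$, since a vertex contributes to the pair $(i,j)$ exactly when precisely one of $S_i,S_j$ contains it. Thus each vertex's contribution to the total diversity depends only on \emph{how many} of the $k$ solutions use it, and this is what makes the objective compatible both with a layer-deletion argument and with a bag-by-bag dynamic program.

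First I would reduce the planar instance to bounded treewidth by Baker's shifting technique. Fix $h=\Theta(1/\varepsilon)$, run a BFS from an arbitrary root, and for each residue $r\in\{0,\dots,h-1\}$ delete the BFS layers $L_r$ with index $\equiv r\pmod h$; the resulting graph $G_r$ is a disjoint union of $h$-outerplanar pieces and hence has treewidth $O(1/\varepsilon)$. I would try all $h$ shifts and keep the best outcome. To justify that some shift is (nearly) lossless, let $S_1^{*},\dots,S_k^{*}$ be an optimal collection of $c$-optimal solutions with diversity $D^{*}$. Since the layer classes partition $V$, averaging the identity above over $r$ gives $\sum_r \sum_{v\in L_r} a_v^{*}(k-a_v^{*})=D^{*}$, so the best shift deletes vertices carrying at most $D^{*}/h\le \varepsilon D^{*}$ of the diversity; likewise $\sum_r w(S_i^{*}\cap L_r)=w(S_i^{*})$ for each $i$, so on average a shift removes only a $1/h$ fraction of each solution's weight.

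With the treewidth bounded, the core is a dynamic program over a width-$O(1/\varepsilon)$ tree decomposition that builds all $k$ solutions simultaneously. A state at a bag $B$ records, for each vertex of $B$, the subset of the $k$ solutions containing it — enough to enforce independence of every $S_i$ locally — together with a \emph{bucketed} accumulated weight for each solution, used at the root to reject any collection in which some $S_i$ misses the niceness threshold $c\cdot\mathrm{OPT}$. The stored value is the best partial diversity, updated by adding $a_v(k-a_v)$ as each vertex is introduced. The membership part of a state has size $2^{O(k/\varepsilon)}$; the crucial point is keeping the weight part $n$-free, so I would round vertex weights to $\mathrm{poly}(1/\delta,1/\varepsilon)$ profit classes relative to a guessed optimum, so that each solution's rounded accumulated weight ranges over a bounded set and the full state count is $2^{O(k\,\delta^{-1}\varepsilon^{-2})}$ rather than $\mathrm{poly}(n)^{k}$. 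The $(1-\delta)$ resource augmentation is exactly what pays for this rounding, and the $(1-\varepsilon)$ factor absorbs the shifting loss; multiplying the $2^{O(k\,\delta^{-1}\varepsilon^{-2})}$ states by an $n^{O(1/\varepsilon)}$ factor for the decomposition, the weight-threshold bookkeeping, and the optimum guess yields the stated running time. For DMWVC-PG the same machinery applies: complementation preserves symmetric differences, $|C_i\Delta C_j|=|(V\setminus C_i)\Delta(V\setminus C_j)|$, and planar MWVC admits the same Baker reduction, so only the niceness test in the DP changes to a weight-at-most threshold. Finally, for $k=O(\log n)$ the factor $2^{O(k\,\delta^{-1}\varepsilon^{-2})}$ becomes $n^{O(\delta^{-1}\varepsilon^{-2})}$, so the algorithm runs in polynomial time and is a PTAS.

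The step I expect to be the main obstacle is coordinating niceness across all $k$ solutions \emph{without letting $k$ enter the treewidth}, i.e.\ keeping the exponent of $n$ at $O(1/\varepsilon)$. The diversity-averaging bound for a single shift is clean, but the same averaging only controls the \emph{average} per-solution weight loss; forcing each of the $k$ solutions to remain $(1-\delta)c$-optimal by a union bound would push $h$, and hence the treewidth and the exponent of $n$, to grow with $k$. Circumventing this — by enforcing the niceness thresholds inside the DP and exhibiting a witness collection of $k$ nice, highly diverse solutions in the reduced graph whose existence does not rely on the \emph{original} solutions surviving layer deletion — together with designing the profit rounding so that the $k$ simultaneous weight constraints are tracked within $2^{O(k\,\delta^{-1}\varepsilon^{-2})}$ states, is the technically delicate part of the argument.
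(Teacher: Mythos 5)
There is a genuine gap, and it is exactly the one you flag in your closing paragraph without closing: the existence of a witness collection of $k$ nice, highly diverse solutions in the layer-deleted graph when the number of shifts $h$ is kept independent of $k$. Your averaging argument gives a shift that loses at most an $\varepsilon$-fraction of diversity, and it controls only the \emph{average} per-solution weight loss; but each individual solution can be weight-bad (lose more than a $\delta$-fraction) for up to $1/\delta$ of the shifts, so to find a single shift that is simultaneously diversity-good and weight-good for \emph{all} $k$ solutions one needs $h \gtrsim k\delta^{-1}+\varepsilon^{-1}$. With $h=\Theta(1/\varepsilon)$ and $k\gtrsim \delta/\varepsilon$, every diversity-good shift can be weight-catastrophic for some solution (e.g., each solution's weight concentrated on a heavy vertex sitting in a distinct layer class), so the natural witness $\{S_i^{*}\setminus L_r\}$ fails, and your proposed fix --- ``a witness collection whose existence does not rely on the original solutions surviving layer deletion'' --- is asserted, not constructed. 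Repairing the destroyed solutions by substituting other nice solutions of $G_r$ can, in the worst case, collapse the pairs of the collection that carried most of the diversity, and no argument is given that planarity prevents this. Since the entire correctness of your DP rests on comparing its output against such a witness, this is the crux of the proof, not a technicality.

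The paper resolves this tension in the opposite way: it \emph{lets the strata parameter grow with $k$}. The Marginal Strata Lemma takes $\ell \geq 2k\delta^{-1}+2\varepsilon^{-1}-1$, so a union bound yields one stratum that is simultaneously $(\delta/2)$-marginal for every solution and $(\varepsilon/2)$-marginal for the diversity; the resulting treewidth is $O(k\delta^{-1}+\varepsilon^{-1})$, i.e., \emph{not} $k$-free. This does not blow up the exponent of $n$ because the paper never runs a DP that builds all $k$ solutions simultaneously over the decomposition (which is your plan, and which with a $k$-dependent width would cost $2^{O(k\cdot\omega)}=2^{O(k^{2}/\delta + k/\varepsilon)}$ membership states, exceeding the claimed bound for large $k$). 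Instead it plugs into the local-search framework of \Cref{thm:framework}: for $k\geq 4/\varepsilon$ it repeatedly solves a budget-constrained $k$-best enumeration problem for a \emph{single} solution on the tree decomposition (\Cref{thm: top-k-is-vc}, cost $2^{O(\omega)}\cdot\mathrm{poly}(n,k)$), getting a $(1-2/(k+1))\geq(1-\varepsilon/2)$ diversity guarantee from the swap-based local search; only for $k<4/\varepsilon$ does it invoke the exact diverse DP of Baste et al.\ ($2^{O(\ell k)}n^{k}$), which is the sole source of the $n^{O(1/\varepsilon)}$ factor. So your vertex-wise decomposition $\sum_{i<j}|S_i\Delta S_j|=\sum_v a_v(k-a_v)$ and the weight-rounding idea are sound (the latter mirrors \Cref{lem: knapsack-scaling}), but the architecture you chose forces the unproven witness lemma, whereas the paper's architecture makes a $k$-dependent treewidth affordable and thereby avoids needing it.
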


\noindent\textbf{Remark.} The above result is the first example of an approximation algorithm for the diverse solutions version of \emph{any strongly $\mathsf{NP}$-complete problem} that is \emph{fixed parameter tractable using only $k$ as a parameter}. As mentioned, the dependence on $k$ allows us to obtain a PTAS up to $k=O(\log n)$. This was not possible with existing work even for $k=2$ due to the exponential dependence on other parameters such as the treewidth or the size of an MWIS, as the focus was on exact algorithms (for both diversity and quality). The tradeoff is that we lose the small factors of $ \varepsilon $ in diversity and $ \delta $ in the quality.

\noindent\textbf{Other Applications.} 
Our framework extends to other problems, including, \textsc{Diverse Rectangle Packings}, \textsc{Diverse Enclosing-Polygons}, \textsc{DIS-UDGc} (Diverse Independent Sets in Unit Disk Graphs with points in convex position), and \textsc{Diverse TSP} problem.
See \Cref{tab:applications} for the running times, diversity approximation, and resource augmentation factors. 

\begin{table}[h]
\footnotesize	
    \centering
    \renewcommand{\arraystretch}{1.3}
    \begin{tabular}{@{} l @{\qquad} l @{\qquad} l @{\qquad} l @{\qquad} l @{\qquad} l@{}}
        \toprule
        Problem & APX. & RA. & Running time & Ref. \\
        \midrule

        \textsc{Div. Rect. Packing}\footnotemark{} & $1-\varepsilon$ & $1+\varepsilon$ & $\text{poly}(n,k)$ & \Cref{sec: rectangle-packing} \\[2 mm]

        \textsc{Div. Enc.-Polygons} & $\tbetak$ & 1 & $ \enclPolyRun $ & \Cref{sec: diverse-enclosing-polygons} \\[2 mm]

        \textsc{DMIS-UDGc} & $\tbetak$ & 1 & $ O(n^7k^4\log k) $ & \Cref{sec: unit-disk} \\[2 mm]

        \textsc{Diverse TSP} & $\tbetak$ & 1 & $ \tspRuntime $ & \Cref{sec:diverse-tsp} \\[2 mm]
        \bottomrule

    \end{tabular}
    \vspace{1mm}
    \caption{\small{Summary of various applications of our techniques. See the respective sections for more details.
    }}
    \label{tab:applications}
\end{table}

\footnotetext{The \apx\ and \ra\ factors for this problem apply only to the
    version restricted to \emph{well-structured} solutions.
    }
\noindent\textbf{Results on Minimum Distance.} The minimum pairwise distance has been a well-studied alternative diversity measure~\cite{eiben2024determinantal,baste2019fpt}. This measure is generally considered more challenging than the sum diversity measure, evidenced by the fact that all known results are of the FPT type, and no poly time approximation algorithms are known for \emph{any} problem. 

Let $A_2(n, d)$ denote the maximum number of binary codewords of given length $n$ (i.e. elements of $\{0,1\}^n$) in which every two codewords have Hamming distance at least $d$.
There is no known efficient algorithm to compute $A_2(n, d)$ in general, and the exact values of only a limited number of instances are currently known. See, for example, \cite{Brouwer90,Brouwer23}. Note that $A_2(n, d)$ can be as large as $2^n$. To avoid basing the hardness on the output size, we focus ourselves on the computation of $A_2(n, d)$ when $d > n/2$. By the Plotkin bound~\cite{Plotkin60}, $A_2(n, d) = O(n)$ in such cases. The proof of the following theorem is in \Cref{sec:aqndhardness}.
\begin{restatable}{theorem}{cctrihardness}{}\label{thm:codes}
   Assume there is an algorithm that runs in time polynomial in $n$ and outputs $k = O(n)$ diverse solutions maximizing $\min_{\normalfont\text{SD}}$ for any of the following problems:
        i) given any knapsack problem with $2n$ items, output a set of $k$ diverse optimal packings; or
        ii) given any $(n+2)$-vertex directed graph with two distinguished vertices $s$ and $t$, output a set of $k$ diverse minimum $st$-cuts.
   Then, there is an algorithm for computing $A_2(n, d)$ in time polynomial in $n$.
\end{restatable}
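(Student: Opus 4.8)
The plan is to exploit the exact correspondence between the symmetric-difference diversity measure and Hamming distance: if we can engineer an instance of either problem whose set of optimal solutions is in a distance-preserving bijection with the Boolean cube $\{0,1\}^n$, then finding $k$ optimal solutions that maximize the minimum pairwise symmetric difference becomes exactly the problem of finding a length-$n$ binary code of size $k$ with the largest possible minimum distance, and a search over $k$ will recover $A_2(n,d)$.

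First I would build the gadget instances. For knapsack (part i), I would use $n$ pairs of items, the two items of pair $i$ both having weight and profit equal to $2^{i-1}$, with capacity $W = 2^n-1 = \sum_{i=1}^n 2^{i-1}$. Since profit equals weight, an optimal packing is one of maximum total weight not exceeding $W$, and I would argue via the uniqueness of the balanced representation of $2^n-1$ (no nontrivial $\{-1,0,1\}$-combination of distinct powers of two vanishes) that the packings of value exactly $W$ are precisely those choosing exactly one item from each pair. These number $2^n$ and are indexed by $x\in\{0,1\}^n$. For the minimum $st$-cut instance (part ii), I would take the $(n+2)$-vertex directed graph on vertices $s,t,v_1,\dots,v_n$ with unit-capacity arcs $s\to v_i$ and $v_i\to t$. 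The only directed $s$–$t$ paths are $s\to v_i\to t$, so every cut must delete at least one arc of each path and the minimum cuts delete exactly one, again yielding $2^n$ minimum cuts indexed by $x\in\{0,1\}^n$. In both constructions, two solutions indexed by $x$ and $x'$ have symmetric difference exactly $2\,d_H(x,x')$, where $d_H$ is Hamming distance.

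Next I would set up the search. Because $d>n/2$, the Plotkin bound gives $A_2(n,d)=O(n)$, so it suffices to determine, for each candidate $k$ in the range $1\le k\le O(n)$, whether there exist $k$ codewords with pairwise Hamming distance at least $d$. Feeding the gadget to the hypothesized polynomial-time solver returns $k$ distinct optimal solutions maximizing the minimum pairwise symmetric difference (distinct since there are $2^n\gg k$ of them), and computing that minimum and dividing by $2$ yields exactly $\max\min_{i<j} d_H(x_i,x_j)$ over all $k$-subsets of $\{0,1\}^n$. Thus the returned value is at least $2d$ if and only if $A_2(n,d)\ge k$. Since this quantity is monotone non-increasing in $k$, I would binary search (or scan) over $k$ and output the largest $k$ for which the solver's value is at least $2d$; this equals $A_2(n,d)$. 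Each gadget has size polynomial in $n$ and only $O(\log n)$ (or $O(n)$) solver calls are made, so the whole procedure runs in time polynomial in $n$.

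The main obstacle is the first step: certifying that the optimal solutions are \emph{exactly} the $2^n$ ``one choice per coordinate'' objects and nothing else, since any spurious optimal solution would corrupt the distance-preserving bijection and its factor-of-two scaling. For knapsack this is the subset-sum uniqueness argument forcing one item per pair, which is what necessitates the superincreasing (hence exponentially large but polynomially encodable) weights; for the cut instance it is the verification that no cheaper or differently structured minimum cut exists. The remaining steps — translating the symmetric-difference objective into minimum Hamming distance and the monotone threshold search over $k$ — are routine once the bijection is in place.
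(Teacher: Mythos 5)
Your proposal is correct and follows essentially the same route as the paper: pair-of-identical-items gadgets with superincreasing weights (respectively, $n$ parallel two-arc $s$--$t$ paths) giving a bijection between optimal solutions and $\{0,1\}^n$, the Plotkin bound to cap $k=O(n)$, and a monotone binary search over $k$ to recover $A_2(n,d)$. The only differences are cosmetic (you use weight $=$ profit $=2^{i-1}$ with capacity $2^n-1$ instead of the paper's weights $2^i$ and values $4^i$), and you are in fact more explicit than the paper about the factor-of-two relation between symmetric difference and Hamming distance of the codewords.
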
 

\section{The Algorithmic Design Framework}

Two related frameworks for diverse solutions to problems in P were presented in \cite{hanaka2023framework,gao2022obtaining}. 
The first framework \cite{hanaka2023framework} only allows to compute diverse solutions in the space of \emph{optimal} solutions (i.e., $c$ cannot be input by the user) but guarantees distinct solutions. The second work \cite{gao2022obtaining} allows the user to specify a $c$, but may return a multiset of $k$ solutions instead of a set, i.e., may repeat solutions.\footnote{A recent work \cite{de2023finding} on finding diverse minimum s-t cuts also guarantees a multiset of $k$ diverse cuts.} Note that not only do we want the best of both worlds ($c$ as input, and a set of $k$ distinct solutions as output), but, more importantly, we also want our framework to apply to NP-hard problems.  

Recall \Cref{def:dno} of the diverse and nice optimization problem, that takes an input a four-tuple $(I,k,\sigma,c)$.
We first define the related problem of \defi{budget-constrained $k$-best enumeration}, which also takes an additional objective function $r$ as input.

\begin{definition}[Budget-Constrained $k$-Best Enumeration]\label{def: best-k-enumeration}
    Let $(I,k,\sigma,c)$ be an input to a diverse and nice optimization problem, and let $r:2^I\rightarrow\mathbb{R}_{\geq0}$ be an objective function.
    The \emph{Budget-Constrained $k$-Best Enumeration} problem (abbreviated \emph{$k$-BCBE}, or simply \emph{BCBE}) asks to find $k$ \emph{distinct} subsets, if they exist, $S_1,\ldots,S_k \subseteq I $ such that a) $S_i$ is $c$-optimal w.r.t. $\sigma$ for all $i\in[k]$ and b) $ r(S_1)\geq r(S_2)\geq \cdots \geq r(S_k) \geq r(S) $ for every $ S\subseteq I$ that is $c$-optimal. If such subsets do not exist, the answer should be {\bf no}.
\end{definition}

We are now in a position to formally state our framework.

\begin{restatable}[Framework]{theorem}{ResFramework}\label{thm:framework}
Let $(I,k,\sigma,c)$ be an input to the diverse and nice optimization problem, and let $\cF_c$ be the space of $c$-optimal solutions. 
Suppose that for some $\delta \geq 0$ there exists $ \cF'_c \subseteq 2^I$ such that the following holds:
\begin{enumerate}[leftmargin=*,label=\normalfont(\arabic*)]
    \item {\bf (Resource Augmentation)} $\cF'_c \subseteq \cF_{(1-\delta)c} $.\label{framework-1}
    \item {\bf (Diversity Preservation)} For some $\varepsilon>0$, there exist a collection $ S'_1,\ldots,S'_k \in \cF'_c$ such that
    $$\textstyle \sum_{ i < j } |S'_i \Delta S'_j| \geq (1-\varepsilon/2)\cdot \mathrm{OPT}_\mathrm{div}(\cF_c),$$ where $\mathrm{OPT}_\mathrm{div}(\cF_c)$ is the maximum of $ \sum_{ i < j } |S_{i}^{''} \Delta S_j^{''}|$ over all $S_1^{''},\ldots,S_k^{''} \in \cF_c$. \label{framework-2}
    \item Given $R \geq 1$, there exists an algorithm for the $k$-BCBE problem for $(I,k,\sigma,c)$ and any integer valued function $r$ such that $0 \leq r \leq R$, running in time $f(n,k,\sigma, c, R)$.\label{framework-kbest}
    \item There exists a $f'(n,k,\sigma,c)$-time algorithm for finding $k$ optimally diverse solutions.\label{framework-exact}
\end{enumerate}
Then, there exists an $O( \max\{f(n,k,\sigma,c,nk)k^2\log k, f'(n,4/\varepsilon,\sigma,(1-\delta)c)\})$-time $(1-\varepsilon)$-\apx algorithm with $(1-\delta)c$-\ra
\end{restatable}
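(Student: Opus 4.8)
The plan is to exploit the element-wise decomposition of the diversity objective. For any collection $\mathcal{S} = \{S_1,\dots,S_k\}$, writing $n_e = |\{i : e \in S_i\}|$ for the number of chosen solutions containing element $e$, one has $\sum_{i<j}|S_i \Delta S_j| = \sum_{e \in I} n_e(k - n_e)$, since an element contributes to a pair exactly when it lies in one of the two sets but not the other. Thus the objective is a sum of the concave per-element functions $g(n_e) = n_e(k-n_e)$, and the whole problem reduces to choosing $k$ solutions from the appropriate quality-constrained family whose per-element count profile maximizes $\sum_e g(n_e)$. The two error budgets enter in different places: $\delta$ only through the quality relaxation of Condition~\ref{framework-1}, while $\varepsilon$ is split as $\varepsilon/2 + \varepsilon/2$ between the structural loss promised by Condition~\ref{framework-2} and the algorithmic loss incurred below.

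First I would dispatch the regime of small $k$. When $k \le 4/\varepsilon$, I simply call the exact diverse-optimization oracle of Condition~\ref{framework-exact} with niceness factor $(1-\delta)c$ and parameter $k$. Since $\mathcal{F}_c \subseteq \mathcal{F}_{(1-\delta)c}$, the returned collection is $(1-\delta)c$-optimal and has diversity at least $\mathrm{OPT}_{\mathrm{div}}(\mathcal{F}_{(1-\delta)c}) \ge \mathrm{OPT}_{\mathrm{div}}(\mathcal{F}_c)$; this is in fact a $1$-approximation, and by monotonicity in the parameter its running time is at most $f'(n, 4/\varepsilon, \sigma, (1-\delta)c)$, matching the second term of the claimed bound. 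Hence for the rest of the argument I may assume $k > 4/\varepsilon$.

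The core is the regime $k > 4/\varepsilon$, where I would build the $k$ solutions through the BCBE oracle of Condition~\ref{framework-kbest}. Since $g$ is concave with uniform per-level increments $g(\ell)-g(\ell-1) = k+1-2\ell$ (the marginal value of promoting an element from appearing in $\ell-1$ to $\ell$ of the chosen solutions is identical for every element), the guiding intuition is to repeatedly ask the oracle for solutions that pull each element's count toward the maximizer $k/2$. Concretely I would proceed in $O(k)$ rounds maintaining the current counts $m_e$; in each round I assign each element an integer weight $r_e$ measuring its current under-representation, a quantity in $\{0,1,\dots,k\}$ so that the linear objective $r(S) = \sum_{e \in S} r_e$ is nonnegative and bounded by $R = nk$ as BCBE requires, call BCBE to retrieve the top-$k$ quality-constrained solutions under $r$, and select from them a solution not yet used so as to keep the collection distinct. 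A binary search over the threshold separating the profitable levels $\ell \le k/2$ from the unprofitable ones accounts for the $\log k$ factor, and with $O(k)$ candidates extracted per round this gives the stated $O(k^2 \log k)$ oracle calls. Distinctness is guaranteed precisely because the oracle returns the top-$k$ distinct solutions whenever at least $k$ exist.

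The main obstacle is the approximation analysis in this second regime, and it has two intertwined parts. The first is that the true incremental-diversity weight of an element carries a negative penalty for over-represented elements and an $|S|$-dependent shift, so it is not directly a nonnegative bounded objective of the form BCBE accepts; I would resolve this via the level-threshold rewriting $\sum_e g(n_e) = \sum_{\ell=1}^{k} (k+1-2\ell)\, c_\ell$, where $c_\ell = |\{e : n_e \ge \ell\}|$, which isolates the sign of each level and lets the per-round objective be expressed with weights in $\{0,\dots,k\}$. The second, and genuinely delicate, part is to show that the value produced by these linear surrogate queries loses only a $(1-\varepsilon/2)$ multiplicative factor against the best achievable $\sum_e g(n_e)$: the gap between the concave objective and its piecewise-linear surrogate is an additive $O(nk)$ term, and the crux is to argue that for $k > 4/\varepsilon$ this slack is at most an $\varepsilon/2$ fraction of the optimum (equivalently, that each element's realized count lands in the near-flat top of the concave parabola, so the per-element loss is a $(1-O(1/k))$ factor). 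Combining this algorithmic $(1-\varepsilon/2)$ factor with the structural $(1-\varepsilon/2)$ factor from Condition~\ref{framework-2} yields diversity at least $(1-\varepsilon)\,\mathrm{OPT}_{\mathrm{div}}(\mathcal{F}_c)$, while Condition~\ref{framework-1} certifies that every output solution is $(1-\delta)c$-optimal, completing the bound.
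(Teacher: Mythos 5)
Your small-$k$ case is essentially sound, with one caveat of interpretation: you invoke the exact oracle of Condition~(4) over the whole space $\mathcal{F}_{(1-\delta)c}$, whereas the theorem intends (and the paper's applications require) that it be run over the structured subfamily $\mathcal{F}'_c$; exact diverse optimization over all $(1-\delta)c$-optimal solutions is typically as intractable as the original NP-hard problem, which is the very reason $\mathcal{F}'_c$ exists. Running the oracle over $\mathcal{F}'_c$ and invoking Condition~(2) gives diversity at least $(1-\varepsilon/2)\,\mathrm{OPT}_{\mathrm{div}}(\mathcal{F}_c)$, which suffices.

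The genuine gap is in the large-$k$ regime. The paper's argument is swap-based local search: start from an arbitrary collection of $k$ solutions in $\mathcal{F}'_c$, and in each iteration find the best exchange of one current solution for a new one, implementing the search for the incoming solution by $k$ calls to the $(k+1)$-BCBE oracle with the objective $r(e)=\sum_{j}\bigl(\mathbbm{1}(e\notin S_j)-\mathbbm{1}(e\in S_j)\bigr)$; by the result of Cevallos et al., $O(k\log k)$ such swaps provably reach diversity at least $\bigl(1-\tfrac{2}{k+1}\bigr)\mathrm{OPT}_{\mathrm{div}}(\mathcal{F}'_c)$, and $k\geq 4/\varepsilon$ converts this factor into $1-\varepsilon/2$. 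Your proposal replaces this with a one-pass greedy construction ($O(k)$ rounds, each adding one not-yet-used solution returned by a BCBE call on a surrogate ``under-representation'' objective), and then explicitly defers the key step: you state that ``the crux is to argue'' that the surrogate queries lose only an $\varepsilon/2$ fraction, but you never argue it. This is not a routine omission --- it is the entire content of the theorem's approximation guarantee, and the claim fails for greedy insertion in general: for max-sum dispersion, greedy insertion only guarantees a constant factor (roughly $1/2$), and the $1-O(1/k)$ factor is specifically a property of swap-based local search on metrics of negative type (which includes the Hamming/symmetric-difference metric). Likewise, your supporting intuition that each element's realized count lands near the flat top of the parabola at $k/2$ has no justification: the quality constraint can force every feasible collection, including the optimal one, to have counts far from $k/2$. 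Finally, your $\log k$ factor comes from an ad hoc binary search over level thresholds, whereas in the correct accounting it is the $O(k\log k)$ local-search iterations, each requiring $k$ BCBE calls, that yield the $O(k^2\log k)$ factor multiplying $f(n,k,\sigma,c,nk)$ in the running time.
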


\textbf{Remarks.} Conditions \ref{framework-kbest} and \ref{framework-exact} merge the approaches in \cite{hanaka2023framework} and \cite{gao2022obtaining}, whereas Conditions \ref{framework-1} and \ref{framework-2} enable the generalization to NP-hard problems. The framework in \cite{hanaka2023framework} is the special case of the above theorem when $c=1$, $\delta=0$, and therefore $\cF'_c$ equals $\cF_c$ and both equal the space of optimal solutions. Conditions \ref{framework-1} and \ref{framework-2} are trivially satisfied in this setting. It will turn out that for all the problems we explore in this paper, $f$ and $f'$ will not have any dependence on $c$, and will depend polynomially on $R$.

We first provide our algorithm for $(1-\varepsilon)$-\apx with $(1-\delta)c$-\ra for the diverse and nice optimization problem, and then provide proof of \Cref{thm:framework}

\subsection{Our Algorithm}\label{sec:algorithm}
In this algorithm, for a family of sets $ \mathcal{S} = \{S_1,\ldots, S_k\} $, $ \SumSD [\mathcal{S}] $ represents $ \sum_{i< j}|S_i \Delta S_j| $
and $ \cS - S_{\text{in}} + S_{\text{out}} = (\cS \setminus \{S_{\text{in}}\}) \cup \{S_{\text{out}}\} $.

\begin{algorithm}[h]
  \begin{spacing}{1.2}
    \caption{Diverse and Nice Optimization}\label{alg:diverse}
    \KwIn{$I,k,\sigma,c$}
    \If{$k < 4/\varepsilon$}{
      $\mathcal S\leftarrow$ Optimally diverse solutions in $\mathcal F'_c$ \tcp*{Condition \ref{framework-exact}} 
      \Return $\mathcal S $\;
    }
    \Else{
      $\mathcal S\leftarrow$ arbitrary $k$ solutions in $\mathcal F'_c$\;
      \For{$h\leftarrow 1$ \KwTo $\lceil3k\ln k\rceil$}{
        \For{ each $S_i\in \cS $}{
        $ r(e) \leftarrow \sum_{S_j\in \cS - S_i} \Big(\mathbbm{1}(e\not\in S_j) - \mathbbm{1}(e\in S_j)\Big) $ \quad $\forall e\in I$\;
        $ \cS'_{\text{best}} \leftarrow (k+1)\text{-BCBE}(n,k,\sigma,c, r) $\tcp*{Condition \ref{framework-kbest}}
        $S_i' \leftarrow \text{The best (w.r.t. $r$) solution in $ \cS'_{\text{best}} $ that is not in $\cS$ }$\;
        }
        $(S_{\mathrm{in}}^*,S_{\mathrm{out}}^*) 
           \leftarrow \argmax\limits_{(S_i,S_i')} \SumSD[\mathcal S - S_i + S_i']$\;
        \If{$\SumSD[\mathcal S - S_{\mathrm{in}}^* + S_{\mathrm{out}}^*] >
                \SumSD[\mathcal S]$}{
          $\mathcal S \leftarrow \mathcal S - S_{\mathrm{in}}^* + S_{\mathrm{out}}^*$\;
        }
        \Else{
          \Return $\mathcal S$\;
        }
      }
      \Return $\mathcal S$\;
    }
  \end{spacing}
\end{algorithm}

The main goal behind this algorithm is to efficiently implement the local search algorithm by Cevallos et al.~\cite{cevallos2019improved}.
This algorithm starts with an arbitrary set $Y$ of $k$ elements in the search space, and then finds a pair of elements $x,y$, with $x\notin Y$ and $y \in Y$ such that swapping these two maximizes the diversity in $Y$, i.e.,
\[ \sum_{y_i,y_j\in (Y\setminus\{y\})\cup\{x\}}|y_i\Delta y_j| = \argmax_{x' \not\in Y, y'\in Y}  \sum_{y_i,y_j\in (Y\setminus\{y'\})\cup\{x'\}}|y_i\Delta y_j|. \]
This algorithm is guaranteed to have diversity at least $ \tbetak $ of optimal with $O(k\log k)$ swaps.

Hanaka et al.~\cite{hanaka2023framework} showed that when the search space is given implicitly, one can find the best swapping pair by running a $k$-best enumeration algorithm $k$ times. Their strategy is the following. Given $ \cS = \{S_1,\ldots,S_k\}$, define an objective function $r(\cdot)$ as
\begin{equation}\label{eq: defi-r}
    r(e) = \sum_{j\in[i]}\Big(\mathbbm{1}(e\not\in S_j) - \mathbbm{1}(e\in S_j)\Big) \qquad \text{and} \qquad r(S) = \sum_{e\in S} r(e).
\end{equation}
Then, one of the $(k+1)$-best enumeration $S'_1,\ldots,S'_{k+1}$ w.r.t. $r(\cdot)$, say $S'_j$ and some $S_j \in \cS $ maximize the diversity of $ (\cS \setminus {S_j}) \cup {S'_j} $, if there is a best swapping pair. Therefore, if there is a $k$-best enumeration algorithm for any object function $r$ that runs in time $f(n,k)$ in the given search space, then one can find in time $O(f(n,k)k^2\log k)$ solutions with diversity at least $\tbetak$ of optimal. We use this fact in our proof.

\begin{proof}[Proof of \Cref{thm:framework}]
    Let $(I,k,\sigma,c)$ be an input to the diverse and nice optimization problem, assume that all four conditions are satisfied. Since $\cF'_c$ is a subset of $\cF_{(1-\delta)c}$, it suffices to show that one can find solutions $S_1,\ldots,S_k$ in $\cF'_c$ such that $ \sum_{i<j}|S_i\Delta S_j| \geq (1-\varepsilon)\mathrm{OPT}_{\mathrm{div}}(\cF_c) $ in the stated running time.
    
    Assume that $k \geq {4}/{\varepsilon}$.
    By Condition \ref{framework-kbest}, we may assume that there is a $k$-BCBE (budget-constrained $k$-best enumeration) algorithm that runs in time $f(n,k,\sigma,c,R)$ for any integer valued function $0\leq r\leq R$.
    We then run the local search algorithm we mentioned above according to Hanaka et al.'s approach by using the $k$-BCBE algorithm.
    To this end, we begin with any set of $k$ solutions in $\cF'_c$. Note that this initial collection of solutions can be found by running the $k$-BCBE algorithm once. 
    Define $ r(\cdot) $ as in \Cref{eq: defi-r}.
    Since $r(e)$ denotes the number solutions in which $e$ appears and there are at most $k$ such solutions, $r(S)$ for any $S\subseteq I$ is at most $nk$.
    Thus the running time of the $k$-BCBE algorithm is at most $f(n,k,\sigma,c,nk)$.
    Therefore, in time $O(f(n,k,\sigma,c,nk)k^2\log k)$ we can find solutions $S_1,\ldots,S_k$ in $\cF'_c$ such that $\sum_{i<j}|S_i \Delta S_j| \geq \tbetak \cdot \mathrm{OPT}_{\mathrm{div}}(\cF'_c)$. Since $ k \geq \frac{4}{\varepsilon} $ and $\cF'_c$ contains solutions $ S'_1,\ldots,S'_k $ such that $ \sum_{i<j}|S'_i\Delta S'_j| \geq (1-\varepsilon/2)\mathrm{OPT}_{\mathrm{div}}(\cF_c) $, we have
    \[ \sum_{i<j}|S_i \Delta S_j| \geq \left(1-\frac{2}{k+1}\right)\left(1-\frac{\varepsilon}{2}\right)\mathrm{OPT}_{\mathrm{div}}(\cF_c) \geq (1-\varepsilon)\mathrm{OPT}_{\mathrm{div}}(\cF_c).\]
    
    When $k<{4}/{\varepsilon}$, run the exact algorithm according to Condition \ref{framework-exact}.
    Then, in time at most $ f'(n,4/\varepsilon,\sigma,c) $, we can obtain optimally diverse solutions $S_1,\ldots,S_k$ in $\cF'_c$. Then, $ \sum_{i<j}|S_i\Delta S_j| = \mathrm{OPT}_{\mathrm{div}}(\cF'_c) \geq \sum_{i<j}|S'_i\Delta S'_j| \geq (1-\varepsilon/2)\cdot\mathrm{OPT}_{\mathrm{div}}(\cF_c) $. This completes the proof.
\end{proof}

If one cannot satisfy Condition \ref{framework-exact}, we still have the following weaker result.

\begin{corollary}\label{cor:framework}
    Assume the setting of \Cref{thm:framework}, and that all conditions except Condition \ref{framework-exact} can be satisfied. 
    Then there exists an $O(f(n,k,\sigma,c,nk)\cdot k^2\log k)$-time $\tbetak$-\apx algorithm with $(1-\delta)c$-\ra for the diverse and nice optimization problem.
\end{corollary}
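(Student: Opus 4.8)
The plan is to re-use the proof of \Cref{thm:framework} almost verbatim, but to delete the case distinction on $k$. Since Condition \ref{framework-exact} is no longer available, we can no longer fall back on an exact routine for small $k$, so instead we run the local-search branch of \Cref{alg:diverse} for \emph{every} value of $k$. Concretely, I would first obtain an arbitrary starting collection $\cS$ of $k$ solutions in $\cF'_c$ with a single call to the $k$-BCBE routine guaranteed by Condition \ref{framework-kbest}; by Condition \ref{framework-1} these lie in $\cF_{(1-\delta)c}$, so every iterate, and in particular the final output, is $(1-\delta)c$-optimal, which already secures the $(1-\delta)c$-\ra guarantee without any further work.

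Next I would run the local search of Cevallos et al.\ exactly as in Hanaka et al.'s implementation recalled above: at each of the $\lceil 3k\ln k\rceil=O(k\log k)$ rounds, and for each $S_i\in\cS$, I form the objective $r$ of \Cref{eq: defi-r} and call $(k+1)$-BCBE once to extract the best feasible swap-in candidate. Because $r(e)$ counts occurrences among at most $k$ solutions, we have $0\le r\le nk$, so each enumeration costs $f(n,k,\sigma,c,nk)$; the $k$ calls per round over $O(k\log k)$ rounds then yield the claimed $O(f(n,k,\sigma,c,nk)\,k^2\log k)$ running time. Crucially, no $f'$ term appears here precisely because Condition \ref{framework-exact} is never invoked. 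The correctness of the best-swap extraction (that some enumerated $S'_j$ together with some $S_j\in\cS$ realizes the optimal swap) is identical to the main proof and is independent of whether Condition \ref{framework-exact} holds, so it transfers unchanged.

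For the quality of the output, the local-search guarantee gives $\SumSD[\cS]\ge \tbetak\cdot\mathrm{OPT}_{\mathrm{div}}(\cF'_c)$ for the returned $S_1,\dots,S_k\in\cF'_c$, and Condition \ref{framework-2} certifies that $\cF'_c$ contains a collection witnessing $\mathrm{OPT}_{\mathrm{div}}(\cF'_c)\ge(1-\varepsilon/2)\,\mathrm{OPT}_{\mathrm{div}}(\cF_c)$; chaining these two bounds shows the output diversity is within the stated factor of the best achievable over $\cF_c$. The one place where the argument genuinely differs from \Cref{thm:framework} — and hence the main obstacle — is that we can no longer lean on the exact algorithm to boost the factor to $(1-\varepsilon)$ in the small-$k$ regime: since the local-search factor $\tbetak=1-2/(k+1)$ is bounded away from $1$ when $k$ is small, $\tbetak$ is the best approximation one can promise uniformly, which is exactly why the corollary weakens the guarantee from $(1-\varepsilon)$ to $\tbetak$. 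I would therefore be careful to track that the only two sources of loss are the multiplicative $\tbetak$ from the swap procedure and the $(1-\varepsilon/2)$ contributed by passing from $\cF_c$ to the structural family $\cF'_c$, and to confirm that these combine as claimed rather than compounding in some unexpected way.
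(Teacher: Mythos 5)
Your proposal is correct and is exactly the argument the paper has in mind: the paper omits the proof of \Cref{cor:framework} as an immediate consequence of \Cref{thm:framework}, namely, drop the small-$k$ exact branch and run only the local-search/$k$-BCBE branch of Algorithm~\ref{alg:diverse} for every $k$, with the resource-augmentation guarantee coming from Condition~\ref{framework-1} and the $O(f(n,k,\sigma,c,nk)\,k^2\log k)$ running time from the $O(k\log k)$ rounds of $k$ enumeration calls, just as you describe. Your closing caution is also well placed, though the slack it points to sits in the paper's statement rather than in your argument: under a general Condition~\ref{framework-2} the chained guarantee is $\tbetak(1-\varepsilon/2)\,\mathrm{OPT}_{\mathrm{div}}(\cF_c)$, so the two losses do compound, and the clean $\tbetak$ factor claimed in \Cref{cor:framework} is recovered exactly when $\cF_c\subseteq\cF'_c$ (so that $\mathrm{OPT}_{\mathrm{div}}(\cF'_c)\geq\mathrm{OPT}_{\mathrm{div}}(\cF_c)$ with no $\varepsilon$-loss), which is indeed the situation in every application of the corollary in the paper (knapsack, enclosing polygons, unit-disk graphs in convex position, and TSP).
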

The proof of \Cref{cor:framework} is an immediate consequence of \Cref{thm:framework}, and is therefore omitted.

\subsection{Significance of Conditions \ref{framework-1} and \ref{framework-2} in \Cref{thm:framework}}

Here we briefly explain the significance of the new conditions, Conditions \ref{framework-1} and \ref{framework-2}, that enable the generalization to NP-hard problems. Consider any NP-hard problem $\Pi$ for which some $c$-approximation algorithm, say $\mathcal{A}_{c}$ is known. Focus on the $k$-BCBE problem for now, and assume for simplicity that $k=1$. In other words, in this problem $(\Pi,r)$ we are given an extra function $r$ on the ground set (on top of the usual input for $\Pi$), and asked to output a $c$-optimal solution to $\Pi$ that is optimal w.r.t. $r$.

\begin{figure}[!h]
  \centering
  \begin{minipage}[t]{0.68\textwidth}
    \begin{tcolorbox}[
        colback=green!5,
        colframe=green!40!black,
        width=\textwidth,
        boxrule=0.8pt,
        arc=2mm
      ]
      \noindent\textsc{Framework for Poly‐time $(1-\varepsilon)$‐apx.\ for Diverse and Nice Optimization}

      \vspace{2mm}

      Given $(I,\cF_c,\sigma,k)$, and $\delta,\varepsilon > 0$,
      \begin{enumerate}[leftmargin=*]
        \item Identify $\cF'_c$ that satisfies Conditions \ref{framework-1} and \ref{framework-2}.
        \item Develop a $g(k)\,\mathrm{poly}(n)$ exact algorithm.
        \item Develop a $\mathrm{poly}(n,k)$-time $k$-BCBE algorithm.
        \item Apply \Cref{thm:framework}.
      \end{enumerate}
    \end{tcolorbox}
  \end{minipage}\hfill
  \begin{minipage}[t]{0.3\textwidth}
  \vspace{-4.5cm}
    \centering
    \includegraphics[width=\linewidth,height=4cm]{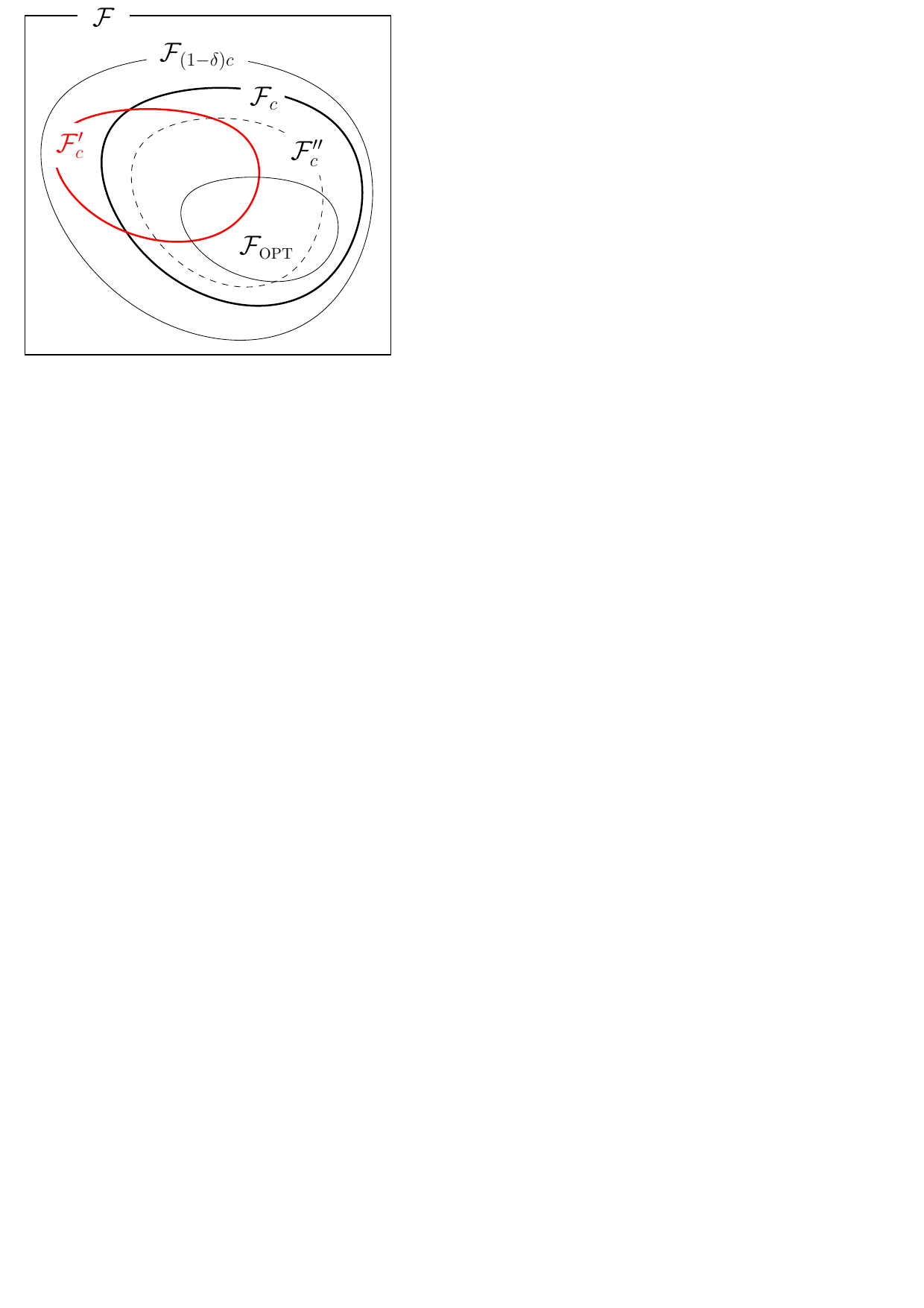}
    \captionsetup{
      width=\linewidth,
      justification=centering
    }
    \captionof{figure}{}
    \label{fig: framework-MIS}
  \end{minipage}
\end{figure}

For several example of problems $\Pi$ such that $\Pi \in P$, \cite{hanaka2023framework} and \cite{gao2022obtaining} show that the BCBE problem can be solved in polynomial time. If $\Pi$ is NP-hard, the obvious approach would be to generalize $\mathcal{A}_{c}$ to also be able to solve the BCBE version. Note that the BCBE problem is a generalization of finding a $c$-optimal solution to $\Pi$, because $r$ being a constant function is a special case.
However, for most of the packing and covering problems studied here, existing approximation algorithms $\mathcal{A}_c$ are unable to be generalized to solve the associated BCBE in the space of $c$-optimal solutions. This is because most of the approximation algorithms simplify the input or ignore some parts of it.\footnote{As examples, all current polynomial time approximation algorithms for knapsack, MIS in planar graphs, and rectangle knapsack have this property.}  Therefore, they produce solutions in some restricted subspace $\cF_{c}^{''} \subseteq \cF_c$ of the space of $c$ optimal solutions; see \Cref{fig: framework-MIS} for illustration. However, the BCBE problem asks for an optimal w.r.t. $r$ solution over \emph{all of} $\cF_c$. If the solution to the BCBE problem is in $\cF_c \setminus \cF_{c}^{''}$, there is no hope of being able to generalize $\mathcal{A}_{c}$. To make matters worse, even if the solution to the BCBE problem is in $\cF_{c}^{''}$, $\mathrm{OPT}_\mathrm{div}(\cF_{c}^{''})$ could be much smaller than $\mathrm{OPT}_\mathrm{div}(\cF_c)$, and even a generalization of $\mathcal{A}_{c}$ to solve the BCBE would not give any guarantee on the approximation factor for diversity.

Our proposal to circumvent the above challenge is the following. Suppose $\mathcal{A}_{c}$ can indeed be generalized to solve an associated BCBE not over $\cF_c$ but over some other $\cF_{c}^{'}$. If $ \cF'_c \subseteq \cF_{(1-\delta)c} $, i.e., Condition \ref{framework-1}, the returned solutions have controlled resource augmentation factor, and if Condition \ref{framework-2} holds then there is also a controlled loss of diversity in moving between $\cF_{c}^{'}$ and $\cF_c$.

In the remainder of the main body, we show an application to maximum weight independent sets in planar graphs (DMWIS-PG), and prove \Cref{thm:bi-apx-pg}. Applications to all other problems mentioned in \Cref{tab:applications} can be found in the Appendix.

\section{Algorithms for the DMWIS-PG and DMWVC-PG Problems}\label{sec:DIS-PG}

In this section, we prove \Cref{thm:bi-apx-pg}.
We begin by describing our approach in the unweighted setting for the DMWIS-PG problem, where the algorithm may return solutions that are not necessarily distinct.
At the end of this section, we explain how to extend this approach to (i) the weighted setting and (ii) the DMWVC-PG problem.
Finally, we show that, under the mild assumption that the planar graph contains $k$ distinct $c$-maximum-weight independent sets whose pairwise symmetric differences are all at least 2, the algorithm can be made to return $k$ pairwise distinct solutions.

Finding one maximum independent set (MIS) or one minimum vertex cover (MVC) in planar graphs is NP-complete~\cite{garey1979computers}, and an influential work by Baker~\cite{baker1994approximation} provided a PTAS for both problems. 

\noindent\textbf{Baker's Technique.} A planar graph $G = (V, E)$ can be embedded in the plane and the \defi{levels} of vertices can be computed in linear time~\cite{hopcroft1974efficient,lipton1979separator}.
A vertex is at level 1 if it is on the exterior face. In general, the vertices on the exterior face after all vertices at levels up to $i - 1$ have been removed are said to be at level $i$. We refer to the collection of all vertices at a certain level as a \defi{layer}. 

\begin{figure}[h]
    \centering
    \includegraphics[width=.75\textwidth]{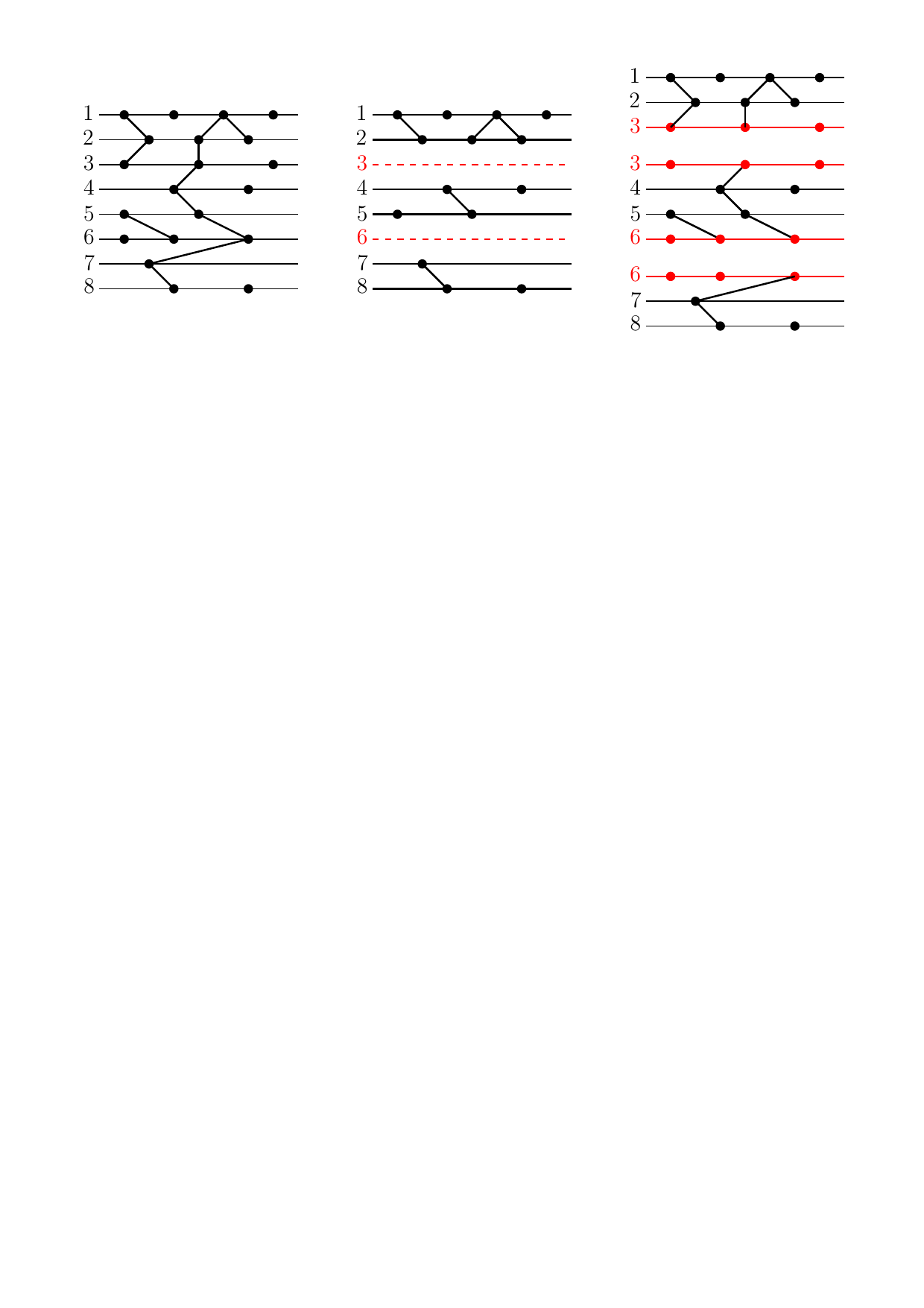}
    \caption{\small{Illustration of decomposition of $G$. Here, $G$ consists of 8 layers and $\ell=2$.
    The left one denotes a part of $G$ with the level indicating to the left of each layer. The middle one is a collection of $\ell$-outerplanar graphs constructed by removing $ L^0_{3}(G)$ from $G$. The right one is a collection of $(\ell+2)$-outerplanar graphs constructed by duplicating $ L^0_{3}(G)$.}}
    \label{fig: strata}
\end{figure}

Given an approximation factor $(1-\gamma)$, let $ \ell = (1/\gamma) -1$, and let $ p \in\{0, \ldots, \ell\}$. The \defi{$ p $-th strata of $G$}, denoted $L^{p}$, is the set of all vertices in $ G $ that are at levels congruent to $ p $ modulo $ \ell+1 $, i.e., the collection of every $ (\ell+1) $-st layer from the $ p $-th layer; see \Cref{fig: strata}. A planar graph is said to be $\ell$-outerplanar if it has at most $\ell$ layers. Baker's technique proceeds in two steps:

\begin{enumerate}[leftmargin=*]
    \item Baker shows that there exists a MIS $S'$ and a $p \in \{0,\ldots,\ell\}$ such that removing the $p$th strata from $S'$ results in an independent set $S$ such that $|S| \geq (1-\gamma)|S'|$. That is, ignoring the vertices in the strata does not decrease the size of an MIS by a factor more than $(1-\gamma)$. For MVCs, Baker shows that there exist layers such that duplicating them does not increase the size too much.
    \item Removal of such a $p$th strata now decomposes $G$ into a collection of several $\ell $-outerplanar graphs $G_1,\cdots,G_r$, for $\ell = (1/\gamma) - 1$. Baker then provides a dynamic programming based exact algorithm for independent sets, vertex covers and dominating sets in $\ell$-outerplanar graphs running in time $O(2^{3\ell}n)$. For the MIS problem, the final solution is the union of all the MIS for the outerplanar graphs $G_1,\cdots,G_r$, which by 1 above is at least $(1-\gamma)$ of optimal.
\end{enumerate}

Using the notation in Section 3.1 with $c=1-\gamma$, if we let $\mathcal{A}_{(1-\gamma)}$ denote Baker's algorithm above, then the solutions output by $\mathcal{A}_{(1-\gamma)}$ belong in a smaller class $\cF^{''}_{1-\gamma} \subset \cF_{(1-\gamma)}$ defined as: 
\[ \cF_{1-\gamma}^{''}:=\{S \subset V: \exists\; p \in \{0,\ldots,\ell\} \text{ s.t. } S \cap L^p=\emptyset, \text{ and }S \text{ is a MIS in } G[V-L^p]\}, \]
where $G[A]$ denotes the graph induced by $A$ for some subset $A\subseteq V$.

In other words, the space of solutions is restricted in that the independent sets returned by Baker's algorithms will be missing vertices from a certain strata. By ignoring a strata we lose some $c$-optimal solutions from the solution space. Thus any generalization of Baker's algorithm is likely to be insufficient for solving the Budget-Constrained $k$-Best Optimization Problem over $\cF_{c}$.

Note that one could consider the following obvious approach: Given $G_1,\ldots,G_r$ as in Baker's analysis, find (approximately) diverse maximum independent sets $\{S_{i}^{j}\}_{j=1}^{k}$ for every $1 \leq i \leq r$ and then combine the $k$ sets from every $G_i$ to obtain $k$ diverse independent sets in $G$. This is not what our algorithm does, for two reasons. First, it may that be in an optimally diverse collection of MIS, the missing strata in Baker's analysis contributes more than an $\varepsilon$ fraction to the total diversity. This would therefore be lost in the above algorithm. Second, it is not clear than an optimally diverse collection of $k$ MIS in $G$, when restricted to a particular $G_i$, give $k$ maximum independent sets in $G_i$; i.e., the distribution of the optimal collection may be very non-uniform across the $G_i$s. 

\subsection{Applying Theorem~\ref{thm:framework}: Identifying $\cF_{c}^{'}$ and Proving Conditions \ref{framework-1} and \ref{framework-2}}

First we boost Baker's analysis to show that there exists a strata, called \defi{marginal strata}, such that removing it does not decrease the size of \emph{any} of the maximally-diverse $c$-optimal solutions by too much, \emph{and} the removed vertices do not decrease the diversity of the $c$-optimal solutions by too much.

\vspace{1mm}\begin{restatable}{lemma}{ResMarginalStrata}{\normalfont{[Existence of Marginal Strata]}}
\label{thm: marginal-strata-dis}
    Given a planar graph $G=(V,E)$, $ c, \delta, \varepsilon \in (0,1) $ and an integer $ k \geq 1 $, let $ \ell \geq 2k\delta^{-1} + 2\varepsilon^{-1} -1 $.
    Then, for any $c$-maximum independent sets $ S_1, \ldots, S_k $ of $V$, there exists some $ p \in [0, \ell] $ such that the following conditions simultaneously hold:
    \begin{enumerate}[label=(\roman*)]
        \item $|S_h \cap L^p| \leq (\delta/2)|S_h| \quad \text{for all $h\in[k]$, and}$
        \item $\sum_{i\neq j} | (S_i \cap L^p) \Delta (S_j \cap L^p) | \leq (\varepsilon/2) \sum_{i\neq j}|S_i \Delta S_j|$.
    \end{enumerate}
\end{restatable}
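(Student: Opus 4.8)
The plan is to treat this as a pure counting (pigeonhole) statement: I note that neither independence nor $c$-optimality of the $S_h$ is actually used here, since the two conditions only concern how the sets distribute across the strata. The one structural fact I rely on is that the strata $L^0,L^1,\dots,L^\ell$ \emph{partition} $V$: every vertex lies at a unique level, and each level is congruent to exactly one residue in $\{0,\dots,\ell\}$ modulo $\ell+1$. Consequently, for every fixed set $S_h$ the strata partition $S_h$, and for every pair $(i,j)$ they partition $S_i \Delta S_j$. Both conditions then become ``Markov-type'' averaging statements over this single partition.

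For condition (i), I would fix $h$ and use $\sum_{p=0}^{\ell}|S_h\cap L^p| = |S_h|$. An averaging argument shows that $|S_h\cap L^p| > (\delta/2)|S_h|$ can hold for fewer than $2/\delta$ values of $p$: if it held for $m$ of them, summing over just those strata would give $m(\delta/2)|S_h| < |S_h|$, hence $m < 2/\delta$ (the case $S_h=\emptyset$ being trivial). Taking a union bound over the $k$ sets, strictly fewer than $2k/\delta$ choices of $p$ can violate (i).

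For condition (ii), the key observation is the set identity $(S_i\cap L^p)\Delta(S_j\cap L^p) = (S_i\Delta S_j)\cap L^p$, so that intersecting with a stratum commutes with the symmetric difference. Writing $D = \sum_{i\neq j}|S_i\Delta S_j|$ and summing the left-hand quantity over all $p$, the partition property collapses the double sum:
\[ \sum_{p=0}^{\ell}\sum_{i\neq j}\bigl|(S_i\Delta S_j)\cap L^p\bigr| \;=\; \sum_{i\neq j}|S_i\Delta S_j| \;=\; D. \]
The same averaging argument then shows $\sum_{i\neq j}\bigl|(S_i\cap L^p)\Delta(S_j\cap L^p)\bigr| > (\varepsilon/2)D$ for fewer than $2/\varepsilon$ values of $p$ (if $D=0$, condition (ii) holds for every $p$ and there is nothing to prove).

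Finally, I combine the two counts: the number of strata violating (i) or (ii) is strictly less than $2k/\delta + 2/\varepsilon$. Since this count is an integer and the hypothesis gives $\ell+1 \geq 2k\delta^{-1} + 2\varepsilon^{-1}$, the number of bad strata is at most $\ell$, leaving at least one index $p\in\{0,\dots,\ell\}$ simultaneously good for both conditions. I do not expect a genuine obstacle; the only point requiring care is the bookkeeping needed to turn the two \emph{strict} averaging bounds into the claimed \emph{non-strict} hypothesis on $\ell$ — the strictness is precisely what lets the boundary case $2k/\delta + 2/\varepsilon = \ell+1$ still leave a good stratum. The conceptual crux is just recognizing that both conditions reduce to averaging over the same partition, with the symmetric-difference identity making (ii) exactly as tractable as (i).
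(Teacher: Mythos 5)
Your proposal is correct and is essentially the paper's own proof: the paper uses the same averaging argument over the partition into strata (fewer than $2k\delta^{-1}$ strata can violate condition (i), fewer than $2\varepsilon^{-1}$ can violate condition (ii)) and concludes by pigeonhole over the $\ell+1$ residues, relying on the same identity that intersection with a stratum commutes with symmetric difference. The only cosmetic difference is that the paper packages the counting as a proof by contradiction via ``bad sets,'' whereas you argue directly.
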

\begin{proof}
    Given $c$-maximum independent subsets $S_1,\ldots,S_k$, let $ \ell \geq 2k\delta^{-1} + 2\varepsilon^{-1} -1 $.
    We say that $ A \subseteq \{0, \ldots, \ell\} $ is a \emph{bad} set if for every $p\in A$ at least one of the following conditions holds:
    \begin{enumerate}[label=(\roman*$'$)]
        \item $|S_h \cap L^p| > \frac{\delta}{2} |S_h| \text{ for some $h\in[k]$.}$ \label{item:neg-quality-marginal-strata}
        \item $ \sum_{i\neq j} | (S_i \cap L^p) \Delta (S_j \cap L^p) | > \frac{\varepsilon}{2} \sum_{i\neq j}|S_i \Delta S_j|$.\label{item:neg-diversity-marginal-strata}
    \end{enumerate}
    In other words, $A$ is a bad set if for every $p\in A$ the $p$-th strata contributes significantly to the size of one of the subsets, or to the diversity of the subsets.
    We say that $p$ is \emph{bad} if there is a bad set containing it.

    Assume for contradiction that every $ p $ is bad.
    Let $A_h$ be a bad set, every $p$ in which satisfies \ref{item:neg-quality-marginal-strata}. We claim that $ |A_h| < 2\delta^{-1} $, because, otherwise,
    \begin{equation*}
        |S_h|
            = \sum_{ p \in\{0, \ldots, \ell\}} |S_h \cap L^{p}|
            \geq \sum_{p \in A_h} |S_h \cap L^{p}|
            > \sum_{p \in A_h} \frac{\delta}{2}  |S_h|
            \geq {2\delta}^{-1} \left( \frac{\delta}{2} \cdot |S_h| \right)
            = |S_h|,
    \end{equation*}
    which is a contradiction. Similarly, we can also derive that $|A_{div}| < 2{\varepsilon}^{-1} $, where $ A_{div} $ is a bad set, every $p$ in which satisfies \ref{item:neg-diversity-marginal-strata}:
    
    \[
    \begin{split}
        \sum_{i\neq j} | S_i \Delta S_j |
            &= \sum_{p\in\{0, \ldots, \ell\}} \Big( \sum_{i\neq j} | (S_i \cap L^p) \Delta (S_j \cap L^p) | \Big)
            \geq \sum_{p\in A_{div}} \Big( \sum_{i\neq j} | (S_i \cap L^p) \Delta (S_j \cap L^p) | \Big) \\
            &> 2{\varepsilon}^{-1} \left(\frac{\varepsilon}{2} \sum_{i\neq j} | S_i \Delta S_j |\right) 
            = \sum_{i\neq j} | S_i \Delta S_j |.
    \end{split}
    \]
    Therefore, it follows that
        $|A_1|+\cdots+|A_k|+|A_{div}| < 2{k}{\delta}^{-1} + 2{\varepsilon}^{-1} \leq \ell+1,$
    which is contradictory to our assumption that every $p$ is bad, since every $p$ must belong to either $A_h$ for some $h$ or $A_{div}$, and $|\{0, \ldots, \ell\}| = \ell+1$.
    Hence, there exists $p \in \{0, \ldots, \ell\} $ satisfying Conditions \ref{framework-1} and \ref{framework-2} of \Cref{thm:framework} if $\ell \geq 2k\delta^{-1} + 2\varepsilon^{-1} - 1$.
\end{proof}
Let $L^p$ denote the vertices in the marginal strata guaranteed by the preceding lemma. Given $c,\delta,\varepsilon>0$, \defi{define} $\cF'_c$ as
\[
    \cF'_c = \{S \subseteq V: S \cap L^p= \emptyset \text{ and $S$ is a $(1-{\delta}/{2})c$-maximum independent set of $G[V-L^p]$}\}.
\]

\begin{proposition}
    $\cF'_c$ above satisfies Conditions \ref{framework-1} and \ref{framework-2} of \Cref{thm:framework}.
\end{proposition}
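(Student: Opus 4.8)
The plan is to verify the two framework conditions directly from the defining properties of the marginal strata $L^p$ guaranteed by \Cref{thm: marginal-strata-dis}. Recall that $\cF'_c$ consists of sets $S$ with $S \cap L^p = \emptyset$ that are $(1-\delta/2)c$-maximum independent sets of the reduced graph $G[V-L^p]$.

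\textbf{Condition \ref{framework-1} (Resource Augmentation).} First I would show $\cF'_c \subseteq \cF_{(1-\delta)c}$. Take any $S \in \cF'_c$. Since $S$ avoids $L^p$, it is an independent set of $G[V-L^p]$, hence also an independent set of the whole graph $G$. It remains to bound its quality from below. Let $\mathrm{OPT}$ denote the weight of a maximum independent set in $G$ and let $\mathrm{OPT}'$ denote the maximum weight of an independent set in $G[V-L^p]$. By definition of $\cF'_c$, we have $\sigma(S) \geq (1-\delta/2)c \cdot \mathrm{OPT}'$. The key step is to relate $\mathrm{OPT}'$ to $\mathrm{OPT}$: I would invoke property (i) of \Cref{thm: marginal-strata-dis} applied to an optimal independent set $S^\star$ of $G$, which gives $|S^\star \cap L^p| \leq (\delta/2)|S^\star|$ (in the weighted version, the analogous statement bounds the weight removed), so that deleting $L^p$ from $S^\star$ yields a feasible independent set in $G[V-L^p]$ of weight at least $(1-\delta/2)\mathrm{OPT}$. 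Therefore $\mathrm{OPT}' \geq (1-\delta/2)\mathrm{OPT}$, and combining the two inequalities gives
\[
\sigma(S) \geq (1-\delta/2)c \cdot \mathrm{OPT}' \geq (1-\delta/2)^2 c \cdot \mathrm{OPT} \geq (1-\delta)c \cdot \mathrm{OPT},
\]
where the last step uses $(1-\delta/2)^2 \geq 1-\delta$. This establishes $S \in \cF_{(1-\delta)c}$.

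\textbf{Condition \ref{framework-2} (Diversity Preservation).} Next I would exhibit a high-diversity collection inside $\cF'_c$ by projecting an optimally diverse collection. Let $S_1^{''},\ldots,S_k^{''} \in \cF_c$ achieve $\mathrm{OPT}_\mathrm{div}(\cF_c)$, and define $S'_h := S_h^{''} \setminus L^p$. Each $S'_h$ avoids $L^p$ and is independent in $G[V-L^p]$; property (i) ensures $\sigma(S'_h) \geq (1-\delta/2)\sigma(S_h^{''}) \geq (1-\delta/2)c\cdot\mathrm{OPT} \geq (1-\delta/2)c\cdot\mathrm{OPT}'$-type bounds place each $S'_h$ in $\cF'_c$ (one must check the $c$-optimality is measured against $\mathrm{OPT}'$, using $\mathrm{OPT}' \geq (1-\delta/2)\mathrm{OPT}$ as above — I would be careful that the constants chain correctly). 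For the diversity bound, observe that removing a common vertex set $L^p$ can only shrink symmetric differences by the contribution of $L^p$; more precisely, $|S'_i \Delta S'_j| \geq |S_i^{''} \Delta S_j^{''}| - |(S_i^{''}\cap L^p)\Delta(S_j^{''}\cap L^p)|$. Summing over all pairs and applying property (ii) of \Cref{thm: marginal-strata-dis} gives
\[
\sum_{i<j}|S'_i \Delta S'_j| \geq \mathrm{OPT}_\mathrm{div}(\cF_c) - \tfrac{\varepsilon}{2}\,\mathrm{OPT}_\mathrm{div}(\cF_c) = (1-\varepsilon/2)\,\mathrm{OPT}_\mathrm{div}(\cF_c),
\]
as required.

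\textbf{Main obstacle.} The conceptual steps are short, but the delicate point is that \Cref{thm: marginal-strata-dis} is stated for a \emph{fixed} tuple $S_1,\ldots,S_k$, whereas the two conditions implicitly require the \emph{same} strata $L^p$ to work for both the resource-augmentation argument (which references an optimal single solution and the definition of $\cF'_c$) and the diversity argument (which references the optimally diverse tuple). The clean resolution is to apply the lemma once to the augmented family consisting of the optimally diverse tuple $S_1^{''},\ldots,S_k^{''}$ \emph{together with} an optimal single independent set $S^\star$ (treating $k+1$ solutions, adjusting $\ell$ accordingly), so that a single $p$ simultaneously satisfies the size bound (i) for $S^\star$ — which powers the $\mathrm{OPT}' \geq (1-\delta/2)\mathrm{OPT}$ relation used in both conditions — and the diversity bound (ii) for the tuple. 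I would also need to confirm that the weighted analogue of property (i) bounds \emph{weight} rather than cardinality, since $\cF'_c$ is defined via the weighted objective $\sigma$; this is the one place where the unweighted statement of the lemma must be read in its weighted form.
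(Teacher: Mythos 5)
Your proof is correct, and for Condition \ref{framework-2} it is essentially the paper's own argument: project the optimally diverse collection $S''_1,\ldots,S''_k$ through the marginal strata, use property (i) of \Cref{thm: marginal-strata-dis} plus the trivial bound $\mathrm{OPT}(G[V-L^p])\leq \mathrm{OPT}(G)$ to place each projection in $\cF'_c$, and property (ii) to bound the diversity loss. The genuine difference is Condition \ref{framework-1}. The paper dismisses it in one line as ``straightforward,'' whereas you correctly observe that it is not: membership in $\cF'_c$ is measured against $\mathrm{OPT}(G[V-L^p])$, so one needs the lower bound $\mathrm{OPT}(G[V-L^p])\geq(1-\delta/2)\,\mathrm{OPT}(G)$, and this does not follow from applying \Cref{thm: marginal-strata-dis} to the diverse tuple alone. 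Those sets are only $c$-optimal, so property (i) for them yields only $\mathrm{OPT}(G[V-L^p])\geq(1-\delta/2)c\cdot\mathrm{OPT}(G)$, which after chaining gives quality $(1-\delta/2)^2c^2$ rather than $(1-\delta)c$; for small $c$ the strata chosen for the diverse tuple could indeed delete a large fraction of every maximum independent set, deflating $\mathrm{OPT}(G[V-L^p])$ and admitting low-quality sets into $\cF'_c$. Your fix --- applying the lemma to the augmented family that includes a maximum independent set $S^\star$, with $\ell$ increased to $2(k+1)\delta^{-1}+2\varepsilon^{-1}-1$ --- is exactly what is needed, and it supplies a proof of Condition \ref{framework-1} that the paper's text does not actually contain.

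One caveat on the fix: it cannot be a black-box invocation of \Cref{thm: marginal-strata-dis}. Applied verbatim to $k+1$ sets, property (ii) bounds the strata's contribution to the diversity over all $\binom{k+1}{2}$ pairs, and the pairs involving $S^\star$ inflate the right-hand side by up to $(\varepsilon/2)\cdot 2k\cdot\mathrm{OPT}(G)$, which is useless when $\mathrm{OPT}_{\mathrm{div}}(\cF_c)$ is small. What you need is a one-line modification of the lemma's counting proof: keep a quality bad-set $A_h$ for each of the $k+1$ sets (costing $2\delta^{-1}$ each) while defining the single diversity bad-set $A_{div}$ with respect to the original $k$-tuple only (costing $2\varepsilon^{-1}$); the count $2(k+1)\delta^{-1}+2\varepsilon^{-1}\leq\ell+1$ then goes through unchanged. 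Your phrase ``the diversity bound (ii) for the tuple'' suggests you intend exactly this, but it should be presented as a restatement of the lemma rather than an application of it. Finally, your closing concern about the weighted reading of property (i) is prudent but unnecessary here: the proposition is stated in the unweighted setting, and the paper handles weights separately by scaling and rounding.
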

\begin{proof}
Verifying Condition \ref{framework-1} is straightforward; $F'_{c}$ is contained in the space of all $(1-\delta)c$-optimal independent sets. To verify Condition \ref{framework-2}, want to show that $F'_c$ contains $S'_1,\ldots,S'_k$ such that $ \sum_{i<j}|S'_i\Delta S'_j| \geq (1-\frac{\varepsilon}{2})\mathrm{OPT_{\mathrm{div}}(\cF_{c})} $.

Let $S_1,\ldots,S_k$ be $c$-maximum independent sets of $G$ with optimal diversity. Note that $(S_i-L^p)$ is an independent set of $G[V-L^p]$ for every $i\in[k]$. By the definition of marginal strata, we have 
$|S_i-L^p|
    = |S_i|-|S_i\cap L^p|
    >|S_i| - ({\delta}/{2})|S_i| = \left(1-{\delta}/{2}\right)|S_i|.$
Furthermore,
\[
\begin{split}
    \sum_{i<j}|(S_i - L^p) \Delta (S_j - L^p)|
    & = \sum_{i<j}\Big(|S_i \Delta S_j| -  |(S_i\cap L^p) \Delta (S_j \cap L^p)|\Big) \\
    &= \sum_{i<j}|S_i\Delta S_j| - \frac{\varepsilon}{2} \sum_{i<j}|S_i\Delta S_j| = \left(1-\frac{\varepsilon}{2}\right)\sum_{i<j}|S_i\Delta S_j|,
\end{split}
\]
for every $i\in[k]$, and hence Condition \ref{framework-2} is satisfied.
\end{proof}

\subsection{Condition \ref{framework-kbest}: Budget-Constrained $k$-Best Enumeration}

In this section, we present an algorithm for the $k$-BCBE problem on MIS in planar graphs.  Since our algorithm runs on tree decompositions of $\ell$-outerplanar graphs, we begin by giving the formal definition of a \defi{tree decomposition} of a graph.

\begin{definition}[Tree decomposition \cite{kleinbergtardos2005}]\label{def: tree-decomp}
 A \defi{tree decomposition} (TD) of a graph $ G = (V, E) $ consists of a tree $T$ and a subset $V_t \subseteq V, $ called the \defi{bag}, associated with each node $t$ of $T$, such that the ordered pair $ (T, \{V_t: t\in T\}) $ must satisfy the following three properties:
    \begin{enumerate}[leftmargin=*,label=\normalfont(\alph*)]
        \item Every node of $G$ belongs to at least one bag $V_t$.
        \item For every edge $e$ of $G$, there is some bag $V_t$ containing both ends of $e$.
        \item Let $t_1$, $t_2$ and $t_3$ be three nodes of $T$ such that $t_2$ lies on the path from $t_1$ to $t_3$. Then, if a vertex $v$ of $G$ belongs to both $V_{t_1}$ and $V_{t_3}$, it also belongs to $V_{t_2}$.
\end{enumerate}
\end{definition}

\noindent\textbf{Remark.} To avoid confusion, we use the term {\it a vertex} for $v\in V$ and \defi{node} for a vertex $t$ of a tree decomposition of $T$ of $G$.
The width of a tree decomposition $T$ is defined as $ \max_{t\in T}(|V_t| - 1) $, and the \defi{treewidth} of a graph $G$ is the minimum width over all tree decompositions of $G$. Unless stated otherwise, $\omega$ denotes the width of a tree decomposition.

We now illustrate our algorithm.

Although we do not know in advance which $p \in \{0,\ldots,\ell \} $ yields the marginal strata, we may check every $p\in\{0,\ldots,\ell\}$ without affecting the overall time bound.
Hence, without loss of generality, we assume that $L^p$ denotes the marginal strata guaranteed by \Cref{thm: marginal-strata-dis}.

First, we delete the marginal strata in \Cref{thm: marginal-strata-dis} to decompose $G$ into $ G_1, \ldots, G_q $, where $ G[V-L^p] = G_1 \cup \cdots \cup G_q $ and each $G_i$ is $\ell$ outerplanar for the value of $\ell$ in \Cref{thm: marginal-strata-dis}.
Next, we create a tree decomposition $T_i$ of $G_i$ for every $1 \leq i \leq q$, and connect all the roots of these $q$ trees to a new common root.
This creates a tree whose root has $q$ subtrees. 

Since every $\ell$-outerplanar graph has a treewidth of at most $3\ell-1$~\cite{Bodlaender1988}, using the algorithms in Lemma 7.4 and Theorem 7.18 of~\cite{cygan2015parameterized}, any $\ell$-outerplanar graph can be transformed in time $2^{O(\ell)}n^2$ into a tree decomposition with a treewidth of $O(\ell)$ and $O(\ell n)$ nodes such that each node has at most two children.
\textit{Therefore we assume that we have a tree decomposition of $G[V-L^p]$ with treewidth $O(\ell)$ and $O(\ell n)$ nodes, where each node has at most two children.} The goal of this section is then to show how to solve the $k$-BCBE problem on a tree decomposition. 
We accomplish this in \Cref{thm: top-k-is-vc}.

Given a tree decomposition $T$ of $G$, a simple DP computes an MWIS. 

\begin{lemma}[MWIS on TDs~\cite{kleinbergtardos2005}]\label{lem: mwis-klein-tardos}
    Given a graph $G$ and its tree decomposition $T$, there exists an algorithm that finds an MWIS of $G$ in time $ 2^{O(\omega)} n $, where $\omega$ is the treewidth of $G$.
\end{lemma}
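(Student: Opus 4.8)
The plan is to run the standard bottom-up dynamic program over a rooted tree decomposition. First I would preprocess $T$ into a \emph{nice} tree decomposition — rooted, with every node being one of four types (leaf with empty bag, introduce, forget, join) — which can be produced in time $2^{O(\omega)}n$ while preserving the width $\omega$ and keeping the number of nodes $O(\omega n)$; this is the same flavor of transformation already invoked in the excerpt for the $\ell$-outerplanar case. Rooting at a node $r$, for each node $t$ let $V_t^{\downarrow}$ be the union of all bags in the subtree rooted at $t$, and for each subset $S \subseteq V_t$ that is independent in $G[V_t]$ define
\[
\mathrm{dp}[t][S] = \max\{\, w(U) : U \text{ independent in } G[V_t^{\downarrow}],\ U \cap V_t = S \,\},
\]
with the entry set to $-\infty$ when no such $U$ exists. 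The value we return is $\max_S \mathrm{dp}[r][S]$ over independent $S \subseteq V_r$.

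Next I would write the recurrences by node type. For a leaf, $\mathrm{dp}[t][\emptyset]=0$. At an introduce node adding vertex $v$ to the child bag $V_{t'}$: if $v \notin S$ then $\mathrm{dp}[t][S]=\mathrm{dp}[t'][S]$, and if $v \in S$ (we first check that $S$ is independent) then $\mathrm{dp}[t][S]=\mathrm{dp}[t'][S\setminus\{v\}]+w(v)$. At a forget node removing $v$, $\mathrm{dp}[t][S]=\max\{\mathrm{dp}[t'][S],\,\mathrm{dp}[t'][S\cup\{v\}]\}$, taking whichever option keeps independence. The delicate case is the join node with children $t_1,t_2$ sharing the bag $V_t$, where
\[
\mathrm{dp}[t][S]=\mathrm{dp}[t_1][S]+\mathrm{dp}[t_2][S]-w(S),
\]
the subtraction correcting for the weight of $S$ being counted in both child tables.

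Correctness I would establish by a bottom-up induction on the tree, verifying that each recurrence computes the claimed maximum. The structural fact that makes the join recurrence valid is the separator property of \Cref{def: tree-decomp}: by property (c), the bag $V_t$ separates $V_{t_1}^{\downarrow}\setminus V_t$ from $V_{t_2}^{\downarrow}\setminus V_t$, so any independent $U$ with $U\cap V_t=S$ decomposes as $U=U_1\cup U_2$ with $U_i$ independent in $G[V_{t_i}^{\downarrow}]$ and $U_1\cap U_2=S$, and conversely any two such partial solutions glue back into an independent set because, by property (b), no edge of $G$ can cross the separator outside $V_t$. This gluing step is exactly where I expect the main care to be needed: one must argue that local independence checks within bags suffice to guarantee global independence, which again rests on property (b) placing every edge inside some bag.

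Finally I would bound the running time. Each bag has at most $\omega+1$ vertices, so each node stores at most $2^{\omega+1}$ table entries, and each entry is computed from a constant number of child-table lookups plus an independence check or a computation of $w(S)$, costing $\mathrm{poly}(\omega)$ time. With $O(\omega n)$ nodes, the total work is $2^{O(\omega)}\cdot n$, as claimed.
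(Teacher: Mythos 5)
Your proof is correct and takes essentially the same route as the paper's: a bottom-up dynamic program over the tree decomposition whose states are pairs (node, independent subset of the bag), with value equal to the maximum weight of an independent set in the subtree-induced subgraph agreeing with that subset on the bag, and the same $2^{O(\omega)}\cdot n$ accounting. The only organizational difference is that you first normalize to a \emph{nice} tree decomposition (introduce/forget/join nodes), whereas the paper runs its recurrence directly on a decomposition whose nodes have at most two children, handling mismatched child bags via the compatibility constraint $U_i \cap V_t = U \cap V_{t_i}$ and the overlap correction $-\,w(U_i \cap U)$; the two formulations are standard and interchangeable.
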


 \noindent\textit{Proof Sketch.} Assume that a weight function $w\colon V\to \mathbb{R}_{\geq0}$ is given.
 For a given node $t$, let $G_t$ represent the subgraph of $G$ induced by the subtree of $T$ rooted at $t$. Let $f(t, U)$ denote the maximum weight of an independent set $S$ in $G_t$ such that $S \cap V_t = U$. 
    The algorithm begins at the leaf nodes and processes upward through the tree. At each node $t \in T$, it computes $f(t, U)$ for all independent subsets $U \subseteq V_t$. At the root node, the algorithm returns $\max f(\text{root}, U)$ over all independent subsets $U \subseteq V_{\text{root}}$. 
    Let $\mathsf{Ind}(V_{t_i})$ denotes the collection of all independent subsets of $V_{t_i}$. To merge subproblems and compute $f(t, U)$, the algorithm uses the following recurrence relation:
    \begin{equation}\label{eq: mwis-klein-tardos}
        f(t,U) = w(U) + \sum_{i=1}^2 \max\{ f_{t_i}(U_i) - w(U_i \cap U): U_i \in \mathsf{Ind}(V_{t_i}), U_i \cap V_t = U\cap V_{t_i} \},
    \end{equation}
    where $w(U_i \cap U)$ was subtracted in \Cref{eq: mwis-klein-tardos} to prevent $U$ from overcontributing to $f(t,U)$. The idea behind the constraint $U_i \cap V_t = U\cap V_{t_i} $ for the subproblems in \Cref{eq: mwis-klein-tardos} is that if $ S_i = S \cap G_{t_i} $, then $ S_i \cap V_t = U \cap V_{t_i} $~\cite{kleinbergtardos2005}.
    Note that given $ U \subseteq V_t $, the recurrence relation in \Cref{eq: mwis-klein-tardos} can be solved in time $ O(2^{\omega+1}) $, and this gives the overall running time of $ 2^{O(\omega)} \cdot n $ since there are at most $O(n)$ subproblems.

Now, we provide the $k$-BCBE algorithm for MWIS on tree decompositions.

\begin{theorem}[$k$-BCBE on TDs]\label{thm: top-k-is-vc}
    Consider the $k$-best budget-constrained independent sets problem with score function $r:2^V \rightarrow \mathbb{Z} $ and weight function $w$. Then, there exists an algorithm that computes $k$-best budget-constrained independent sets of $G$ in time $ 2^{O(\omega)} k \cdot r(V)^2n $.
    
\end{theorem}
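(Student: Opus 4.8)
The plan is to augment the textbook maximum-weight DP of \Cref{lem: mwis-klein-tardos} along two extra axes: an additive \emph{score} axis that lets us sort solutions by $r$, and a bounded list of candidate solutions at each cell that lets us enumerate $k$ \emph{distinct} sets while respecting the budget. I assume throughout that $r$ is modular, $r(S)=\sum_{e\in S}r(e)$ (as in \Cref{eq: defi-r}), so that, exactly like $w$, it decomposes additively over the tree decomposition; its attainable values then lie in a range of size $O(r(V))$, which I index by a parameter $\rho$. First I would run \Cref{lem: mwis-klein-tardos} once to obtain $W^\star=\max_S w(S)$ and fix the budget $B=cW^\star$, so that a set $S$ is $c$-optimal precisely when $w(S)\ge B$.

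The central object is, for every node $t$, every independent $U\subseteq V_t$, and every score value $\rho$, a list $\mathcal L(t,U,\rho)$ of the at most $k$ distinct independent sets $S$ of $G_t$ with $S\cap V_t=U$ and $r(S)=\rho$ having the \emph{largest $w$-weight}, each tagged with its weight and a back-pointer for reconstruction. I would compute these bottom-up, the leaves being the base case and each internal node using a recurrence that extends \Cref{eq: mwis-klein-tardos} in two ways. First, the score is handled as a convolution variable: a set of score $\rho$ in $G_t$ is built from child partial solutions of scores $\rho_1,\rho_2$ with $\rho=\rho_1+\rho_2$ up to the same bag-overlap correction that \Cref{eq: mwis-klein-tardos} already applies to $w$, now applied to $r$ as well. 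Second, the scalar maximum in \Cref{eq: mwis-klein-tardos} is replaced by the operation ``form all compatible pairwise unions of the two child lists and keep the $k$ heaviest''. Since the subtrees meeting at $t$ overlap only in bag vertices (the running-intersection property), distinct compatible pairs yield distinct sets, so each $\mathcal L(t,U,\rho)$ genuinely stores distinct independent sets.

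To extract the answer I would, at the root, scan $\rho$ in decreasing order; for each $\rho$ I merge the lists over all $U$, discard the sets with $w<B$, and append the survivors to the output until $k$ sets have been collected, returning \textbf{no} if the $c$-optimal sets are exhausted first. The one nontrivial point is why pruning to the top $k$ by weight is safe. Within a fixed $\rho$ all retained sets share the score $\rho$, so their relative order is immaterial to the global ranking by $r$; and because a set is $c$-optimal iff $w(S)\ge B$, the $c$-optimal sets of score $\rho$ are exactly its heaviest ones. Hence keeping the $k$ heaviest sets of each score class is guaranteed to retain $\min(k,\#_\rho)$ of the $c$-optimal sets of that class, where $\#_\rho$ is their total number---always enough, since we need at most $k$ in total. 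Scanning $\rho$ downward therefore emits exactly the $k$ highest-$r$ distinct $c$-optimal sets demanded by \Cref{def: best-k-enumeration}.

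For the running time there are $2^{O(\omega)}\cdot O(r(V))$ cells per node; at an internal node each target cell costs an $O(r(V))$-fold score-convolution, and each term of that convolution merges child lists of length $k$ over $2^{O(\omega)}$ compatible bag subsets and reselects the $k$ heaviest in $O(2^{O(\omega)}k)$ time. This gives $2^{O(\omega)}\,r(V)^2\,k$ per node, and the $O(\ell n)$ nodes together with the $\ell=O(\omega)$ and $\mathrm{poly}(\omega)$ factors absorb into $2^{O(\omega)}\,r(V)^2\,k\,n$. The main obstacle is not any single estimate but making the two augmentations cooperate: the score axis ranks by $r$, while the weight-sorted bounded lists must simultaneously certify distinctness \emph{and} guarantee that enough budget-feasible witnesses survive the pruning. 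Justifying that ``top-$k$ by $w$ within each score class'' is the correct pruning rule---rather than, say, top-$k$ by $r$, which would silently drop budget-feasible sets in favour of infeasible higher-scoring ones---is the heart of the correctness argument.
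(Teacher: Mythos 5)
Your proposal is correct and follows essentially the same route as the paper's proof: augment the tree-decomposition MWIS dynamic program with an exact score dimension, keep only the $k$ heaviest (distinct) partial solutions in each cell indexed by (bag subset, score value), combine children via a score convolution with the same bag-overlap corrections applied to both $w$ and $r$, and finally scan the score axis at the root collecting solutions whose weight clears the $c$-optimality budget. Your justification that top-$k$-by-weight pruning within each score class is the safe rule, and your per-node accounting of $2^{O(\omega)}\,k\,r(V)$ per cell over $2^{O(\omega)}\,r(V)$ cells and $O(\omega n)$ nodes, match the paper's argument and its stated bound of $2^{O(\omega)}k\cdot r(V)^2 n$.
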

\begin{proof}
    For a given node $t$ of $T$, let $G_t$ represent the subgraph of $G$ induced by the subtree of $T$ rooted at $t$. In the following, by the score of a set $S$ we mean $r(S)$.
    For each node $t$ of $T$, let $ f_k(t,U,R') $ denotes the $k$-best weights of independent sets $ S_1,\ldots,S_k $ of $G_t$, with score of $R'$ such that $ S_h \cap V_t = U $ for every $h\in[k]$.
    That is, $ r(S_1) = \cdots = r(S_k) = R' $ and $ w(S_1) \geq \cdots \geq w(S_k) \geq w(S') $ for any independent set $ S' $ of $G_t$ such that $ r(S') = R' $.
    If the number of such independent sets is $ k' < k $ for some nonnegative integer $k'$, each of the remaining $k-k'$ elements of $ f_k(t,U,R') $ is defined to be $-\infty$, e.g., $ \{ w(S_1), w(S_2), -\infty, \ldots, -\infty \} $, so that $ f_k(t,U,R') $ is always well-defined. Additionally, assume that for all subsets $U$ of $V_t$ and for all integers $ R' \leq R $, $ f_k(t,U,R') $ is initially set to $ \{-\infty,\ldots,-\infty\}$, so that we can avoid manually handling the error cases.

    Let $t$ be a leaf node, and let $ \mathsf{Ind}(V_t) $ denote the collection of all independent sets contained in $V_t$. Then, for every $ U \in \mathsf{Ind}(V_t) $ and for all nonnegative integers $ R' \leq R $, set $ f_k(t,U,R') $ as $ f_k(t,U,R') := \{ w(U), -\infty, \ldots, -\infty \} $ if $r(U) = R'$, and otherwise, $ f_k(t,U,R') $ remains as $ \{-\infty, \ldots, -\infty \} $.

    If $t$ is a non-leaf node, using the idea in \Cref{lem: mwis-klein-tardos}, $ f_k(t,U,R') $ can be computed by selecting the $k$-best elements from the following set of $O(k^2)$ pairwise sums:
    \begin{equation}\label{eq: top-k-is}
        \left\{ w(U) + \sum_{i=1}^2 (x_i - w(U_i\cap U)):
             U_i\in \mathsf{Ind}(V_{t_i}), x_i \in f_k(t_i, U_i, R'_i), U_i\cap V_t = U \cap V_t \right\},
    \end{equation}
    where $ r(U_1) \leq R'_1 \leq R' + r(U_1 \cap U_2) - r(U-(U_1 \cup U_2)) $ and $ R'_2 = R' - R'_1 + r(U_1 \cap U_2) - r(U-(U_1 \cup U_2)) $. Here, to obtain the bounds for $R'_1$ and $R'_2$, we used the fact that $ R' = R'_1 + R'_2 - r(U_1 \cap U_2) + r(U - (U_1\cup U_2)) $. Since the score of any subset of $V$ is no greater than $r(V)$, given $U$ and $R'$, the running time for computing $f_k(t,U,R')$ for a non-leaf node $t$ is therefore $ 2^{O(\omega)} k \cdot r(V) $.
    Since $U$ is a subset of $V_t$, $ 0 \leq R' \leq R $ and there are at most $O(\omega \cdot n) $ nodes, $ f_k(\text{root},U,R) $ can be computed in time $ 2^{O(\omega)} k \cdot r(V)^2  n $.

    Note that the root node has a table of size $ 2^{O(\omega)}  r(V) $, each cell of which contains $k$ best weights. By varying $R'$ from 0 to $ R $ in increasing order at the root node, we can collect the weight no less than $B$, if they exist. This can be done by doing a simple linear scan without affecting the overall running time.
    Standard dynamic programming bookkeeping can be used to recover the solutions; we omit the details.
\end{proof}
\noindent\textbf{Putting everything together.} This verifies Condition \ref{framework-kbest} of Theorem~\ref{thm:framework}.
\noindent For Condition \ref{framework-exact},
we simply use the existing result by Baste et al.~\cite{baste2022diversity} that runs in time $f'(n,k)= 2^{O(\ell k)} n^{k} $ where $\ell$ is the treewidth, and in our case equals $2k\delta^{-1}+2\varepsilon^{-1}$ from Lemma~\ref{thm: marginal-strata-dis}.
With all conditions verified, an application of Theorem~\ref{thm:framework} proves Theorem~\ref{thm:bi-apx-pg}. The runtime is $ \max( 2^{O(\omega)} k n^2k^2 n (k^2 \log k),   2^{O(\ell /\varepsilon)} n^{4/\varepsilon})$, where $\omega = O(\ell)$ and $\ell = 2k\delta^{-1}+2\varepsilon^{-1}$, which simplifies to $2^{O(k/(\delta\varepsilon^{2}))}n^{O(1/\varepsilon)}$.

\medskip\textbf{Diverse $c$-Minimum Vertex Covers.}\quad
For \textsc{Diverse $c$-Minimum Vertex Covers} problem, recall that we decompose the graph $ G $ into disjoint $(\ell + 2)$-outerplanar graphs by duplicating every $(\ell + 1)$-st layer of $ G $.
The remaining procedure is similar to the \textsc{Diverse $c$-Maximum Independent Sets} problem.
Let $S_1,\ldots,S_k$ be any $c$-minimum vertex covers of $G$, and let $\mathcal{S}^p = \{S^p_1,\ldots,S^p_k\}$ be the output vertex covers. Note that $L^p$ is marginal to 
$ \sum_{i \neq j}|S_i \Delta S_j|$ as guaranteed by the marginal strata lemma (\Cref{thm: marginal-strata-dis}), but it might not be marginal to $ \sum_{i \neq j}|S^p_i \Delta S^p_j| $.
Therefore, deleting redundant vertices from each $S^p_h$, where $ h\in[k]$, might decrease the diversity, and as a result we might lose the diversity factor $(1-\varepsilon)$.
We overcome this challenge by coloring the vertices of the layers in $ L^p $ red and computing solutions with diversity contribution from these red vertices is minimized.
We can do this by adding an additional weight constraint without affecting the overall time complexity.
For example, in the $k$-best enumeration procedure in \Cref{thm: top-k-is-vc}, as the total number of red vertices in any collection of $k$ vertex covers does not exceed $ nk $, the overall running time increases by a factor at most $ n^2 k^2 $.

\medskip\textbf{Distinct Solutions for DMIS-PG and DMVC-PG.}\quad
With the additional assumption that $ G $ has distinct $c$-maximum independent sets $ \cS = \{ S_1,\ldots,S_k \} $ such that $ |S_i \Delta S_j | \geq 2 $ for every $i\neq j$, we may obtain distinct solutions for the \textsc{Diverse $c$-Maximum Independent Sets} problem as follows.

First, in the marginal strata lemma (\Cref{thm: marginal-strata-dis}), if we let $\ell \geq 2k^2 + 2k\delta^{-1} + 2\varepsilon^{-1} - 1$, we also can prove that there exists $ p \in\{0,\ldots,\ell\} $ that also satisfies the following third property:
\begin{enumerate}
    \item[\textit{(iii)}] $|(S_i \cap L^p) \Delta (S_j \cap L^p) | \leq (1/2) |S_i \Delta S_j | \text{ for every $i \neq j$.}$
\end{enumerate}
Let $L^p$ be the marginal strata that satisfies all the tree properties, i.e., \textit{(i)} and \textit{(ii)} in \Cref{thm: marginal-strata-dis} and \textit{(iii)} above.
We claim that then $G[V-L^p]$ has \emph{distinct} $c$-maximum independent sets (and the same argument applies to vertex covers).
Let $S_1,\ldots,S_k$ be $c$-maximum independent sets input planar graph such that $|S_i\Delta S_j|\geq2$, and let $ S_i^p = S_i - L^p $.
Then,
\begin{equation}
\begin{split}
    \min_{i \neq j}|S^p_i \Delta S^p_j|
        & = \min_{i\neq j}\left\lvert \left(S_i - L^p\right) \Delta \left(S_j-L^p\right) \right\rvert
        = \min_{i\neq j}\left( \left\lvert S_i \Delta S_j \right\rvert - \left\lvert \left( S_i \cap L^p\right) \Delta \left(s_j \cap L^p\right) \right\rvert \right) \\
        &\geq \min_{i \neq j}{|S_i \Delta S_j|} - \max_{i\neq j}{|(S_i \cap L^p) \Delta (S_j \cap L^p)|} \\
        &\geq \min_{i\neq j}|S_i \Delta S_j | - (1/2)\min_{i\neq j}|S_i \Delta S_j| \geq 2-(1/2)\cdot2 = 1.
\end{split}
\end{equation}
Hence, by applying our framework to $G[V-L^p]$, we may obtain distinct solutions.

\medskip\textbf{Extending to the Weighted Setting via Scaling and Rounding.}
We now assume our input is a vertex-weighted planar graph with weight function $w\colon V\to \mathbb{R}_{\geq0}$.
Recall that our goal is not only to compute approximate solutions in polynomial time, but also to ensure that the diversity of the computed solutions remains close to the optimum diversity in the intended target space $\cF'_c$.

To obtain diverse $c$-maximum weight independent sets or diverse $c$-minimum weight vertex covers, we apply the scaling and rounding scheme from \Cref{lem: knapsack-scaling}.
In that lemma, each profit $u_i$ is scaled and rounded to $ \tilde u_i = \left\lfloor \frac{\tilde U + n}{c\,u(S)}\;u_i \right\rfloor $, where $ \tilde U = \left\lceil \frac{1-\delta}{\delta}\,n \right\rceil $, and $S$ is any fixed feasible solution.
We prove that any solution whose adjusted profit meets or exceeds $\tilde{U}$ is $(1-\delta)c$-optimal w.r.t.\ the original profits.
This allows us to search for diverse solutions in enlarged space without significantly sacrificing solution quality.

By applying the same scaling and rounding to the vertex weights of our planar graph and reusing the same analysis, we may enlarge the search space without degrading solution quality.
Note that a single initial feasible solution $S$ required in the algorithm can be found in polynomial time via Baker's technique. Hence the overall algorithm runs in polynomial time in the size of the input.

\clearpage
\bibliographystyle{plainurl}
\bibliography{ref}

\appendix
\clearpage
\section{Full Proof for \Cref{thm: diverse-knapsack} (Diverse Knapsack)}
\label{sec: diverse-knapsack-proof}

We first restate the theorem for the reader’s convenience, and then present the full proof.

\ResDiverseKnapsackThm*
Note that the search spaces for (1) and (2) of \Cref{thm: diverse-knapsack} are a bit different; see \Cref{fig: knapsack-framework}.
For \Cref{thm: diverse-knapsack}(1), the algorithm runs in $\cF^{W}_{(1-\delta)c}$ (shown in red), while the algorithm for (2) runs in $\cF^{(1+\gamma)W}_{(1-\delta)c}$ (shown in blue). Although there are discrepancies between our target space $\cF^W_c$ and the actual search spaces for the approximate algorithms, the deviation can be controlled by a user.

\begin{wrapfigure}{r}{0.37\textwidth}
    \vspace{-15pt}
    \hspace{5pt}
    \includegraphics[width=0.33\textwidth]{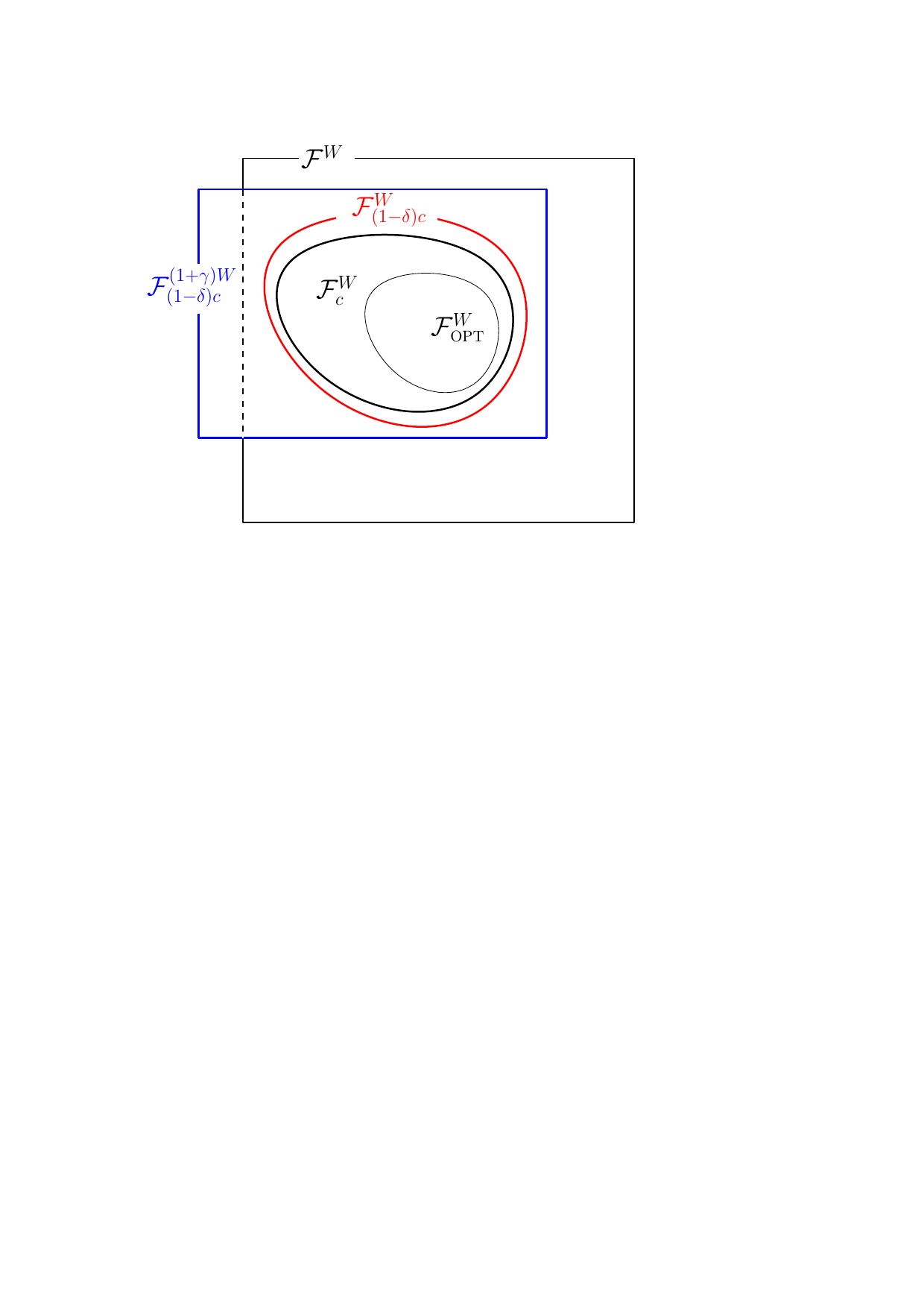}
    \caption{}
    \label{fig: knapsack-framework}
    \hspace{30pt}
    \vspace{-26pt}
\end{wrapfigure}

We begin by considering three key lemmas, all of which are used for our framework. First, \Cref{lem: diverse-knapsack-exact} presents an exact algorithm for the \textsc{Diverse Knapsack} problem, which verifies Condition \ref{framework-exact} in $\cF^{(1+\gamma)W}_{(1-\delta)c}$. Next, \Cref{lem: diverse-knapsack-beta-k} provides a pseudo-polynomial-time $\tbetak$-approximation algorithm for the \textsc{Knapsack} problem.
During the proof, we provide budget-constrained $k$-best enumeration algorithm and this will verify Condition \ref{framework-kbest} in $\cF^W_{(1-\delta)c}$.
Finally, \Cref{lem: knapsack-scaling} describes a scaling scheme that ensures the running time of our approach is polynomial, while also guaranteeing that the diversity and quality of the obtained solutions are close to their respective optimal values.

\begin{restatable}{lemma}{ResDiverseKnapsackExact}\label{lem: diverse-knapsack-exact}
    Let $\{w_i\}_{i=1}^n$ and $\{u_i\}_{i=1}^n$ be strictly positive weights and profits of $n$ items in a classical knapsack problem with capacity $W>0$.
    Given integers $k\geq1$, $ d_{min} \geq 0$ and $ U > 0 $, there is an algorithm that runs in time $ (d_{min} + 1)^{O(k^2)}\,W^{O(k)}\,U^{O(k)}\,n $ and returns $k$ feasible solutions $S_1,\ldots,S_k$ that maximize $\sum_{i \neq j} |S_i \Delta S_j| $ subject to $ |S_i \Delta S_j | \geq d_{min} $ for all $ 1 \leq i < j \leq k $ and $ u(S_i) \geq U $ for every $i\in[k]$, if they exist. Otherwise, the algorithm reports an error without affecting the overall running time.
\end{restatable}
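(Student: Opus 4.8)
The plan is to solve the problem with a single dynamic program that constructs all $k$ solutions \emph{simultaneously}, scanning the items $1,\dots,n$ once and, at item $t$, branching on a choice vector $b\in\{0,1\}^k$ that records, for each of $S_1,\dots,S_k$, whether the item is taken ($2^k$ branches per item). As is standard for pseudo-polynomial knapsack DPs (and as guaranteed by the scaling step of \Cref{lem: knapsack-scaling}), I assume the weights and profits are positive integers. The DP state after processing the prefix $\{1,\dots,t\}$ records three things: (a) the \emph{exact} weight $w(S_i\cap\{1,\dots,t\})\in\{0,\dots,W\}$ of every solution, discarding any branch whose weight exceeds $W$; (b) the \emph{truncated} profit $\min\{u(S_i\cap\{1,\dots,t\}),U\}$ of every solution; and (c) for every pair $i<j$, the \emph{truncated} symmetric difference $\min\{|(S_i\Delta S_j)\cap\{1,\dots,t\}|,d_{min}\}$. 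The value stored at a state is the maximum attainable \emph{untruncated} partial objective $\sum_{i<j}|(S_i\Delta S_j)\cap\{1,\dots,t\}|$ over all prefixes reaching that state; note $\sum_{i\neq j}|S_i\Delta S_j|=2\sum_{i<j}|S_i\Delta S_j|$, so this maximizes the objective in the statement.

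For the transition at item $t+1$ and a choice vector $b$, I would update each weight by $b_i w_{t+1}$, each truncated profit, each truncated pairwise difference (the pair $(i,j)$ rises by $1$ iff $b_i\neq b_j$), and add $\sum_{i<j}\mathbbm{1}(b_i\neq b_j)$ to the stored objective value. Whenever two branches land in the same state, I keep only the one of larger objective value. After all $n$ items, I scan the states in which every truncated profit equals $U$ and every truncated pairwise difference equals $d_{min}$ (weights are $\leq W$ by construction), output the one of maximum objective value, and recover the $k$ sets by standard back-pointers; if no such state exists, I report the error.

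The step I expect to require the most care is justifying that truncating the profits at $U$ and the pairwise symmetric differences at $d_{min}$ compromises neither the final feasibility/constraint checks nor the maximization of the untruncated objective. Two facts make this work, and I would prove the resulting state-merging rule correct by induction on $t$. First, both truncated quantities are monotonically non-decreasing as items are added ($u(\cdot)$ trivially, and $|S_i\Delta S_j|$ because a new item changes a pair's difference by $+1$ or $0$); hence a coordinate truncated \emph{below} its threshold equals the true value, while a coordinate truncated \emph{at} its threshold certifies its constraint for \emph{every} completion. Second, the objective decomposes additively over items, item $t$'s contribution $\sum_{i<j}\mathbbm{1}(b_i\neq b_j)$ depending only on the choice at item $t$. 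Consequently, the future objective increment and the future feasibility/constraint behaviour depend on the stored state but not on the particular prefix that produced it, so retaining only the maximum-objective representative per state is optimal by a routine exchange argument.

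Finally I would bound the running time. The number of states is at most $(W+1)^k\,(U+1)^k\,(d_{min}+1)^{\binom{k}{2}}$, each processed against $2^k$ choice vectors with $O(k^2)$ work per transition (updating $\binom{k}{2}$ pairwise coordinates and the objective increment), across $n$ items. Since $\binom{k}{2}=O(k^2)$ and the extra factor $2^{O(k)}k^{O(1)}$ is absorbed into the exponents, this yields $(d_{min}+1)^{O(k^2)}\,W^{O(k)}\,U^{O(k)}\,n$, matching the claim; the error report in the absence of a feasible $k$-tuple costs nothing extra.
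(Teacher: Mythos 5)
Your proposal is correct and is essentially the paper's own proof: the same single dynamic program that processes items once, branches on the $2^k$ inclusion vectors, keeps a state of size $(W+1)^k(U+1)^k(d_{min}+1)^{\binom{k}{2}}$, stores the maximum partial diversity, and recovers solutions by back-pointers. The only cosmetic difference is that you track achieved quantities truncated \emph{above} at their thresholds ($\min\{u(\cdot),U\}$, $\min\{|\cdot\Delta\cdot|,d_{min}\}$, weight used), while the paper tracks the remaining requirements clamped \emph{below} at zero (residual profit, residual distance, residual capacity) --- the two formulations are identical under the substitution $x\mapsto \text{threshold}-x$.
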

\begin{proof}
Our algorithm uses a dynamic programming approach.
At each step (for a given item), it enumerates all possible $2^k$ assignments of including or excluding that item in each of the $k$ solutions.
At each step, it keeps track of (i) how many items we have considered so far, (ii) how much \emph{distance} for each pair of solutions is still required, (iii) how much \emph{capacity} remains for each of the $k$ knapsacks, and (iv) how much \emph{profit} is still required for each of the $k$ solutions.
We then show how to compute the dynamic programming table (DP) within the desired time bound and reconstruct the actual $k$ solutions.
Define
\begin{equation}\label{eq:knapsack-def}
	\text{DP} \Bigl[ h, \; d_{1,2},\ldots, d_{i,j},\ldots,d_{k-1,k}, \; W'_1,\ldots, W'_m, \ldots, W'_k, \; U'_1,\ldots,U'_m, \ldots,U'_k \Bigr]
\end{equation}
to be the maximum sum of pairwise distances of any $k$ partial solutions $ S'_1,\dots,S'_m, \ldots,S'_k $ such that for every $m\in[k]$ and for all $ 1 \leq i < j \leq k $:
\begin{equation}
    S'_m \subseteq \{1,2,\dots,h\},
    \quad
    \lvert S'_i \,\Delta\, S'_j\rvert \,\ge\, d'_{i,j},
    \quad
    w(S'_m) \,\le\, W'_m,
    \quad\text{and}\quad
    u(S'_m) \,\ge\, U'_m.
\end{equation}
For brevity, let us use $ \{d'_{i<j}\}_{i,j\in[k]} $ for $ d'_{1,2}, \ldots, d_{i,j},\ldots, d'_{k-1,k} $, and similarly for $ \{W'_m\}_{m=1}^k $ and $ \{U'_m\}_{m=1}^k $.
Then, among all possible $ 2^k $ possibilities of assigning the current item to none, some, or all of the $ k $ partial solutions, the maximum total of pairwise distances of the $k$ partial solutions can be obtained as follows:
\begin{equation*}\label{eq: knapsack-recursion}
\begin{split}
    \text{DP} \Bigl[ h, \; &\{d'_{i<j}\}_{i,j\in[k]}, \; \{W'_m\}_{m=1}^k, \; \{U'_m\}_{m=1}^k \Bigr] \\
    &= \max_{\substack{x \in \{0,1\}^k \\ W'_m \geq w_h\cdot x_h \; \forall m\in[k] }} \text{DP} \Bigl[ h-1, \; \{d''_{i<j}\}_{i,j\in[k]}, \; \{W''_m\}_{m=1}^k, \; \{U''_m\}_{m=1}^k \Bigr] + \sum_{i=1}^k \mathbbm{1}(x_i\neq x_j),
\end{split}
\end{equation*}
where $x_i$ represents the $i$-th bit of $x$, $ d''_{i,j} = \max\{0, d'_{i,j} - \mathbbm{1}(x_i\neq x_j)\} $, $  W''_m = W'_m - (w_h \cdot x_h) $ and $  U''_m = \max\{0, U'_m - (u_h \cdot x_h) \}$.
Here, the distance and the profit are ``clamped'' to zero once the needed distance or the profit has been achieved, meaning that from the next item onward, the program no longer tracks whether the distance or the profit has gone beyond the requirement—it is simply recorded as fully satisfied.
The constraint $W'_m \geq w_h \cdot x_m$ for all $m \in [k]$ ensures that we do not consider assignments that would exceed the weight capacity in any of the $k$ solutions. Such a case would result in a weight-infeasible assignment, where including the $h$-th item in the $m$-th solution causes its total weight to exceed the remaining capacity $W'_m$.

Also, \Cref{eq:knapsack-def} requires some error and base cases to be specified.
When $h = 0$, if $U'm > 0$ for some $m$, or if $d'_{i,j} > 0$ for some $i < j$, we define the DP value in \Cref{eq:knapsack-def} to be $-\infty$, to indicate a profit-infeasible or diversity-infeasible assignment,
and we let the DP value to be zero when $h = 0$ and $U'm = d'{i,j} = 0$ for all $m$ and $i < j$.

Note that the desired total of pairwise distances is then stored properly in the cell
\begin{equation}\label{eq: diverse-knapsack-final-cell}
  \text{DP}[ n, \{d_{min}\}_{1\leq i< j \leq k}, \{W\}_{m=1}^k, \{U\}_{m=1}^k ]. 
\end{equation}
Since each required pairwise distance $d'_{i,j}$ can range from $0$ to $d_{min}$, each remaining capacity $W'_m$ from $0$ to $W$ and similarly for $U'_m$, there are at most $$ (d_{min}+1)^{\binom{k}{2}}(W+1)^{k}(U+1)^{k} $$ possible states for each $h \in \{0,\dots,n\}.$
Since at each state the algorithm considers $2^k$ ways of assigning the current item to the $k$ solutions and each such assignment requires $O(k^2)$ time to update the current state, computing \Cref{eq: diverse-knapsack-final-cell} takes
\[
O\Bigl((n+1) (d_{min}+1)^{\binom{k}{2}} (W+1)^k(U+1)^k2^k k^2 \Bigr),
\]
which can be simply written as $ (d_{min} + 1)^{O(k^2)}\,W^{O(k)}\,U^{O(k)}\,n $.

Finally, we mention that constructing the actual solutions can be done by bookkeeping. In each DP cell, store the chosen bit-vector $x^*$ that yielded the maximum diversity. When we reach the final cell, we follow its stored pointer back to the cell for $h-1$, etc. Each time we see $x^*_m=1$, it means item $h$ was included in solution $S_m$. Tracing back from $h=n$ down to $h=1$ recovers all choices, giving the final subsets $S_1,\dots,S_k$. This completes the construction of $k$ solutions with the desired constraints.
\end{proof}

\begin{restatable}{lemma}{ResDiverseKnapsackBetak}\label{lem: diverse-knapsack-beta-k}
    Let $\{w_i\}_{i=1}^n$ and $\{u_i\}_{i=1}^n$ be strictly positive weights and profits of $n$ items in a classical knapsack problem with capacity $W$.
    Let $k\geq1$ and $U>0$ be given integers, and let $ u_{\max} = \max_{i\in[n]} u_i $.
    Then, there exists a $\betak$-approximate algorithm that runs in time $ O(n^4 \,k^4 \,\log(k)\, u_{\max}) $ and returns $k$ feasible solutions $S_1,\ldots,S_k$ with $ u(S_i) \geq U $ for every $i\in[k]$.
\end{restatable}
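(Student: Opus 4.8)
The plan is to obtain a $\betak$-approximation for \textsc{Diverse Knapsack} (with a profit floor $U$) by instantiating the local-search-via-$k$-best-enumeration machinery of \Cref{cor:framework}. By the discussion following \Cref{alg:diverse}, it suffices to supply a budget-constrained $k$-best enumeration (BCBE) subroutine for knapsack: given the swap objective $r(\cdot)$ defined in \Cref{eq: defi-r}, we must enumerate the $(k+1)$ feasible solutions of highest $r$-value among all $c$-optimal (here: profit-at-least-$U$) packings. Running this BCBE inside the $O(k^2\log k)$ swaps of Cevallos et al.'s local search yields diversity at least $\tbetak$ of optimal, which is exactly the claimed approximation ratio.

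First I would build the BCBE algorithm by a dynamic program over items that mirrors \Cref{thm: top-k-is-vc} but tailored to the one-dimensional knapsack structure. The natural state is indexed by $(h, u')$ where $h\in\{0,\dots,n\}$ is the number of items considered and $u'$ is the accumulated profit (capped at $U$, since any excess profit is immaterial to the feasibility constraint $u(S)\ge U$); for each state I keep the list of the $k$ best $r$-scores achievable, together with the corresponding weights needed to certify capacity feasibility $w(S)\le W$. The transition at item $h$ either includes it (adding $u_h$ to profit, $w_h$ to weight, $r(h)$ to the score) or excludes it, and at each state I merge the two incoming $k$-best lists and retain the top $k$. Because the profit coordinate ranges over $0,\dots,U$ and we must also track weight to enforce $w(S)\le W$, the table has $O(n\cdot U)$ profit-states, each holding $k$ entries, and the merge at each step costs $O(k\log k)$; bookkeeping pointers allows reconstruction of the actual sets. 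Since $U\le n\,u_{\max}$ in any regime of interest, this gives a BCBE running time of roughly $O(n^2\,k\,\log k\,u_{\max})$.

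Plugging this BCBE into \Cref{cor:framework}, the overall cost is $O(f(n,k)\cdot k^2\log k)$, where $f$ is the BCBE time; multiplying the per-call cost by the $O(k^2\log k)$ local-search iterations yields the stated $O(n^4\,k^4\,\log(k)\,u_{\max})$ bound (the extra polynomial factors absorbing the $n$-fold and $k$-fold sweeps and the repeated enumeration). The resulting collection $S_1,\dots,S_k$ is feasible, each satisfies $u(S_i)\ge U$, and its diversity is within the $\betak$ factor of the optimum over profit-$\ge U$ packings, as required.

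The main obstacle I anticipate is correctly handling the weight-feasibility constraint \emph{simultaneously} with ranking by the $r$-objective. Unlike the MWIS setting, the BCBE here must rank solutions by one objective ($r$) while respecting a separate hard constraint (capacity $W$) and a separate floor (profit $U$). The clean way to manage this is to make profit $u'$ the DP coordinate (so the floor $u(S)\ge U$ becomes a terminal condition $u'\ge U$ at $h=n$), optimize $r$ in the cell values, and treat weight as an additional quantity on which we only keep, among packings of equal profit and comparable $r$-rank, the ones that stay within $W$ — discarding any partial solution whose weight already exceeds $W$. Verifying that keeping exactly the $k$ best $r$-scores per profit-state never discards a solution needed for the true global $(k+1)$-best list, despite the capacity pruning, is the delicate correctness point; it follows because feasibility is monotone and the $r$-objective is additive, so an optimal-$r$ extension of a dominated prefix can always be matched by extending the retained prefix of equal profit and no larger weight.
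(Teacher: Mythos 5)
Your overall plan---plugging a knapsack BCBE routine into the local-search framework of \Cref{cor:framework}---is exactly the paper's route, but your BCBE dynamic program has a genuine correctness gap. You index states by (item, accumulated profit) and keep, in each cell, the $k$ best partial solutions \emph{ranked by the score $r$}, with weight merely recorded on the side and prefixes discarded only once their weight already exceeds $W$. This pruning is unsound: among capacity-feasible prefixes with the same profit, the $k$ highest-$r$ ones may all have weight close to $W$, while a prefix with slightly lower score but much smaller weight is the only one that can still be extended (by the items required to reach the profit floor $U$) without violating the capacity. Concretely, with $k=1$, if state $(h,u')$ holds prefix $A$ with score $10$ and weight $W$, and prefix $B$ with score $9$ and weight $1$, and every completion to profit $\geq U$ requires at least one more item of positive weight, then the unique feasible optimum extends $B$, which your table has already thrown away. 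Your exchange argument (``an optimal-$r$ extension of a dominated prefix can always be matched by extending the retained prefix of equal profit and no larger weight'') silently assumes the retained prefixes dominate in \emph{both} coordinates, but they are retained by score only; score and weight form a two-dimensional Pareto frontier that top-$k$-by-score truncation does not preserve.

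The paper avoids this by swapping the roles of score and weight: its DP is indexed by $(h,U',R')$---item, profit floor, and \emph{score as a coordinate}---and each cell stores the $k$ \emph{smallest weights} of subsets of $\{1,\dots,h\}$ with profit at least $U'$ and score exactly $R'$ (see \Cref{eq: diverse_knapsack_beta_k_rec_rel}). Since the score is integral and bounded in magnitude by $nk$, this adds only a factor of $O(nk)$ to the table size, and within a fixed (profit, score) class, weight is the sole remaining quantity, so keeping the $k$ minima is safe: any feasible solution in that class has weight no smaller than the stored ones. Capacity feasibility is then enforced at the very end by scanning $R'$ from $nk$ down to $0$ and collecting solutions of weight at most $W$. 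If you want to salvage your formulation, you must either promote the score to a DP dimension exactly as the paper does, or maintain per-cell Pareto frontiers in (score, weight), which is what the single-list-of-$k$ design fails to do.
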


To achieve the $\betak$-approximation, we introduce a \emph{score} $r(\cdot)$ for sets of items and develop a $k$-best enumeration w.r.t the rarity score of the optimal solutions. In other words, $S_1,\ldots,S_k$ is a $k$-best enumeration of the optimal solutions if each $ S_i $ is an optimal solution and $ r(S_1) \geq \cdots \geq r(S_k) \geq r(S)$ for any optimal solution $S$.
Recall that given $k$ optimal solutions $ S_1,\ldots,S_k $, setting $ r(e) = \sum_{i=1}^k (\mathbbm{1}(e\in S_i) - \mathbbm{1}(e \not \in S_i)) $ for any item $e$ together with our framework in \Cref{thm:framework} guarantees the desired approximation factor.
\begin{proof}
    Given $R$, we first develop an algorithm that returns a feasible solution with profit at least $U$ and rarity score at least $R$.
    Define $\text{DP}[h,U',R']$ as the smallest possible total weight of a subset of items in $\{1,\ldots,h\}$ whose total profit is at least $U'$ and whose total rarity score is exactly $R'$.
    Then, $\text{DP}[h,U',R']$ can be obtained by the following recurrence relation:
    \begin{equation}\label{eq: diverse_knapsack_beta_k_rec_rel}
        \text{DP}[h,U',R'] = \min\{\text{DP}[h-1,U',R'], \text{DP}[h-1,\max\{0,U'-u_h\},\max\{0,R'-r_h\}] + w_h \}.
    \end{equation}
    Since the profit of a feasible solution is at most $n\cdot u_{max}$, a feasible solution with rarity score $R$ can be found in time $O(n^2u_{max}R)$. Furthermore, since any item can be contained in any set of items at most once, $R$ can be at most $nk$.
    Therefore, a feasible solution with the largest rarity score can be found in time $ O(k\,n^3\,u_{max}) $.

    Now, we illustrate the $k$-best enumeration procedure with respect to the rarity score.
    Define $\text{DP}[h,U',R']$ as the $k$ smallest possible total weights of a subset of items in $\{1,\ldots,h\}$ whose total profit is at least $U'$ and whose total rarity score is exactly $R'$. If there are fewer than $k$ such subsets, the rest of them are considered $\infty$. Then, merging the two subproblems can be done in $O(k)$ time. Once the entire dynamic programming table has been filled, start scanning from $ R=nk $ down to $R=0$ while collecting the weights no greater than $W$. Note that the running time of this $k$-best enumeration procedure is $O(n^3\,k^2\,u_{max})$.

    Finally, by incorporating this $k$-best enumeration procedure in our framework in \Cref{thm:framework}, we have the desired $\tbetak$-approximate algorithm with the desired overall time bound.    
\end{proof}
\begin{restatable}{lemma}{ResKnapsackRounding}\label{lem: knapsack-scaling}
    Consider the classical knapsack problem with strictly positive item weights
    $\{w_h\}_{h=1}^n$, item profits $\{u_h\}_{h=1}^n$ and with capacity $W$.
    Let $ \delta,\gamma\in(0,1) $, and let $S$ be any feasible solution.
    Define
    \[
        \tilde{U} = \left\lceil \frac{1-\delta}{\delta} \cdot n \right\rceil,\quad
        \tilde{u}_h = \left\lfloor \frac{\tilde{U}+n}{c\cdot u(S)}\cdot u_h \right\rfloor,\quad
        \tilde{W} = \left\lfloor \frac{1+\gamma}{\gamma} \cdot n \right\rfloor,\quad\text{and}\quad
        \tilde{w}_h = \left\lceil \frac{\tilde{W}-n}{w(S)}\cdot w_h \right\rceil.
    \]
    Then, the following hold:
    \begin{enumerate}[leftmargin=*,label=\normalfont(\arabic*)]
        \item If $X$ is any $c$-optimal solution, then $ \tilde{u}(X) \geq \tilde{U} $ and $ \tilde{w}(X) \leq \tilde{W} $.
        \item If $Y$ is any subset of items such that $ \tilde{u}(Y) \geq \tilde{U} $ and $\tilde{w}(Y) \leq \tilde{W}$, then $u(Y) \geq c(1-\delta)u(S) $ and $w(Y) \leq (1+\gamma)W$.
    \end{enumerate}
\end{restatable}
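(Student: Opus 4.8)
The plan is to prove the four inequalities (two about profits, two about weights) independently, since each collapses to a single linear manipulation that converts a rounded sum back to the original sum using only the elementary per-item bounds $x-1 < \lfloor x\rfloor \le x$ and $x \le \lceil x\rceil < x+1$. The one auxiliary fact I will use repeatedly is that every solution contains at most $n$ items, so summing a per-item rounding error of magnitude less than $1$ over a solution costs at most an additive $n$. The conceptual content of the lemma is entirely that the thresholds $\tilde U$ and $\tilde W$ are of order $\frac{1-\delta}{\delta}n$ and $\frac{1+\gamma}{\gamma}n$ precisely so that this additive $n$ is absorbed into the multiplicative slacks $1-\delta$ and $1+\gamma$.

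For the profit half, claim (1) should follow from $\tilde u(X) = \sum_{h\in X}\lfloor \frac{\tilde U+n}{c\,u(S)}u_h\rfloor \ge \frac{\tilde U+n}{c\,u(S)}u(X) - |X| \ge (\tilde U+n) - n = \tilde U$, using $|X|\le n$ and the $c$-optimality bound $u(X)\ge c\,u(S)$. Claim (2) runs the same computation with the opposite rounding bound: $\tilde U \le \tilde u(Y) \le \frac{\tilde U+n}{c\,u(S)}u(Y)$ yields $u(Y)\ge c\,u(S)\,\frac{\tilde U}{\tilde U+n}$, and the only nontrivial step is the threshold inequality $\frac{\tilde U}{\tilde U+n}\ge 1-\delta$, which I would reduce algebraically to the linear condition $\tilde U \ge \frac{1-\delta}{\delta}n$ and then read off from the definition $\tilde U = \lceil\frac{1-\delta}{\delta}n\rceil$.

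The weight half is the mirror image, with floors and ceilings interchanged. For (1) I would bound $\tilde w(X) = \sum_{h\in X}\lceil \frac{\tilde W-n}{w(S)}w_h\rceil \le \frac{\tilde W-n}{w(S)}w(X) + |X|$ and try to collapse the right-hand side to $\tilde W$ using $|X|\le n$ and the feasibility of $X$; for (2) the reverse bound $\lceil\cdot\rceil\ge\cdot$ gives $w(Y) \le w(S)\,\frac{\tilde W}{\tilde W-n}$, whose target is the threshold inequality $\frac{\tilde W}{\tilde W-n}\le 1+\gamma$, i.e.\ $\tilde W \ge \frac{1+\gamma}{\gamma}n$, again a linear condition to be checked against the definition of $\tilde W$ and the bound $w(S)\le W$.

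Finally, the main obstacle I anticipate is bookkeeping rather than depth: in each of the four sub-claims the rounding must be applied in the direction that makes the accumulated additive error of size at most $n$ point the favorable way (floor to get a lower bound on $\tilde u(X)$ and an upper bound on $u(Y)$, ceiling symmetrically for the weights), and I must verify that the two threshold inequalities $\frac{\tilde U}{\tilde U+n}\ge 1-\delta$ and $\frac{\tilde W}{\tilde W-n}\le 1+\gamma$ really hold under the exact ceiling/floor conventions fixed in the definitions of $\tilde U$ and $\tilde W$. The one place I would check most carefully is the weight bound in (1): there the normalizing constant $w(S)$ in the denominator of $\tilde w_h$ must be reconciled with the feasibility bound $w(X)\le W$, so I would pin down precisely how $w(X)$ is controlled by $w(S)$ (or by $W$) before closing that estimate, since this is the only step that touches the actual knapsack structure rather than pure rounding arithmetic.
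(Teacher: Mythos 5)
Your profit half is exactly the paper's proof: the paper establishes claim (1) via $\tilde u(X)\ge \frac{\tilde U+n}{c\,u(S)}u(X)-|X|\ge \tilde U$ and claim (2) via $u(Y)\ge \frac{\tilde U}{\tilde U+n}\,c\,u(S)\ge (1-\delta)c\,u(S)$, where the last step uses $\tilde U=\bigl\lceil \frac{1-\delta}{\delta}n\bigr\rceil\ge \frac{1-\delta}{\delta}n$ and the monotonicity of $x\mapsto x/(x+n)$ --- precisely your ``threshold inequality'' route. For the weight half the paper gives no argument at all (it says only that ``the argument for weights is analogous''), and the two places you single out for careful checking are not mere bookkeeping: they are genuine defects in the statement as written, not gaps in your plan. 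First, with $w(S)$ in the denominator of $\tilde w_h$, claim (1) would require $w(X)\le w(S)$, which is false in general (a $c$-optimal $X$ may be much heavier than an arbitrary feasible $S$); the denominator must be the capacity $W$, so that feasibility $w(X)\le W$ closes the estimate $\tilde w(X)\le \frac{\tilde W-n}{W}w(X)+|X|\le (\tilde W-n)+n=\tilde W$. Second, your weight threshold inequality $\frac{\tilde W}{\tilde W-n}\le 1+\gamma$ is equivalent to $\tilde W\ge \frac{1+\gamma}{\gamma}n$, but the paper defines $\tilde W$ with a \emph{floor}, so this fails whenever $\frac{1+\gamma}{\gamma}n$ is not an integer; $\tilde W$ should be rounded up, mirroring $\tilde U$. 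With these two corrections (denominator $W$ and a ceiling in $\tilde W$), your mirrored argument closes and is evidently what the authors intended, since the theorem that invokes this lemma needs both that all feasible $c$-optimal solutions survive the scaling and that scaled-feasible solutions have weight at most $(1+\gamma)W$. In short: same approach as the paper on the half the paper actually proves, and your flagged obstacles correctly locate errors in the other half.
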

\begin{proof} We give the proof for profits; the argument for weights is analogous.
Let $X$ be any $c$-optimal solution. Then, $u(X) \geq c\cdot u(S)$, thus
        \[
        \begin{split}
            \tilde{u}(X)
                &= \sum_{h\in X} \tilde{u}_h
                = \sum_{h\in X} \left\lfloor \frac{\tilde{U}+n}{c\cdot u(S)} \cdot u_h \right\rfloor \\
                &\geq \sum_{h\in X} \left(\frac{\tilde{U}+n}{c\cdot u(S)} \cdot u_h -1\right)
                = \frac{\tilde{U}+n}{c\cdot u(S)} \left(\sum_{h\in X} u_h \right) - |X| \\
                &\geq \frac{\tilde{U}+n}{c\cdot u(S)} \cdot u(X) - n \geq \frac{\tilde{U}+n}{c\cdot u(S)} \cdot c\cdot u(S) - n = \tilde{U}.
        \end{split}
        \]
        \item Let $Y$ be any subset of items such that $ \tilde{u}(Y) \geq \tilde{U} $. Then,
        \[
            \frac{\tilde{U}+n}{c\cdot u(S)} \cdot u(Y) = \sum_{h\in Y} \frac{\tilde{U}+n}{c\cdot u(S)}\cdot u_h \geq \sum_{h\in Y} \left\lfloor\frac{\tilde{U}+n}{c\cdot u(S)}\cdot u_h \right\rfloor = \tilde{u}(Y) \geq \tilde{U},
        \]
        and from this, it follows that
        \[
        \begin{split}
            u(Y)
                \geq \frac{c \cdot u(S)}{\tilde{U}+n} \cdot \tilde{U}
                = \frac{\tilde{U}}{\tilde{U}+n}\cdot c\cdot u(S)
                &= \frac{ \left\lceil \frac{(1-\delta)n}{\delta} \right\rceil }{ \left\lceil\frac{(1-\delta)n}{\delta}\right\rceil + n }\cdot c\cdot u(S) \\
                &\geq \frac{ \frac{(1-\delta)n}{\delta} }{ \frac{(1-\delta)n}{\delta} + n } \cdot c \cdot u(S)
                =c(1-\delta)u(S).\qedhere
        \end{split}
        \]
\end{proof}

We are now ready to prove \Cref{thm: diverse-knapsack}.
\begin{proof}[Proof of \Cref{thm: diverse-knapsack}]
Let $\{w_h\}_{h=1}^n$ and $\{u_h\}_{h=1}^n$ be strictly positive item weights and profits, respectively, and let $W$ be the knapsack capacity. 
Suppose we are also given an integer $k > 1$ and a parameter $c \in (0,1)$. Let $\delta,\varepsilon,\gamma \in (0,1)$ be additional parameters. Furthermore, let $S$ be a $(1-\delta)$-approximate solution to the single-knapsack instance, which can be found without increasing the overall running time of our algorithm~\cite{vazirani2001approximation}.

First, we adjust the profits and weights as described in \Cref{lem: knapsack-scaling}.
With these adjusted values, we then proceed as follows: if $k \leq \frac{2}{\varepsilon} $, we run the algorithm from \Cref{lem: diverse-knapsack-exact}, where $d_{min} =1$; otherwise, we run the algorithm from \Cref{lem: diverse-knapsack-beta-k}.

It is easy to verify that both algorithms, when applied to the adjusted values, run in time polynomial in the size of the original input. More specifically, the algorithm from \Cref{lem: diverse-knapsack-exact} runs in time $ n^{O(\varepsilon^{-1})} \cdot 2^{O(\varepsilon^{-2})}(\delta\gamma)^{O(-\varepsilon^{-1})} $, and the algorithm from \cref{lem: diverse-knapsack-beta-k} runs in time $ O(\delta^{-1}n^4k^4\log k) $. Combining these two, we obtain the desired time bound.

We claim that the diversity of the output solutions is approximately optimal. The algorithm from \Cref{lem: diverse-knapsack-exact} returns solutions with maximum diversity among all solutions each with adjusted profit at least $\tilde{U}$ and adjusted weight at most $\tilde{W}$.
By \Cref{lem: knapsack-scaling}(1), every $c$-optimal solution satisfies these conditions. Consequently, the diversity of the solutions found by this algorithm is no less than the optimal diversity of any set of $k$ $c$-optimal solutions with the original input values. 
On the other hand, the algorithm from \Cref{lem: diverse-knapsack-beta-k} guarantees a diversity of at least $\bigl(1 - {2}/{k}\bigr)$ times the optimal. Since we run this algorithm only when $k > {2}/{\varepsilon}$, the resulting diversity is at least $\bigl(1 - \varepsilon\bigr)$ times the optimal.

We now claim that the profits of the output solutions remain approximately $c$-optimal.
This can be shown easily by \Cref{lem: knapsack-scaling}(2).
Since $S$ is already $(1-\delta)$ approximate, the quality of each of the output solutions from the algorithm \Cref{lem: diverse-knapsack-exact} is in fact $ (1-\delta)^2 $-approximate. We can achieve the desired approximation factor without affecting the desired time complexity by setting $ \delta \leftarrow \delta/2 $.

When the instance has less than $k$ many $(1-\delta)c$-optimal solutions each with weight at most $(1+\gamma)W$, one can obtain a multi-set of solutions as follows. When $ k \leq 2/\varepsilon $, use the same dynamic programming algorithm with $ d_{min} = 0 $. When $ k > 2/\varepsilon $, use the farthest insertion in \Cref{lem: diverse-knapsack-beta-k} (the second paragraph of the proof).
Clearly, both modifications do not increase the corresponding running times, thus we can obtain a multi-set of solutions within the desired time bound.
\end{proof}
\clearpage

\section{Diverse Rectangle Packing Problem} \label{sec: rectangle-packing}

As a partial application of our technique, we show how to generate diverse solutions for a geometric variant of the knapsack problem - the rectangle packing problem~\cite{CoffmanGJT80}.
In this problem, we are given a set of axis-aligned rectangles and a square knapsack, and the goal is to pack as many rectangles as possible into the knapsack.

We first define Diverse Rectangle Packing problem formally as follows.

\begin{definition}[Diverse Rectangle Packing]\label{def: diverse-rectangle-packing}
    Let $I=[n]$ be a set of $n$ items, where each $i\in I$ denotes an axis aligned open rectangle $ (0,w(i)) \times (0,h(i)) $ in the plane and has an associated profit $u(i)$, and $N$ be the length of a side of an axis-aligned square knapsack we call $ K = [0,N]\times[0,N] $. Let $c$ be a niceness target, $\cF_c\subseteq 2^I$ be an implicitly given family of all feasible $c$-optimal solutions, and $k\geq1$ be the number of solutions to output. The diverse rectangle packing problem asks to find a collection $ \mathcal{S} = \{S_1,\dots,S_k\} \subseteq \cF_c $ that maximizes $\textstyle\sum_{i\neq j} |S_i\Delta S_j|$. If at least $k$ $c$-optimal solutions exist, $ \cS $ must be a set; otherwise $\mathcal{S}$ can be a multiset.
\end{definition}

In these results, we obtain solutions from more restricted spaces: packings in $\cF'_c$ show certain geometric pattern, known in the literature as \emph{Container-based} solutions and \emph{L\&C}-based solutions.
To properly define them, we need to recall the Next-Fit Decreasing-Height (NFDH) algorithm~\cite{CGJT80}, a classical routine to pack rectangles into a region that provides good density guarantees when the items are small compared to the region where they are packed. 

Suppose we are given a rectangular region $C$ of height $H$ and width $W$, and a set $I$ of rectangular items that we want to pack into the region.
The NFDH algorithm packs a subset $I'\subseteq I$ into the region as follows: 
It sorts the items $i \in I$ in non-increasing order of heights, being $i_1,\dots, i_n$ such order.
Then, the algorithm works in rounds $j\geq 1$, where at the beginning of round $j$, it is given an index $n(j)$ and a horizontal segment $L(j)$ going from the left to the right side of $C$. 
Initially $n(1)=1$, and $L(1)$ is the bottom side of $C$.
    In round $j$, the algorithm packs a maximal set of items $i_{n(j)},\dots, i_{n(j+1)-1}$ with the bottom side touching $L(j)$ one next to the other from left to right.
    The segment $L(j+1)$ is defined as the horizontal segment containing the top side of $i_{n(j)}$ and ranging from the left to the right side of $C$.
    The process halts at round $r$ when either all items have been packed or $i_{n(r+1)}$ does not fit above $i_{n(r)}$.
    The following is a classical result about NFDH~\cite{GGIHKW21}.

\begin{lemma}
    \label{lem:NFDH} 
    Let $C$ be a rectangular region of height $H$ and width $W$. 
    Assume that we have a set $I$ of rectangles such that, for some $\varepsilon \in (0,1)$, their widths are all at most $\varepsilon W$ and their heights are all at most $\varepsilon H$. 
    If the total area of the rectangles in $I$ is at most $(1-2\varepsilon)HW$, then $\normalfont\text{NFDH}$ packs $I$ completely into $C$. 
\end{lemma}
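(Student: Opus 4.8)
The plan is to argue by contradiction through a shelf-by-shelf area accounting, the classical way NFDH density bounds are established. Suppose NFDH fails to pack all of $I$; then it produces some shelves, say $1,\dots,r$, and halts because the first unpacked item $i_{n(r+1)}$ does not fit vertically above shelf $r$. I would denote by $h_j$ the height of the \emph{first} item placed on shelf $j$, which, since the items are processed in non-increasing order of height, is the tallest item on that shelf and hence the vertical extent of shelf $j$. The failure condition is precisely $\sum_{j=1}^{r} h_j + h_{r+1} > H$, where $h_{r+1}$ is the height of $i_{n(r+1)}$; equivalently $\sum_{j=1}^{r+1} h_j > H$.

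First I would establish a width bound: for every shelf $j \in \{1,\dots,r\}$ the total width of its items exceeds $(1-\varepsilon)W$. This holds because a shelf is closed only when the next item in the order fails to fit horizontally, and that item has width at most $\varepsilon W$ by hypothesis, so the occupied width must exceed $W - \varepsilon W$. For the last shelf $r$, the item forcing closure is $i_{n(r+1)}$ itself. Next I would establish a height bound linking consecutive shelves: since the items are sorted by non-increasing height and the first item of shelf $j+1$ comes after every item of shelf $j$ in this order, every item on shelf $j$ has height at least $h_{j+1}$, with $h_{r+1}$ playing this role for $j=r$. Combining the two bounds, the total area of the items on shelf $j$ is at least $(1-\varepsilon)W \cdot h_{j+1}$.

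Summing over $j = 1,\dots,r$ shows that the total packed area is at least $(1-\varepsilon)W \sum_{j=2}^{r+1} h_j = (1-\varepsilon)W\bigl(\sum_{j=1}^{r+1} h_j - h_1\bigr)$. Using the failure condition $\sum_{j=1}^{r+1} h_j > H$ together with $h_1 \le \varepsilon H$ (the tallest item has height at most $\varepsilon H$), I obtain $\sum_{j=2}^{r+1} h_j > (1-\varepsilon)H$, so the packed area strictly exceeds $(1-\varepsilon)^2 HW$. Since $(1-\varepsilon)^2 = 1 - 2\varepsilon + \varepsilon^2 > 1 - 2\varepsilon$, the packed area would exceed $(1-2\varepsilon)HW$, contradicting the assumption that the total area of $I$ is at most $(1-2\varepsilon)HW$ (the packed items form a subset of $I$, hence their area is no larger). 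This contradiction forces NFDH to pack all of $I$.

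I expect the main obstacle to be the bookkeeping at the boundary shelf: correctly arguing that the last shelf $r$ is also width-full (its closure witnessed by the unpacked item $i_{n(r+1)}$ rather than by the first item of a genuine next shelf), and that $h_{r+1}$ consistently plays the role of ``next shelf height'' in the telescoping sum. Getting the off-by-one indexing right so that the failure inequality $\sum_{j=1}^{r+1} h_j > H$ aligns exactly with the $\sum_{j=2}^{r+1} h_j$ appearing in the area lower bound is the one place where care is genuinely needed; everything else reduces to the routine slack $(1-\varepsilon)^2 > 1-2\varepsilon$.
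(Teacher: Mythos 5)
Your proof is correct: it is exactly the classical shelf-by-shelf area-accounting argument for NFDH — each closed shelf is width-full up to $\varepsilon W$, every item on shelf $j$ has height at least that of the next shelf's first item, the telescoping sum combined with the failure condition $\sum_{j=1}^{r+1} h_j > H$ and $h_1 \le \varepsilon H$ gives packed area exceeding $(1-\varepsilon)^2 HW > (1-2\varepsilon)HW$ — and your treatment of the boundary shelf $r$, closed horizontally and halted vertically by the same item $i_{n(r+1)}$, is the right reading of the algorithm as defined in the paper. Note that the paper itself gives no proof of \Cref{lem:NFDH} (it is cited as a classical result from~\cite{GGIHKW21}), so there is no in-paper argument to compare against; your reconstruction coincides with the standard proof from that literature.
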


We can now proceed with the definitions of container-based and $L\& C$-based packings.

\begin{definition} 
    Given an instance $I$ of two-dimensional Geometric Knapsack, a container-based packing for $I$ into the region $[0,N] \times [0,N]$ is a feasible solution for $I$ satisfying the following:
    \begin{enumerate}[leftmargin=*]
        \item The knapsack region $[0,N] \times [0,N]$ can be decomposed into at most $K_\varepsilon \in O_{\varepsilon}(1)$ rectangular subregions whose dimensions belong to a set that can be efficiently computed just by knowing the instance, such that each item in the solution belongs to one of the regions.
        \item Each subregion is either a horizontal container, where items are placed one on top of the other, or a vertical container, where items are placed one next to the other, or an area container, where items are placed by means of $\normalfont\text{NFDH}$, and they satisfy that their widths and heights are at most a factor $\varepsilon$ of the width and height of the container, respectively, and their total area is at most a fraction $1-2\varepsilon$ of the area of the subregion.
    \end{enumerate}
\end{definition}

\begin{definition} 
    Given an instance $I$ of two-dimensional Geometric Knapsack, a $L\& C$-based packing for $I$ into the region $[0,N] \times [0,N]$ is a feasible solution for $I$ satisfying the following:
    \begin{enumerate}[leftmargin=*]
        \item The knapsack region $[0,N] \times [0,N]$ can be decomposed into two subregions, where one of them is a rectangular subregion of width $W \le N$ and height $H\le N$ anchored at the top-right corner of the knapsack, and the other one is the complement (i.e., a $L$-shaped region). 
        The values of $H$ and $W$ belong to a set that can be efficiently computed just by knowing the instance.
        \item The rectangular subregion contains solely items of height and width at most some parameter $\ell$, which belongs to a set that can be efficiently computed just by knowing the instance, and the $L$-shaped region contains solely items whose longer side has length at least $\ell$.
        \item The rectangular subregion is a container-based packing, while the $L$-shaped region is an $L$-packing, meaning that items are partitioned into vertical and horizontal depending on their longer dimension, satisfying that the horizontal side of the $L$-shaped region has horizontal items placed one on top of other sorted non-increasingly by width, and the vertical side of the $L$-shaped region has vertical items placed one next to other sorted non-increasingly by height.
    \end{enumerate} 
\end{definition}

See \Cref{fig:combined-packing} for examples of container-based and $L\&C$-based packings. 

\begin{figure}[h]
    \centering
    \begin{subfigure}{0.6\textwidth}
        \centering
        \includegraphics[width=.9\textwidth]{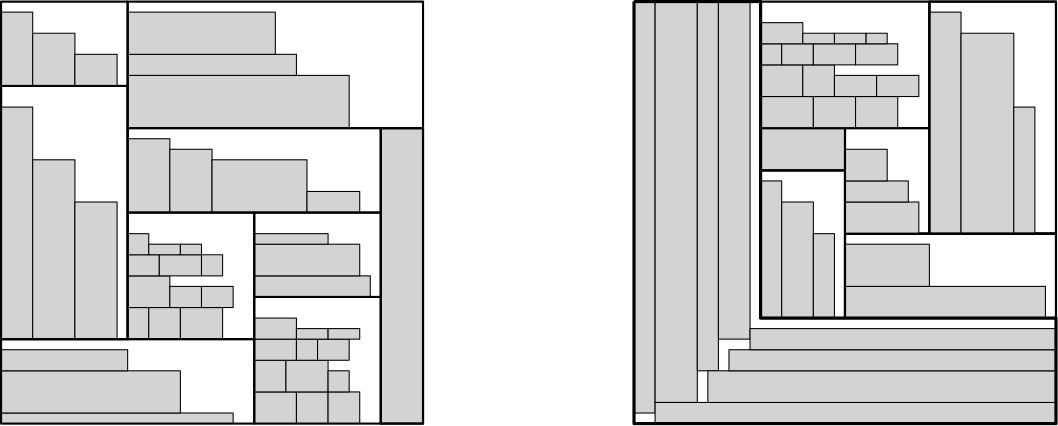}
    \end{subfigure}
    \begin{subfigure}{0.3\textwidth}
        \centering
        \includegraphics[width=.9\textwidth]{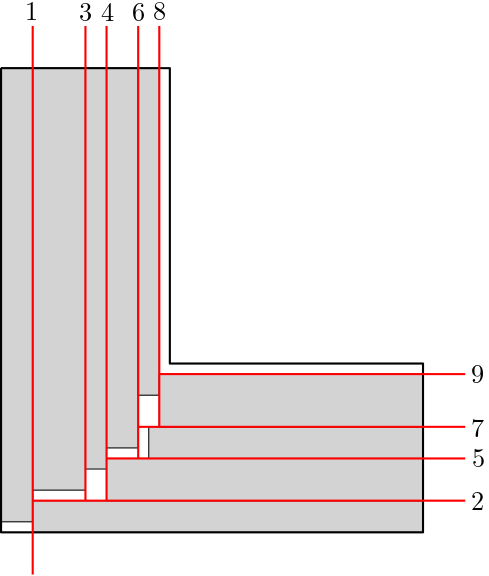}
    \end{subfigure}
    \caption{\small{Examples of a container-based solution (left), $L\&C$-based solution (middle), and an $L$-packing of rectangles for two-dimensional geometric knapsack (right). Red lines show a guillotine cutting sequence for the solution, where numbers show the order of the cuts.}}
    \label{fig:combined-packing}
\end{figure}

We now present our results for the Diverse Rectangle Packing problem.

\begin{theorem}[Diverse Rectangle Packing]
    \label{thm:2DK-div} 
    For Diverse Rectangle Packing, the following holds:
    \begin{enumerate}[leftmargin=*,label=\normalfont(\arabic*)]
        \item For any $\varepsilon>0$, there exists a $\text{poly}(N,n,k)$ time algorithm that computes $k$ different optimal $L\& C$-based solutions whose total diversity is at least $(1-\varepsilon)$ of the optimal diversity among optimal $L\& C$-based solutions.
        \item For every $\varepsilon>0$, there exists a $\text{poly}(n,k)$ time algorithm that computes $k$ different container-based solutions whose profit is at least $(1-\varepsilon)$ times the optimal one, its total diversity is at least $(1-\varepsilon)$ the optimal one among optimal container-based solutions, and they fit into a slightly enlarged knapsack $[0,(1+\varepsilon)N] \times [0,(1+\varepsilon)N]$.
    \end{enumerate}
\end{theorem}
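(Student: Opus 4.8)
The plan is to instantiate the framework of \Cref{thm:framework} with $\cF'_c$ taken to be exactly the structured family in question: optimal $L\&C$-based packings for part (1), and profit-optimal container-based packings into the enlarged knapsack $[0,(1+\varepsilon)N]\times[0,(1+\varepsilon)N]$ for part (2). Because in both cases the diversity benchmark is the optimum \emph{within the structured family}, I would set $\cF_c=\cF'_c$, so that Conditions \ref{framework-1} and \ref{framework-2} hold trivially (with $\delta=0$ and no diversity loss), exactly as in the $c=1$ special case discussed after \Cref{thm:framework}. For part (2) the profit guarantee and the $(1+\varepsilon)$ capacity violation are \emph{not} produced by the framework; they are inherited from the known structural theorems for two-dimensional knapsack (\Cref{lem:NFDH} and the container-based PTAS of~\cite{GGIHKW21}), which guarantee that some container-based packing into the enlarged knapsack has profit at least $(1-\varepsilon)\mathrm{OPT}$. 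Thus the optimal container-based profit, which our algorithm matches, is already within $(1-\varepsilon)$ of the global optimum. With these reductions in place, all that remains is to supply Conditions \ref{framework-kbest} and \ref{framework-exact} for each structured family, after which \Cref{thm:framework} yields the claimed $(1-\varepsilon)$ diversity approximation.

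The core of the argument is the $k$-BCBE algorithm (Condition \ref{framework-kbest}). I would first enumerate all admissible \emph{templates}: for container-based packings a template specifies the $O_\varepsilon(1)$ container rectangles, whose positions and dimensions are drawn from the efficiently computable set in the definition; for $L\&C$-based packings a template additionally fixes the split parameters $W,H,\ell$ of the $L$-shaped region. By the structural definitions there are only $\mathrm{poly}(n)$ such templates for part (2) and $\mathrm{poly}(N,n)$ for part (1). For a fixed template I would run a dynamic program that scans the items once and decides, for each item, into which container (if any) it is placed, maintaining a state recording (i) the consumed capacity profile of each container---a height-budget for horizontal containers, a width-budget for vertical containers, and a residual area for NFDH area containers---together with (ii) the accumulated profit and (iii) the accumulated score $r(\cdot)$. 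To keep the state space polynomial I would round item sizes into $O_\varepsilon(1)$ classes so that each container holds only $O_\varepsilon(1)$ large items, dispatching the remaining small items by NFDH via \Cref{lem:NFDH}; for the $L$-shaped region I would reuse the standard $L$-packing DP (items sorted non-increasingly by width and by height) augmented with the same profit/score bookkeeping. Exactly as in \Cref{thm: top-k-is-vc}, every DP cell stores the $k$ best partial solutions ranked by score for each attained profit value, and cells are merged by taking the $k$-best of the $O(k^2)$ pairwise sums; restricting the final read-off to the maximum attainable profit yields the $k$ best-scoring structured solutions of optimal profit.

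For Condition \ref{framework-exact}, which is needed only when $k<4/\varepsilon$, I would run an exact diverse DP in the spirit of \Cref{lem: diverse-knapsack-exact}: fix a template and maintain $k$ partial packings \emph{simultaneously}, the state recording for each packing its per-container capacity profile together with the $\binom{k}{2}$ pairwise symmetric-difference counts accumulated so far, and the objective maximized being $\SumSD[\cS]$. Since $k=O(1/\varepsilon)$ here, the $2^k$ branching per item and the polynomially bounded capacity and distance coordinates keep the running time polynomial; taking the best over all templates gives an optimally diverse collection inside the structured family. Feeding Conditions \ref{framework-kbest} and \ref{framework-exact} into \Cref{thm:framework} then gives a $(1-\varepsilon)$-\apx for diversity, in $\mathrm{poly}(N,n,k)$ time for part (1) and $\mathrm{poly}(n,k)$ time for part (2), with the profit and $(1+\varepsilon)$-capacity guarantees of part (2) supplied by the structural theorem.

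The main obstacle is the item-to-container assignment inside the $k$-BCBE DP: assigning rectangles to containers to maximize profit is, for unrestricted sizes, as hard as multiple knapsack, so the polynomial bound on the state space rests entirely on the geometric structure---size rounding into $O_\varepsilon(1)$ classes, the $O_\varepsilon(1)$ bound on the number of containers, and the use of \Cref{lem:NFDH} to handle small items without tracking them individually. The delicate point is that the placement of small items is \emph{determined} by NFDH rather than chosen, yet diversity is measured over the full item set; I would therefore track the \emph{set} of small items routed to each area container (through their aggregate profit and score contributions) so that differences among the $k$ solutions in their small-item content are still correctly counted in $\SumSD[\cS]$, and I would verify that the $(1-2\varepsilon)$ area slack in \Cref{lem:NFDH} absorbs the rounding so that every enumerated assignment is genuinely feasible and, for part (2), fits within the enlarged knapsack.
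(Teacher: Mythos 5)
Your proposal is correct and follows essentially the same route as the paper: instantiate \Cref{thm:framework} directly on the structured families (making Conditions \ref{framework-1} and \ref{framework-2} trivial), obtain the $k$-BCBE algorithm by adding a score dimension to the container-assignment (GAP-style) DP and to the $L$-packing DP, inherit part (2)'s profit and enlarged-knapsack guarantees from the structural results with $(1+\varepsilon)$-augmented bins, and verify Condition \ref{framework-exact} for $k<4/\varepsilon$ via a simultaneous diverse DP as in \Cref{lem: diverse-knapsack-exact}. The only differences are minor: the paper performs $k$-best enumeration by Lawler's branching procedure whereas you store $k$-best lists inside DP cells (both are standard), and your cells should be indexed by the integer score (bounded by $nk$) with best profits stored, rather than organized ``for each attained profit value,'' since profits need not be polynomially bounded---which is exactly how \Cref{thm: top-k-is-vc}, the result you cite, arranges its table.
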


The main idea we exploit in the proof of Theorem~\ref{thm:2DK-div} is that well-structured solutions, such as container-based and $L\&C$-based ones (see \Cref{fig:combined-packing}), can be computed via dynamic programs that incrementally incorporate items to the solutions being constructed. 
Hence, again it is possible to derive $k$-best enumeration procedures for the corresponding budget-constrained versions by augmenting the DP table and using Lawler's approach~\cite{lawler1972procedure}, and consequently apply Theorem~\ref{thm:framework}.

Indeed, consider first the problem of computing the optimal container-based packing for a given instance.
Roughly speaking, the algorithm first guesses the number, sizes, and types of the containers that will define the solution efficiently and then reduces the problem to a \emph{Generalized Assignment problem} (GAP) instance with a constant number of bins. 
In GAP, we are given a set of $t$ bins with capacity constraints and a set of $n$ items with a possibly different size and profit for each bin, and the goal is to pack a maximum profit subset of items into the bins.
Let us assume that if item $i$ is packed in bin $j$, then it requires size $s_{ij}$ and profit $p_{ij}$. A well-known result for GAP states that if $t$ is constant, then GAP can be solved exactly in pseudopolynomial time and can be solved in polynomial time if we are allowed to enlarge the bins by a factor of $(1+\varepsilon)$. 
This is encapsulated in the following lemma (see, e.g.,~\cite{GGIHKW21}).

\begin{lemma}
    \label{lem:GAP} 
    There is a $O(n C_{\max}^t)$-time algorithm for $\normalfont\text{GAP}$ with $t$ bins, where $C_{\max}$ is the maximum capacity among the bins.
    Furthermore, there is a $O\left((2/\varepsilon)^tn^{t+1}\right)$ time algorithm for $\normalfont\text{GAP}$ with $t$ bins, which returns a solution with profit at least $opt$ if we are allowed to augment the bin capacities by a $(1 + \varepsilon)$-factor, for any fixed $\varepsilon > 0$. 
\end{lemma}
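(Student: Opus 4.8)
The plan is to prove both parts by the same dynamic program over the bins' consumed capacities, getting the pseudopolynomial bound directly and the polynomial bound through a size-rounding (crucially, \emph{not} profit-rounding) argument.

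For the first part, I would process the $n$ items in an arbitrary fixed order and maintain a table indexed by the item index $h\in\{0,\dots,n\}$ together with a vector $(c_1,\dots,c_t)$ recording how much capacity has been consumed in each of the $t$ bins so far; the stored value is the maximum total profit achievable using items $\{1,\dots,h\}$ with exactly those consumptions. The transition for item $h$ branches over $t+1$ options: discard $h$, or assign it to some bin $j$ whenever $c_j+s_{hj}\le C_j$, in which case the profit increases by $p_{hj}$ and $c_j$ by $s_{hj}$. There are at most $(C_{\max}+1)^t$ capacity vectors per layer and each of the $O(n)$ layers costs $O(t)$ per state, so the whole table is filled in $O(n\,t\,C_{\max}^t)=O(n\,C_{\max}^t)$ time for constant $t$; the optimum is read off the last layer and the assignment is recovered by standard backtracking.

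For the second part, I would discretize each bin separately. Fix a granularity $g_j:=\varepsilon C_j/n$ for bin $j$ and replace each size $s_{ij}$ by the rounded value $\tilde s_{ij}:=\lfloor s_{ij}/g_j\rfloor$ measured in units of $g_j$. Running the DP above on these rounded sizes, each bin now has at most $\lceil C_j/g_j\rceil=\lceil n/\varepsilon\rceil\le 2n/\varepsilon$ distinct usage levels, so there are at most $(2n/\varepsilon)^t$ states per layer and the total time is $O((2/\varepsilon)^t n^{t+1})$ for constant $t$. Two observations yield the guarantee. Since $\tilde s_{ij}\le s_{ij}$ for every $i,j$, any solution feasible for the original instance remains feasible for the rounded one with unchanged profit; hence the optimal original set is still available to the DP and the returned profit is at least $opt$. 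Conversely, translating a rounded-feasible solution back to true sizes undercounts each item in bin $j$ by less than $g_j$, so with at most $n$ items per bin the total undercount is under $n\,g_j=\varepsilon C_j$; thus the true sizes fit into a bin of capacity $(1+\varepsilon)C_j$, which is exactly the claimed $(1+\varepsilon)$-augmentation.

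The step demanding the most care is this profit guarantee: the lemma promises profit at least $opt$, not merely $(1-\varepsilon)\,opt$, and this is precisely why one must round \emph{sizes downward} rather than rounding profits. Rounding sizes down can only relax constraints, so no optimal item set is ever destroyed, and the entire error is absorbed into the capacity budget via resource augmentation rather than into the objective. I would also verify that assigning per-bin granularities $g_j$ is legitimate even though one item may have different sizes (and roundings) in different bins: this causes no inconsistency because the DP commits item $h$ to a single bin in each transition and uses only that bin's rounded size $\tilde s_{hj}$.
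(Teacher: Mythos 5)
Your proposal is correct and follows essentially the same route as the paper, which states this lemma as a known result (citing the geometric knapsack literature), exhibits the very same DP $P[i,c_1,\dots,c_t]$ over per-bin consumed capacities, and defers the polynomial-time version to ``common rounding techniques.'' Your write-up simply fills in those details—the downward size-rounding to multiples of $\varepsilon C_j/n$, the preservation of every originally feasible solution (hence profit at least $opt$), and the absorption of the rounding error into the $(1+\varepsilon)$ capacity augmentation—exactly as intended.
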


For our purposes, it is important to mention that the pseudopolynomial time algorithm from Lemma~\ref{lem:GAP} is a dynamic program that computes cells of the form $P[i,c_1,c_2,\dots,c_t]$ that stores the maximum profit achievable using items $\{1,\dots,i\}$ and capacity at most $c_1$ from the first bin, at most $c_2$ from the second bin, and so on. 
This can be computed via the following scheme: 
\[P[i,c_1,\dots,c_t] = \max\{P[i-1,c_1,\dots,c_t],\max_j\{P[i-1,\dots,c_j-s_{ij},\dots] + p_{ij}\}\}.\] 
Using common rounding techniques, it is possible to turn the running time of the algorithm into polynomial at the expense of violating the capacities by a negligible factor. 
Thus, we can compute the best container-based packing by defining one bin per container, whose capacity is the height of the region if it is a horizontal container, the width of the region if it is a vertical container, and $1-2\varepsilon$ times the area of the region if it is an area container; 
profits of items remain the same, and the size of an item is its height if the bin corresponds to a horizontal container where it fits, its width if the bin corresponds to a vertical container where it fits, or its area if the bin corresponds to an area container and the item is small enough. 
The outcome of the previous DP, together with NFDH, provides a container-based packing for the selected items.

Consider now the problem of computing the optimal $L\&C$-based packing. 
This problem is decomposed into two parallel phases: one involving the computation of a container-based packing and one involving the computation of a $L$-packing. 
For the second one, there is also a dynamic program that computes the best solution in time $\poly(N,n)$ as the following lemma states.

\begin{lemma}[\hspace{-.1mm}\cite{GGIHKW21}] 
    There exists an algorithm for computing the optimal $L$-packing in time $O(n^2N^2)$. 
\end{lemma}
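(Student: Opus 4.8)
The plan is to exploit the rigid combinatorial structure of $L$-packings and reduce the optimization to a coordinate-indexed dynamic program; unlike the container-based case, no area-packing routine such as \Cref{lem:NFDH} is needed here. First I would fix coordinates so that the removed rectangle sits at the top-right corner $[N-W,N]\times[N-H,N]$, making the $L$-shaped region the union of a bottom horizontal arm and a left vertical arm that share the corner square $[0,N-W]\times[0,N-H]$. By the definition of an $L$-packing, the horizontal items (longer side horizontal) are stacked bottom-to-top sorted by non-increasing width along the left boundary, while the vertical items are placed left-to-right sorted by non-increasing height along the bottom boundary; the sole nontrivial feasibility requirement is that no horizontal item overlaps a vertical item inside the shared corner square.

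The key structural claim I would establish is that, thanks to the two sortings, every feasible $L$-packing can be normalized so that the interface between the horizontal-item region and the vertical-item region is a single $xy$-monotone staircase. Concretely, scanning the corner square, the widths of the stacked horizontal items decrease as we move up and the heights of the placed vertical items decrease as we move right, so feasibility is equivalent to the two profiles being separated by one monotone curve. This lets me treat the packing as a joint choice of the horizontal stack lying below the curve and the vertical row lying to its right, with the curve advancing monotonically as items are added.

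With this normalization I would design a dynamic program with state $(i,x,y)$, where $i$ indexes items in a common order (say non-increasing longer side) and $(x,y)$ is the current corner of the staircase, recording how much bottom width is already committed to vertical items and how much left height is already committed to horizontal items. The entry stores the maximum profit over packings of a prefix of items consistent with the corner having reached $(x,y)$. Each transition either discards the current item, appends it as the next horizontal item (advancing $y$ by its height, provided its width fits the horizontal budget available at that level), or appends it as the next vertical item (advancing $x$ by its width, provided its height fits). There are $O(n)$ items and $O(N)$ admissible values for each of $x$ and $y$, so the table has $O(nN^2)$ cells; evaluating a cell examines $O(n)$ candidate items to append, yielding the stated $O(n^2N^2)$ time, after which a scan of the boundary cells returns the optimum.

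The step I expect to be the main obstacle is the structural normalization: proving that restricting to monotone-staircase packings loses no profit and, crucially, that the purely local transition tests (comparing a single dimension of the new item against the current budget) faithfully enforce \emph{global} non-overlap in the corner square. I would handle this with an exchange argument that re-sorts the items and pushes them flush against the boundary of the $L$ without introducing overlaps, thereby showing that feasibility is captured exactly by the monotone advancement of $(x,y)$. Once this equivalence is in place, both the correctness and the running-time bound of the dynamic program follow immediately.
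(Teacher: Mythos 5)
There is a genuine gap, and it is in exactly the place you flagged as the main obstacle: the claim that a single common ordering by longer side, together with the aggregate state $(i,x,y)$, captures feasibility. First, a setup problem: as literally described (horizontal items left-aligned and stacked from the bottom, vertical items bottom-aligned and placed from the left), the widest horizontal item and the tallest vertical item both contain the bottom-left corner of the $L$, so any packing with one item of each type self-overlaps; in a genuine $L$-packing the two chains hug \emph{opposite} boundaries (e.g., vertical items flush with the top edge, horizontal items flush with the right edge), which is what makes the two staircases interlock. But the deeper problem survives any fix of the orientation: feasibility of placing the next item depends on the \emph{profile} of which earlier items occupy which columns/rows, not only on the totals $(x,y)$. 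Concretely, take $N=10$, vertical items $v_1=(w{=}6,h{=}9)$, $v_2=(w{=}1,h{=}9)$, $v_3=(w{=}1,h{=}5.5)$, $v_4=(w{=}6,h{=}5.5)$, and one horizontal item $g=(w{=}5,h{=}2)$; the longer-side order is $v_1,v_2,v_3,v_4,g$. The selections $\{v_1,v_3\}$ and $\{v_2,v_4\}$ both reach state $(x,y)=(7,0)$ when $g$ is processed, yet $\{v_2,v_4,g\}$ is a feasible $L$-packing while $\{v_1,v_3,g\}$ is not (the wide tall item $v_1$ collides with $g$). Hence no transition rule depending only on $(i,x,y)$ and the current item's dimensions can be correct: with profits $p(v_1)=p(v_3)=10$, $p(v_2)=p(v_4)=1$, $p(g)=100$, accepting the transition reports profit $120$, which no feasible packing achieves, and rejecting it loses the optimum in the mirrored instance. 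The exchange argument you hope for cannot exist, because the longer-side order is in general not realizable as a placement/cut order: in the feasible packing $\{v_2,v_4,g\}$ the only valid guillotine sequence is $v_2, g, v_4$, i.e., $g$ must be peeled off \emph{before} $v_4$ even though its longer side is smaller.

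This is precisely why the algorithm cited by the paper keeps \emph{two} sorted prefixes: its table $\mathrm{DP}[i,t,j,r]$ indexes vertical items (sorted by height) and horizontal items (sorted by width) separately, with the two budgets $t,r$, so that the interleaving of ``place next vertical'' versus ``place next horizontal'' remains a decision of the DP. Correctness then follows from the guillotine-cut property of $L$-packings: whichever item is peeled off next sits flush against the boundary of the current residual region, so its feasibility check is local to $(t,r)$, and no cross-constraints with the remaining items need to be remembered. Your 3-dimensional table is exactly the paper's table with the two item indices merged into one, and the counterexample above shows this merge discards information that is genuinely needed; restoring the second index recovers the cited $O(n^2N^2)$ algorithm, with $O(1)$ transitions per cell rather than the $O(n)$ you budget (which is also internally inconsistent with a state that already fixes the item prefix).
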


Again, for our purposes, it is important to mention that this algorithm is a dynamic program that computes cells of the form $DP[i,t,j,r]$, storing the maximum profit achievable using vertical items in $\{1,\dots,i\}$ of total width at most $t$ and horizontal items in $\{1,\dots,j\}$ having total height at most $r$. 
This can be computed via the following scheme: 
\begin{eqnarray*} 
    \text{DP}[i,t,j,r] & = & \max\{\text{DP}[i-1,t,j,r],\text{DP}[i,t,j-1,r], \\ 
    & & \text{DP}[i-1,t-w(i),j,r] + p_i, \text{DP}[i,t,j-1,r-h_j]+p_j\},
\end{eqnarray*} 
which uses the fact that $L$-packings admit a guillotine cutting sequence (see \Cref{fig:combined-packing} for a depiction). 
Thus, computing the best $L\& C$-based packing can be done by guessing the $L$-shaped region and the containers, partitioning the items according to their sizes to see which ones go to the $L$-packing and which ones go to the containers, and then running both dynamic programs to obtain the solution.

Now we have all the required ingredients to prove Theorem~\ref{thm:2DK-div}.

\begin{proof}[Proof of Theorem~\ref{thm:2DK-div}] 
    For both results, our approach is to devise $k$-best enumeration procedures for the corresponding budget-constrained versions of the problems in order to apply Theorem~\ref{thm:framework}. 
    This can be achieved by adding extra dimensions to the corresponding dynamic programming tables and using Lawler's approach.

    Consider first the case of $L\& C$-based solutions.
    The budget-constrained version of the problem, for the case of the L-packing, can be solved by a DP with entries $[i,t,j,r,W']$, where the last dimension accounts for the extra weight $w'$. 
    Similarly, the budget-constrained version of the container-based solution can be solved by a DP of the form 
    \begin{align*}&P[i,c_1,\dots,c_t,W']\\
    &= \max\{P[i-1,c_1,\dots,c_t,W'],\max_j\{P[i-1,\dots,c_j-s_{ij},\dots,W'-w'(i)] + p_{ij}\}\}.
    \end{align*}
    Then, the $k$-best enumeration procedure computes the first solution $X_1$ using the exact DP for L-packings and the exact DP for container-based packings. 
    In order to compute the following solutions $X_p$, for $p\ge 2$, we fix variables in the modified DPs in order to branch and apply Lawler's approach. 
    This allows to apply Theorem~\ref{thm:framework} and obtain the desired result.

    Consider now the statement for container-based solutions from the theorem. 
    The main difference with the previous adaptation for container-based packings is that we desire to achieve polynomial running time at the expense of enlarging the knapsack region in both dimensions by a small multiplicative factor. 
    To this end, we use the second statement from Lemma~\ref{lem:GAP}, which allows us to compute solutions of optimal total profit while enlarging the bins by a factor of $(1+\varepsilon)$. 
    In the obtained solution, this means that the containers are enlarged either vertically by a factor of $1+\varepsilon$ if they are horizontal containers, horizontally by a factor of $1+\varepsilon$ if they are vertical containers, or in both dimensions by a factor of $1+\varepsilon$ if they are area containers. 
    This naturally induces a container-based packing in the enlarged knapsack that has a total profit of at least $opt$. 
    Since this packing is obtained by solving the same dynamic program stated before but over a rounded instance, we can apply exactly the same approach of incorporating an extra dimension to the table to attain a $k$-best enumeration procedure for the budget constrained version of the problem. This verifies Condition \ref{framework-kbest}.

    Finally, if $k$ is a fixed constant, in both cases we can exactly keep track of the distance between any pair of solutions, in an analogous manner to Lemma~\ref{lem: diverse-knapsack-exact}, and this verifies Condition \ref{framework-exact}. 
    
    By applying Theorem~\ref{thm:framework}, we obtain the desired results. This completes proof of \Cref{thm:2DK-div}.
\end{proof}

\clearpage
\section{Diverse Enclosing-Polygons}\label{sec: diverse-enclosing-polygons}

In this section, we provide our result for the \textsc{Diverse Enclosing-Polygons}.
The \textsc{Diverse Enclosing-Polygons} problem is a diverse version of the \textsc{Fence Enclosure} problem~\cite{arkin1993geometric}, where the input is a set of points $P=\{p_1,\dots,p_n\}$, with an integer value $v_i $ associated to $p_i$, and a budget $L \in \mathbb{R}_{>0}$. The goal of the \textsc{Fence Enclosure} problem is to find a polygon of perimeter at most $L$ that encloses\footnote{Enclosing refers to {\it weakly enclosing}, i.e., the boundary points on edges also are included.} a set of points of maximum total value.
Now, we introduce a diverse version of the \textsc{Fence Enclosure} problem.

\begin{tcolorbox}[colback=white,colframe=black,sharp corners,boxrule=0.5pt]
    \noindent\textsc{Diverse Enclosing-Polygons}
    \vspace{4mm}
    
    \noindent\textbf{Input:} A set $P = \{p_i \in \mathbb{R}^2 : i \in [n]\}$ of $n$ points, where each point $p_i$ for $i \in [n]$ is associated with an integer profit $v_i \geq 0$. Additionally, three parameters are specified: numbers $ c \in (0, 1) $, $L > 0$ and an integer $k \geq 1$. 

    \vspace{2mm}
    \noindent\textbf{Output:} $k$ distinct subsets $P_1,\ldots,P_k$ of $P$ that satisfy the following conditions simultaneously, if the exist.
    Let $\mathcal{S}_L$ denote the collection of all the subsets of $P$ whose convex hulls have perimeters at most $L$. For any $S$ in $\mathcal{S}_L$, let $v(S)$ be $\sum_{p_i \in S, i \in [n]} v_i$.
    
    \begin{enumerate}[leftmargin=*]
        \item For each $i \in [k]$, $P_i \in \mathcal{S}_L$.
        \item For each $i \in [k]$, $v(P_i) \ge c\cdot\max_{S \in \mathcal{S}_L} v(S)$.
        \item $ \sum_{i \ne j \in [k]} |P_i \Delta P_{j}| $ is maximized.
    \end{enumerate}
\end{tcolorbox}

As a partial application to our framework (\Cref{cor:framework}), we prove that there is a $\tbetak$-approximation algorithm for $(1-\delta)c$-resource augmentation algorithm for the \textsc{Diverse Enclosing-Polygons} problem, by adapting the dynamic programming by Arkin et al.~\cite{arkin1993geometric}.

\begin{theorem}[Diverse Enclosing-Polygons]
    \label{thm:enclosingpolygon}
    \textsc{Diverse Enclosing-Polygons} admits an $\enclPolyRun$-time $\tbetak$-approximation algorithm with $(1-\delta)c$-resource augmentation.
\end{theorem}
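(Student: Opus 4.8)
The plan is to instantiate \Cref{cor:framework}: since we only seek a $\tbetak$-\apx, no exact subroutine (Condition \ref{framework-exact}) is needed, so it suffices to exhibit a family $\cF'_c$ meeting Conditions \ref{framework-1} and \ref{framework-2} together with a budget-constrained $k$-best enumeration over $\cF'_c$ (Condition \ref{framework-kbest}). The heart of the argument is to upgrade the dynamic program of Arkin et al.~\cite{arkin1993geometric} for single-solution \textsc{Fence Enclosure} into such a $k$-BCBE routine. Recall its structure: for each candidate \emph{anchor} $p_0$ (the lowest enclosed vertex), order the remaining points by polar angle around $p_0$ and build the convex hull counterclockwise through a fan triangulation rooted at $p_0$. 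The DP state is the current hull edge $(v_j,v_k)$; a transition appends a vertex $v_m$ forming a left turn, pays perimeter $|v_k v_m|$, and accounts for the points inside the new fan triangle $p_0 v_k v_m$. Since each enclosed point lies in exactly one fan triangle, summing over triangles counts every point exactly once, and closing the chain with the edges $p_0 v_1$ and $v_t p_0$ gives the total perimeter; iterating over the $O(n)$ anchors recovers all convex solutions.

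Next I would augment this DP along two extra coordinates. One coordinate tracks the enclosed value $w$, clamped at the $c$-optimality threshold $\lceil c\cdot\max_{S\in\mathcal{S}_L} v(S)\rceil$, which is what enforces $c$-optimality; the other tracks the integer score $s=r(S)$ with $0\le s\le R=nk$, as supplied by the framework. The cell value itself is the \emph{minimum perimeter} attaining a given $(w,s)$ for a chain ending at edge $(v_j,v_k)$, and at the root we read off solutions with $w$ above threshold and perimeter at most $L$. Crucially, since interior points do not affect the hull or its perimeter, within each fan triangle we may freely include any subset of its interior points; this amounts to a small inner knapsack-style sweep over the triangle's points that updates the $(w,s)$ coordinates, so that different subsets of interior points yield genuinely distinct feasible subsets of $P$. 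To obtain the $k$ best solutions by score I keep, for every state, a sorted list of the $k$ smallest perimeters and merge these lists across transitions exactly as in \Cref{thm: top-k-is-vc} (equivalently, Lawler's method), costing only an extra factor of roughly $k$. This yields the $k$-BCBE algorithm required by Condition \ref{framework-kbest}, in the stated $\enclPolyRun$ time.

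Finally I would settle Conditions \ref{framework-1} and \ref{framework-2}. Taking $\cF'_c$ to be the solutions whose scaled value clears the threshold, I apply the scaling-and-rounding scheme of \Cref{lem: knapsack-scaling} to the profits $v_i$: this keeps the value coordinate polynomially bounded (hence the DP polynomial) while guaranteeing $\cF'_c\subseteq\cF_{(1-\delta)c}$, which is Condition \ref{framework-1}; Condition \ref{framework-2} follows because rounding the profits preserves the actual subsets and hence all pairwise symmetric differences, so $\mathrm{OPT}_\mathrm{div}(\cF'_c)$ loses nothing against $\mathrm{OPT}_\mathrm{div}(\cF_c)$. (When the values are already polynomially bounded one may instead take $\delta=0$ and $\cF'_c=\cF_c$, recovering the no-resource-augmentation entry of \Cref{tab:applications}.) An application of \Cref{cor:framework} then delivers the claimed $\tbetak$-\apx with $(1-\delta)c$-\ra.

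The step I expect to be the main obstacle is precisely the $k$-best augmentation of the geometric DP: one must reconcile the real-valued perimeter, which is \emph{minimized} rather than indexed, with the two integer coordinates $(w,s)$ and the per-triangle choice of interior points, all while ensuring that the enumerated solutions are distinct subsets of $P$ and that no enclosed point is double-counted across fan triangles. Getting the bookkeeping right so that merging the $k$-smallest-perimeter lists simultaneously respects the value threshold and the score ordering, without inflating the running time past $\enclPolyRun$, is the delicate part.
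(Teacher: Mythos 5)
Your overall route is the paper's: invoke \Cref{cor:framework} (so no exact subroutine is needed), verify Conditions \ref{framework-1} and \ref{framework-2} via the scaling-and-rounding of \Cref{lem: knapsack-scaling} (which gives $\cF_c \subseteq \cF'_c \subseteq \cF_{(1-\delta)c}$, so diversity preservation is immediate), and obtain the $k$-BCBE by augmenting the Arkin et al.\ dynamic program with an enclosed-value coordinate and an integer score coordinate, the cell storing the minimum perimeter, with Lawler-style $k$-best lists on top. All of this matches the paper's proof in \Cref{sec: diverse-enclosing-polygons}.

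The genuine deviation --- and the gap --- is your treatment of interior points. In the paper's solution concept (inherited from \textsc{Fence Enclosure}), a solution is the set of \emph{all} points weakly enclosed by the fence; one cannot omit a point lying inside the hull. Accordingly, in the paper's recurrence each candidate triangle $\triangle_{ijh}$ contributes a \emph{fixed} increment $\bigl(\tilde v(\triangle_{ijh}), s(\triangle_{ijh})\bigr)$, retrievable in $O(1)$ time by triangle queries, so each transition costs $O(1)$; the table has $O(nk \cdot n^2 \cdot \tilde V)$ cells with $O(n)$ transitions each, giving $O(\delta^{-1}n^5k)$ per BCBE call, which is exactly what yields $\enclPolyRun$ after the $k$-best and local-search overhead. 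Your ``inner knapsack over each triangle's interior points'' instead searches over arbitrary subsets whose hull has perimeter at most $L$, and this causes two problems. First, a set that omits a point inside its own convex hull is not an enclosing-polygon solution at all: any fence containing the set contains its hull and hence encloses the omitted point too. So the space you enumerate is not the space the theorem's feasibility and diversity guarantees refer to (the problem statement's phrase ``subsets of $P$ whose convex hulls have perimeters at most $L$'' can be read your way, but the paper's DP, its runtime, and the fence semantics it builds on all fix the closed interpretation). Second, this choice is precisely the source of the runtime obstacle you flag but do not resolve: a transition is no longer a fixed decrement but a convolution over all achievable $(\Delta w, \Delta s)$ pairs of the triangle's interior points (or, if you serialize those decisions, an extra factor of roughly $n$ in the state space), pushing the bound past $\enclPolyRun$. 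Once you adopt the fence semantics --- every point inside the appended triangle is automatically included --- the ``delicate part'' dissolves: transitions are $O(1)$, no double counting occurs since each enclosed point lies in exactly one fan triangle, and the claimed bound follows by the same accounting as in \Cref{lem: diverse-knapsack-beta-k}.
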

\begin{proof}

    \begin{figure}[h]
        \centering
        \includegraphics[width=0.35\textwidth]{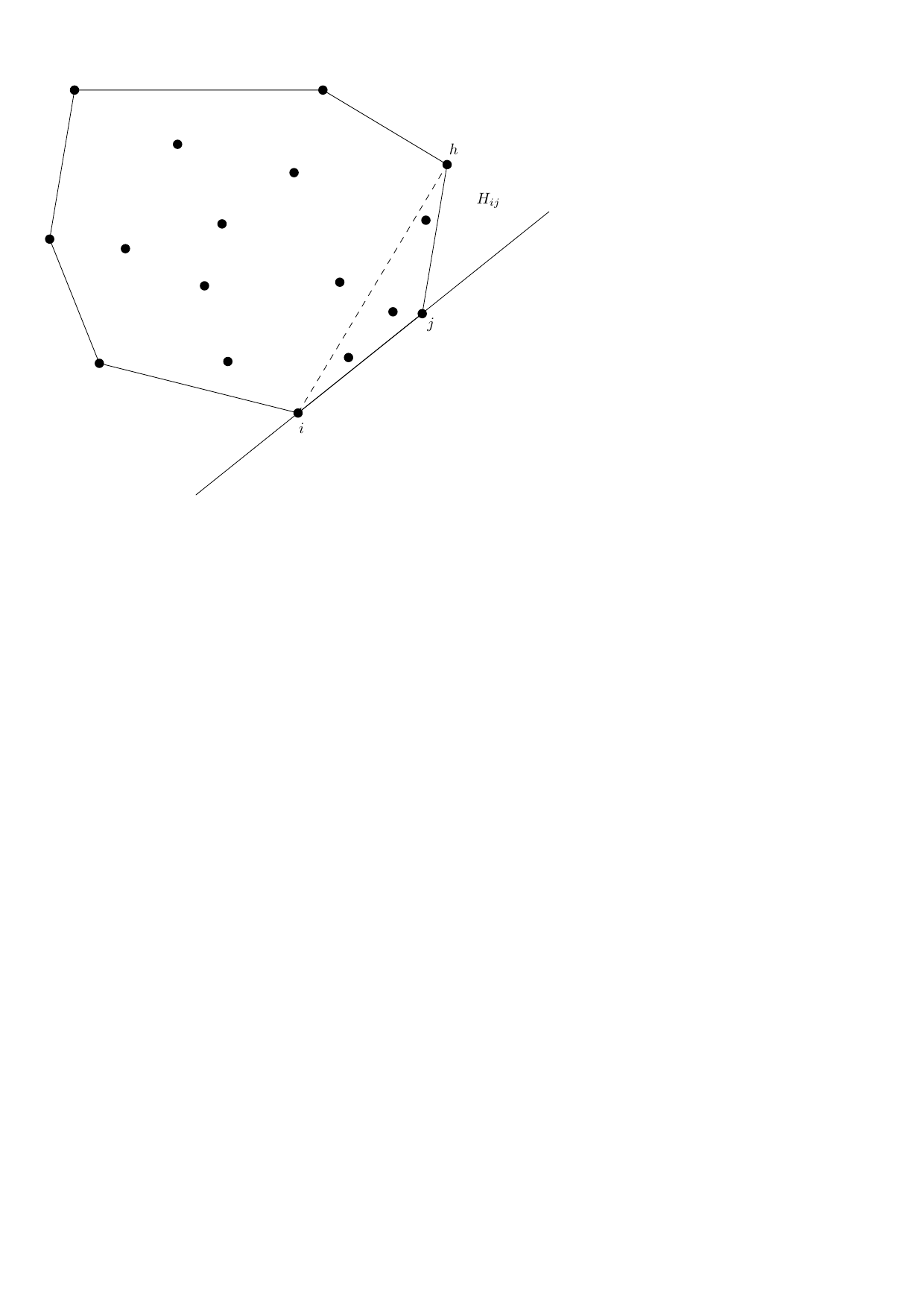}
        \caption{An illustration of an enclosing polygon}
        \label{fig:enclosing_polygon}
    \end{figure}

    We prove that there is a $ O(\delta^{-1} n^5 k^2) $-time algorithm for the BCBE problem in the space $\cF'_{(1-\delta)c}$. By \Cref{cor:framework}, then the result follows.
    First, by following the scaling and rounding scheme of \cite{IK75} one can find a $(1-\delta)$-optimal solution by using the algorithm in \cite{arkin1993geometric}. Call it $S$.
    Now, according to \Cref{lem: knapsack-scaling}, scale and round the value $v_i$ of each point to $ \tilde{v} = \left\lfloor \frac{\tilde{V}+n}{c\cdot v(S)}\cdot v_i \right\rfloor $, where $ \tilde{V} = \left\lceil \frac{1-\delta}{\delta} \cdot n \right\rceil $.
    Our goal then reduces to solve the BCBE problem with budget $\tilde{V}$.
    As usual, define the score $s_i$ of the point $p_i$ as $s_i = \sum_{m=1}^{k}\mathbbm{1}(p_i\not\in P_m) - \sum_{m=1}^{k}\mathbbm{1}(p_i\in P_m)$.
    Then, the budget-constrained version of the problem asks to find a point set $P'$ such that $ \sum_{p_i\in P'}s(p_i) $ is maximized, $P'$ encloses value $ \tilde{V} $ and its length is at most $L$, where the length of a point set denotes the length of its convex hull.
    
    Let $\ell(s, p_i,p_j,\tilde{v})$ be the minimum length of the enclosing polygon to enclose the value exactly $\tilde{v}$ and the total score of the enclosed points is exactly $s$, subject to the constraints that
        \begin{enumerate}
            \item the enclosure lies within the halfplane $H_{ij}$ (left to the oriented line $\overrightarrow{ij}$), and
            \item $\overline{p_ip_j}$ is an edge of the enclosing polygon.
        \end{enumerate}
        Then, $\ell(s, p_i, p_j, \tilde{v}) $ is defined recursively by
    \begin{equation*}
        \min\limits_{\substack{h:\, p_h \in H_{ij},\\ h \neq i,j}} \left\{ \ell\Big(\big(s-s(\triangle_{ijh})+s_i+s_h\big), p_i, p_h, \big(\tilde{v} - \tilde{v}(\triangle_{ijh}) +\tilde{v}_i+\tilde{v}_h\big)\Big) + (\ell_{ij} + \ell_{jh} - \ell_{ih}) \right\},
    \end{equation*}
    where $\tilde{v}(\triangle_{ijh})$ (resp., $s(\triangle_{ijh})$ represents the sum of the (adjusted) values (resp., the scores) of all the points in $P$ enclosed by the triangle formed by the three points $p_i$, $p_j$ and $p_h$, and $\ell_{ij}$ is the length of the line segment connecting $p_i$ and $p_j$, and similarly for $\ell_{jh}$  and $\ell_{hi}$; see \Cref{fig:enclosing_polygon} for illustration. 
    It is possible to answer the triangle queries in constant time~\cite{arkin1993geometric}.
    The base cases for the above recurrence relation are: 
    \begin{enumerate} 
        \item $\ell(s, p_i, p_j, \tilde{v})=\infty$ if $\tilde{v} \neq \sum_{h:\, p_h \in H_{ij}} \tilde{v}_h$ or $s \neq \sum_{h:\, p_h \in H_{ij}} s_h$ 
        \item $2l_{ij}$ if $\tilde{v}_i+\tilde{v}_j=\tilde{v}$ and $s_i+s_j=s$. 
    \end{enumerate}
    We compute $\min_{i\neq j}\ell(s, p_i, p_j, \tilde{V})$ in order of increasing $\tilde{v}$ and $s$, for $ \tilde{v} =1, \ldots, \tilde{V} $ and for $s = 1, \ldots, nk$, and find $ \min_{i\neq j}\ell(s, p_i, p_j, \tilde{V}) < \infty $ with the largest possible value of $s$. 
    If $ \min_{i\neq j}\ell(s, p_i, p_j, \tilde{V}) = \infty $ for all $ s $, then report $\infty$. 
    Since $s_{\max} \leq nk $ and $\tilde{V}=O(\tfrac{n}{\delta}$, the running time of this step is $O(s_{\max}n^3\tilde{V}) $.  $ O(\delta^{-1} n^5 k) $.

    Akin to the diverse knapsack problem, this algorithm can be turned into a $ O(\delta^{-1} n^5 k^2) $-time $k$-best enumeration procedure (see the proof of \Cref{lem: diverse-knapsack-beta-k}), and this verifies Condition \ref{framework-kbest}.
    
    We can now apply \Cref{cor:framework} to obtain the desired $\betak$-approximation algorithm with runtime $O(\delta^{-1} n^5 k^4 \log k)$, which simplifies to $ O(\enclPolyRun) $.
\end{proof}
\clearpage
\section{DMWIS and DMWVC in Unit Disk Graphs in Convex Position}\label{sec: unit-disk}

In this section, we provide our result for the DMWIS-UDGc problem. We begin with the precise problem statement.

\begin{framed}
    \noindent\textsc{DMWIS on Unit Disk Graphs in convex position} (DMWIS-UDGc)
    
    \vspace{2mm}
    \noindent\textbf{Input:} A unit-disk graph $G=(V,E)$ in convex position in the Euclidean plane, a weight function $ w: V\rightarrow \mathbb{R}_{\geq0}$, a niceness factor $c$, and an integer $k\geq1$.
    
    \noindent\textbf{Output:} $k$ distinct $c$-maximum weight independent set with the maximum diversity.
\end{framed}

We now present our result for the DMWIS-UDGc problem.
As a partial application to our framework (\Cref{cor:framework}), where $c=1$ and $\delta=0$, we show how to obtain diverse optimal independent sets (or, vertex covers) in a unit disk graph, by adapting the dynamic programming by Tkachenko and Wang~\cite{arkin1993geometric}.

\begin{theorem}[Algorithm for DMWIS-UDGc]\label{thm: dis-udgc}
    Give a unit-disk graph $ G=(V,E) $ with $n$ vertices in convex position, there is an $ O(n^7k^5\log(k)) $-time $\betak$-approximation for the \textsc{Diverse $c$-Maximum Weight Independent Sets} problem. The same statement holds for the \textsc{Diverse $c$-Minimum Weight Vertex Covers} problem.
\end{theorem}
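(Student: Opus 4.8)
The plan is to invoke \Cref{cor:framework} in the special case $\delta=0$, $c=1$, taking $\cF'_c=\cF_c$ to be the space of (weighted) maximum independent sets of $G$. In this regime Conditions \ref{framework-1} and \ref{framework-2} hold for free: Condition \ref{framework-1} because $\cF_{(1-\delta)c}=\cF_c$ gives $\cF'_c\subseteq\cF_c$ with equality, and Condition \ref{framework-2} because the maximally diverse collection in $\cF'_c$ is literally the maximally diverse collection in $\cF_c$, so the inequality holds with room to spare for every $\varepsilon>0$. The entire burden therefore reduces to verifying Condition \ref{framework-kbest}, namely exhibiting a polynomial-time $k$-BCBE algorithm for MWIS on a unit-disk graph in convex position; once this is in hand, \Cref{cor:framework} immediately delivers a $\tbetak$-\apx with $1$-\ra (matching the \textsc{RA}~$=1$ entry of \Cref{tab:applications}) in time $O(f(n,k,\sigma,c,nk)\cdot k^2\log k)$.

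To obtain the $k$-BCBE routine, I would start from the dynamic program of Tkachenko and Wang for computing a single MWIS on a UDG in convex position, whose states are indexed by a bounded number of hull vertices (e.g.\ a pair of vertices delimiting an arc of the convex order, together with boundary flags for whether the delimiting vertices are selected) and which runs in $\mathrm{poly}(n)$ time by scanning along the convex order. I would then augment this table exactly as in the proofs of \Cref{lem: diverse-knapsack-beta-k} and \Cref{thm: top-k-is-vc}: add one dimension recording the integer rarity score $R'=r(S)$ of the partial solution, and in each cell store the $k$ largest achievable weights, padding with $-\infty$ when fewer than $k$ solutions exist. The recurrence merges child lists and selects the $k$ best among the resulting combined weight-score sums; since each of the $n$ ground elements is counted by at most $k$ of the solutions, $0\le r\le nk$ and the score dimension has size $O(nk)$. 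A final linear scan over the score values at the root recovers the top $k$ solutions, and Lawler-style branching together with standard back-pointer bookkeeping reconstructs the \emph{distinct} vertex sets required by \Cref{def: best-k-enumeration}.

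For the running time I would multiply the polynomial count of base states by the $O(nk)$ score range, the $O(nk)$ cost of the score-convolution inside the recurrence (as in the $r(V)^2$ factor of \Cref{thm: top-k-is-vc}), and the $O(k)$ cost of maintaining each sorted $k$-best list; combining this with the $k^2\log k$ factor contributed by \Cref{cor:framework} yields the claimed $O(n^7k^5\log k)$ bound. The vertex-cover statement follows verbatim by running the analogous minimum-weight-vertex-cover DP on the same convex-position structure and invoking \Cref{cor:framework} identically; alternatively one may observe that $V\setminus S$ is a vertex cover iff $S$ is independent and that $|S_i\,\Delta\,S_j|=|(V\setminus S_i)\,\Delta\,(V\setminus S_j)|$, so diversity is preserved under complementation.

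The step I expect to be the main obstacle is correctly augmenting the Tkachenko--Wang recurrence with the score dimension while respecting the cyclic structure of convex position: one must ensure that the arc-based subproblems partition the chosen vertices without double-counting, so that the stored score $R'$ and the stored weight are \emph{exactly} additive across merged subproblems---the same non-overcounting care taken in \Cref{eq: mwis-klein-tardos} and \Cref{eq: top-k-is}. Handling the wraparound of the convex order, typically by fixing or guessing the selection status of a constant number of seed vertices on the hull and gluing the two ends of the scan, is where the bulk of the case analysis lives, though it inflates the running time only by a constant (in $n$) factor and hence does not change the stated asymptotics.
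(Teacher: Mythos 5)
Your proposal is correct and follows essentially the same route as the paper: the paper likewise treats this as a partial application of \Cref{cor:framework} with $\delta=0$ and $\cF'_c=\cF_c$ (so Conditions \ref{framework-1} and \ref{framework-2} are vacuous and only the $k$-BCBE routine needs work), augments the Tkachenko--Wang convex-position dynamic program with an integer score dimension $R\le nk$, stores $k$-best weights per cell to obtain the enumeration procedure, and then pays the framework's $k^2\log k$ overhead, handling vertex covers by the analogous DP. The only cosmetic difference is that you fix $c=1$, whereas the paper's DP thresholds solutions at $c\cdot W^*$ to accommodate an arbitrary niceness factor $c$; since your budget-constrained DP accommodates exactly such a weight threshold, this does not change the argument.
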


The authors in~\cite{tkachenko2024dominating} propose a dynamic programming algorithm that runs in time $ O(n^{4}) $ for finding a MWIS (Maximum Weight Independent Set) in a unit-disk graph with $n$ vertices in convex position.
We do not provide a full illustration for their dynamic programming here, and illustrate a simple modification to it.

Let $ P = \langle p_1,\ldots,p_n \rangle $ denote a cyclic sequence of the vertices ordered counterclockwise along the convex hull of $V$, and let $P(i,j)$ denote the subset of $P$ from $p_i$ counterclockwise to $p_j$, excluding $p_i$ and $p_j$.
Define $ f(i,j) $ as the weight of a maximum weight subset $P'\subseteq P(i,j)$ such that $ P'\cup\{p_i,p_j\} $ is an independent set. If no such subset exists, set $f(i,j)$ to 0. Also, if $(i,j)$ is not canonical, then set $f(i,j)=-(w_i+w_j)$. Then, \cite{tkachenko2024dominating} proves that
\begin{equation}\label{eq: first-dp}
    W^* = \max_{1\leq i,j \leq n}\Big( f(i,j) + w_i + w_j \Big),
\end{equation}
where $W^*$ is the maximum weight of independent sets of the graph.
To define subproblems of $f(i,j)$ in \Cref{eq: first-dp}, call $(i,j,k)$ a canonical triple if $|p_ip_j|>1$, $|p_jp_k|>1$ and $|p_kp_i|>1$.
For every canonical triple $(i,j,k)$, define $ f(i,j,k) $ as the weight of a maximum weight subset $ P' \subseteq P(i,j)\cap \overline{D(p_i,p_j,p_k)} $, where $D(p_i,p_j,p_k)$ denotes the the disk with the boundary containing $p_i$, $p_j$ and $p_k$ and $ \overline{D(p_i,p_j,p_k)} $ denotes the complement of $ D(p_i,p_j,p_k) $.
Then, \cite{tkachenko2024dominating} shows that by assuming an abstract point $ p_0 $ infinitely far from the line $ \overline{p_ip_j} $ and to the left of $ \overrightarrow{p_ip_j} $, the value of $ f(i,j,0) $ is exactly $ f(i,j) $.
Now, for any canonical triple $ (i,j,k) $, define $ P_{k}(i,j) = \{ P\in P(i,j) | p \in D(p_i,p_j,p_k), |pp_i| >1, |pp_j| >1 \} $, then $f(i,j,k)$ is the weight of a maximum weight independent set $P'\subseteq P_k(i,j)$.
Finally, \cite{tkachenko2024dominating} presents the following dynamic programming that is used as subproblem of the recurrence relation in \Cref{eq: first-dp}.
\begin{equation}\label{eq: second-dp}
    f(i,j,k) = \begin{cases}
        \max_{p_l \in P_k(i,j)}\Big( f(i,l,j) + f(l,j,i) + w_l \Big), & \quad \text{if }  P_k(i,j)\neq 0, \\
        0, &\quad \text{otherwise}.
    \end{cases}
\end{equation}
Using \Cref{eq: second-dp} as a subproblem, the recurrence relation in \Cref{eq: first-dp} can be done in $O(n^4)$.

\vspace{2mm}We now illustrate how to use this algorithm to incorporate it into our framework.
Assume that we have found $k$ $c$-maximum weight independent sets: $\mathcal{S} = \{ S_1,\ldots,S_k \}$. This can be done by finding one MWIS and making $k-1$ copies of it.
For each point $p \in P$, define $$r(p) = \sum_{h=1}^k \mathbbm{1}(p \not\in S_h) - \sum_{h=1}^k \mathbbm{1}(p \in S_h)$$ and call it the score of $p$.
Define $ g(i,j,R) $ as the maximum weight of a subset $P'\subseteq P(i,j)$ with score of $R$ such that $ P'\cup\{p_i,p_j\} $ is independent, i.e., $ w( P' ) = f(i,j) $ and $ r(P') = R $.
Then, the maximum weight $W^*_R$ of an independent set with score $R$ can be found by solving the following equation:
\begin{equation}\label{eq: third-dp}
    W^*_R = \max_{\substack{1 \leq i,j \leq n\\R'\in[0,R-r_i-r_j]}} \Big( g(i,j,R') + w_i + w_j \Big),
\end{equation}
where $r_i$ and $r_j$ denote the scores of $p_i$ and $p_j$.
Similarly, define $ g(i,j,k,R) $ using the definition for $f(i,j,k)$ in the previous case. Akin to the previous case, then $ g(i,j,R) = g(i,j,0,R) $. Therefore, the \Cref{eq: third-dp} can be solved by using the following recurrence relation:
\begin{equation}\label{eq: fourth-dp}
    g(i,j,k,R) = \begin{cases}
        \displaystyle\max_{\substack{p_l \in P_k(i,j)\\R'\in[0,R-r_l]}}\Big( g(i,l,j,R-r_l) + g(l,j,i,R-r_l) + w_l \Big), & \text{ if } P_k(i,j)\neq 0, \\
        0, & \text{ otherwise.}
    \end{cases}
\end{equation}
Recall that finding the largest $R$ such that $ g(1,n,0,R) \geq c\cdot W^*$ is equivalent to finding a farthest $c$-maximum weight independent set from $ \cS $. Such $R$ can be found easily by simple linear scan in time $ O(nk) $,as $ r(P') \leq nk $, since vertices in $P'$ can appear at most once in each $ S_h $, $h\in[k]$.
Therefore, the largest $R$ such that $ R \in [0,nk] $ and $ g(1,n,0,R) \geq c\cdot W^*$ can be found in time $O((nk)^2n^4) = O(n^6k^2)$.

Using similar argument as in the proof of \Cref{lem: diverse-knapsack-beta-k}, the $k$-best enumeration can be done by spending an extra factor of $k$. By our framework in \Cref{thm:framework} this gives us the desired running time of $ O(n^7k^5\log(k)) $.
Finally, it is straightforward to verify that this algorithm is applicable to the \textsc{Diverse $c$-Minimum Weight Vertex Covers} problem; the details are omitted for brevity.
\clearpage
\section{Diverse TSP Tours}\label{sec:diverse-tsp}

In this section, we present our results for the \textsc{Diverse TSP} problem. We begin with a precise problem definition.

\begin{tcolorbox}[colback=white,colframe=black,sharp corners,boxrule=0.5pt]

    \noindent\textsc{Diverse TSP}

    \vspace{2mm}
    \noindent\textbf{Input:} A complete graph $G=(V,E)$ with $n$ vertices, $\ell_{ij}$ for every $(i,j)\in E$, and an integer $k\geq1$, a niceness factor $c$.
    
    \noindent\textbf{Output:} $k$ distinct $c$-optimal TSP tours $T_1,\ldots,T_k$ that maximize $ \sum_{i < j} \lvert T_i \Delta T_j \rvert $.
\end{tcolorbox}

The works in~\cite{do2020evolving,do2022analysis} studied related problems using an evolutionary algorithm. The algorithm in~\cite{do2020evolving} works as follows: Given a factor $\delta$, the algorithm starts with a set $P = \{T_1, \ldots, T_k\}$ of $k$ $\tfrac{1}{1+\delta}$-optimal TSP tours. Each tour $T_i$ in this set satisfies $\ell(T_i) \leq (1+\delta)\,T_{\mathrm{opt}}$, where $\ell$ denotes the total length of the tour. 
At each step, a randomly selected tour $T_i \in P$ is “mutated” to produce a new tour $T_i'$. If $T_i'$ remains $\tfrac{1}{1+\delta}$-optimal, it is added to $P$, and one of the $k+1$ tours is removed to maximize the diversity of $P$ at that step. This iterative process continues until a predefined termination criterion is met. However, the algorithm does not provide a guaranteed worst-case running time.

In the subsequent work~\cite{do2022analysis}, the authors analyze the algorithm for small values of $k$. However, this analysis focuses on permutations rather than tours, meaning the quality of the output tours is not explicitly considered.
In contrast, we present two algorithms for the \textsc{Diverse TSP} problem, each with a guaranteed running time. The first is a partial application to our framework (\Cref{cor:framework}), and is a $\betak$-approximation algorithm running in $O(2^n \, n^4 \, k^4 \log^2 (k))$. The second algorithm finds a pair of TSP tours that are farthest apart in time $O(4^n \, n^5)$.

\begin{restatable}{theorem}{ResDiverseTSP}{\normalfont{[Diverse TSP]}}\label{thm: diverse-tsp} For the \textsc{Diverse TSP} problem, the following hold:
    \begin{enumerate}[leftmargin=*,label=\normalfont(\arabic*)]
        \item There is an $ O^*(k^4 \, 2^n) $-time $\betak$-approximate algorithm for the \textsc{Diverse TSP} problem.
        \item There is an $ O^*(4^n) $-time algorithm that finds two optimal TSP tours $T_1$ and $T_2$ that maximize $\lvert T_1\Delta T_2\rvert$.
    \end{enumerate}
\end{restatable}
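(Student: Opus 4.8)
The plan is to derive both parts from the classical Held--Karp dynamic program for TSP, augmented in two different ways. For part (1) I would invoke the weaker version of the framework, \Cref{cor:framework}, for which it suffices to supply a single call to the $k$-BCBE problem that is polynomial in $k$ and single-exponential in $n$; no exact routine (Condition \ref{framework-exact}) is required, since we only target a $\betak$-approximation. Here the ground set is the edge set $E$, a solution is the edge set of a Hamiltonian cycle, and the score is $r(e)=\sum_{h}\bigl(\mathbbm{1}(e\notin T_h)-\mathbbm{1}(e\in T_h)\bigr)$ as in \Cref{eq: defi-r}, shifted so that $0\le r\le R$ with $R=O(nk)$. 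First I would recall the Held--Karp table $C(S,j)$, the minimum length of a path starting at vertex $1$, visiting exactly $S\ni j$, and ending at $j$, and augment it with a score dimension together with $k$-best bookkeeping, in the spirit of \Cref{lem: diverse-knapsack-beta-k} and \Cref{thm: top-k-is-vc}: let $C_k(S,j,R')$ store the $k$ smallest path lengths over all such paths whose edges have total score exactly $R'$. A transition extends a path at its endpoint, feeding $C_k(S,j,R')$ into $C_k(S\cup\{v\},v,R'+r(\{j,v\}))$, and merging the $k$-best lists costs $O(k)$ per transition. Closing each tour by the return edge to vertex $1$ and then scanning $R'$ from high to low lets us read off, among the $c$-optimal tours (those whose closed length meets the threshold determined by a single plain Held--Karp run), the $k$ distinct tours of largest score, which is exactly the $k$-BCBE output.

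For the running time of part (1), the number of states is $2^{n}\cdot n\cdot O(nk)$, each holds $k$ values, and each of the $O(n)$ transitions per state costs $O(k)$, so one BCBE call runs in $f(n,k)=O^*(k^2 2^{n})$. Since \Cref{cor:framework} invokes this routine $O(k^2\log k)$ times (the Cevallos et al.\ local search performs $O(k\log k)$ swaps, each requiring $k$ calls), the total is $O^*(k^2 2^n)\cdot O(k^2\log k)=O^*(k^4 2^n)$, matching the claimed bound.

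For part (2) I would run a \emph{product} Held--Karp dynamic program to search simultaneously over pairs of optimal tours. After computing $\mathrm{OPT}$ once, the state is $(S_1,j_1,S_2,j_2)$ with $S_1,S_2\subseteq V$, $j_1\in S_1$, $j_2\in S_2$, which already yields the target $2^{n}\cdot 2^{n}\cdot n^2=O^*(4^{n})$ states; one path is extended per transition while both running lengths are tracked so that only optimal completions survive. Since each tour has exactly $n$ edges, maximizing $|T_1\Delta T_2|$ coincides with minimizing the number of common edges $|T_1\cap T_2|$, which I would carry as an extra counter $q\le n$ inside the DP value.

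The main obstacle is precisely this shared-edge count: a subset state $(S_i,j_i)$ forgets the internal adjacencies of its partial path, so when a new edge $\{j_1,v\}$ is appended to tour $1$ there is no direct test for whether tour $2$ also lays down that edge (detecting it from the last edge alone fails, because the predecessor of $j_2$ is not recorded). The resolution I would pursue is to guess, at the transition that lays each edge, whether it is shared, and to thread these guesses through both tables with a consistency check so that an edge flagged shared by one tour is forced to appear in the other; the counter $q$ is then incremented at a single canonical moment per shared edge. The delicate point, and the real work of the proof, is to show that this local rule counts every shared edge exactly once, remains consistent across the two tours, and can be enforced without enlarging the state beyond the $O^*(4^n)$ budget. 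This accounting, rather than the product DP itself, is where the care is concentrated.
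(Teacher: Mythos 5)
Part (1) of your proposal is correct and is essentially the paper's own proof: the paper likewise augments the Bellman--Held--Karp table with a score coordinate bounded by $nk$, keeps $k$-best lists per state, computes the $c$-optimality threshold from one plain Held--Karp run, scans the score from high to low keeping only tours of length at most $\tfrac{1}{c}\ell(T^*)$, and multiplies the per-call cost of this BCBE routine by the $O(k^2\log k)$ calls made by the local-search framework, arriving at the same $O^*(k^4\,2^n)$ bound.

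For part (2), your proposal has a genuine gap, and it is exactly the one you flag yourself: you never give the rule that counts shared edges. The ``guess at each transition whether the edge is shared and thread the guesses with a consistency check'' idea does not obviously fit the $O^*(4^n)$ budget: if tour $1$ lays an edge and guesses it shared, the DP must remember that specific edge until tour $2$ lays it (otherwise the guess can never be verified), and the set of pending guessed edges cannot be encoded in a state of the form $(S_1,j_1,S_2,j_2,q)$. So as written, your part (2) is a plan, not a proof. For comparison, the paper runs the same product DP in lockstep and declares an edge shared exactly when the two tours lay the \emph{identical} edge at the \emph{same} transition (the condition $i_1j_1=i_2j_2$ in its recurrence), incrementing the counter only then---which is precisely the ``detect it from the last edge alone'' test you rejected, and your objection is sound: shared edges laid at different steps are never charged. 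Concretely, take $T_1=(1,2,3,4,5,6)$ and $T_2=(1,4,3,2,5,6)$ in $K_6$; they share the four edges $\{2,3\},\{3,4\},\{5,6\},\{6,1\}$, yet pairing $T_1$ traversed forward with $T_2$ traversed in reverse, i.e.\ as $(1,6,5,2,3,4)$, produces zero same-step coincidences. Giving length $1$ to the eight edges of $T_1\cup T_2$ and length $10$ to all remaining edges makes $T_1,T_2$ the only optimal tours, so the paper's scan returns $C=0$ and reports a diversity far above the true maximum of $4$. So the accounting difficulty you isolate is real, it is the crux of part (2), and it is not resolved by the paper's own argument either; closing it requires a genuinely new idea (or a weaker claim), not merely more careful bookkeeping on top of the product DP.
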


\vspace{2mm} Bellman~\cite{bellman1962dynamic} and Held and Karp~\cite{held1962dynamic} used dynamic programming (BHK-DP, henceforth) to compute a single TSP tour in time $O(n^2 \cdot 2^n)$. The idea of BHK-DP is to check every combination of remaining cities before the current city.
More precisely, if $L(i, S)$ denotes the optimal TSP tour length that starts from vertex 1 and ends at vertex $ i \in S \subseteq V\setminus\{1\} $, $ L(i, S) $ can be computed as the minimum of $ L(j, S \setminus \{i\}) + \ell_{ij} $ over all $ j \in S \setminus \{1, i\} $. We adapt the idea of this dynamic programming to our framework.

\begin{proof}[Proof of \Cref{thm: diverse-tsp}]
Assume that an instance of the \textsc{Diverse TSP} problem has been given with $G=(V,E)$, $\ell_{ij}$, a positive integer $k$, and a niceness factor $c$. Additionally, $T^*$ be an optimal TSP tour. This can be found in time $O(n^2\,2^n)$ by running the BHK-DP algorithm.

Given $k$ $c$-optimal TSP tours $T_1,\ldots,T_k$, for each edge $ e $ of $E$, define $$ w(e) = \sum_{h\in[k] E} (\mathbbm{1}(e\not\in T_h) - \mathbbm{1} (e\not\in T_h) ).$$
Let $L_k(W',i,S) = (t'_1,\ldots,t'_k)$ denote the tuple of the lengths of the $k$-best $c$-optimal TSP tours $ T'_1,\ldots,T'_k $ each of which has weight $W'$ and starts at vertex $1$ and ends at $i \in S \subseteq V\setminus\{1\}$, i.e., $ \ell(T'_1) \leq \cdots \leq \ell(T'_k) \leq \ell(T) $ for any $c$-optimal TSP tour $T$ with weight $W'$ that has starts at vertex 1 and ends at vertex $ i \in S \subseteq V\setminus\{1\} $.
If there are less than $k$ such tours, the rest of the components of $L_k(W',i,S)$ is filled with $\infty$.
Then, $ L_k(W',i,S)$ can be computed by choosing the best $k$ weights from the following set:
\begin{equation}\label{eq:tsp-far-recur}
    \left\{L\Big(W'-w(i,j),j,S\setminus \{i\} \Big) + \ell_{ij} :j \in S \setminus \{1,i\} \right\}.
\end{equation}
Note that the $k$ smallest weights in increasing order among $O(n)$ objects can be found in time $O(n+k\log(k))$.
Since each edge can be contained in each tour at most once, $ W' $ is at most $nk$.
Therefore, the entire dynamic programming table can be filled in time $ O( (n^3k + n^2k^2\log(k))\, 2^n ) $, which is simply $ O(n^3\,k^2\log(k)\,2^n) $.
Once the entire table has been filled, start scanning from $ W'=nk $ down to $W'=0$ while selecting $k$ weights with tour length at most $ \tfrac{1}{c}\,\ell(T^*) $.
By our framework in \Cref{thm:framework}, the overall running time for the $\betak$-approximate algorithm for the \textsc{Diverse TSP} problem is $ O( n^4\,k^4\log^2(k)\,2^n ) $, thus the desired time bound in (1) follows.

Let $ L(C, i_1, i_2, S_1, S_2) $ be the sum of the lengths of two optimal TSP tours, say $T_1$ and $T_2$, such that both $T_1$ and $T_2$ start at node 1, tour $T_1$ ends at node $i_1 \in S_1 \subseteq V \setminus \{1\} $, tour $T_2$ ends at node $i_2 \in S_2 \subseteq V \setminus \{1\} $, and $|T_1 \cap T_2|=C$. Then,
    \begin{align*}
        &L(C, i_1, i_2, S_1, S_2)\\
        &= \displaystyle\min_{\substack{j_1 \in S_i \setminus \{1,i_1\}\\j_2 \in S_2 \setminus \{1,i_2\}}} L\left( C - f(i_1,j_1,i_2,j_2), j_1, j_2, S_1 \setminus\{i_1\}, S_2 \setminus \{i_2\} \right) + g(i_1,j_1,i_2,j_2),
    \end{align*}
where $ f(i_1,j_1,i_2,j_2) = 1 $ and $ g(i_1,j_1,i_2,j_2) = 2\ell_{i_1j_1} $ if $ i_1j_1 = i_2j_2 $, and otherwise, $ f(i_1,j_1,i_2,j_2) = 0 $ and $ g(i_1,j_1,i_2,j_2) = \ell_{i_1j_1} + \ell_{i_2j_2} $.
Note that the bases cases are $L(C, i_1, i_1, \{1,i_1\}, \{1,i_1\}) = 2\ell_{1i_1}$ if $C=1$; $L(C, i_1, i_2, \{1,i_1\}, \{1,i_2\}) = \ell_{1i_1} + \ell_{1i_2}$ if $i_1 \neq i_2$ and $C=0$. All other cases are error cases, where $L(C, i_1, i_2, S_1, S_2) = \infty$.

Starting from $C=0$ to $C=n-1$, find the minimum value of $C$ such that $L(C,1,1,V,V) = 2\ell(T^*)$.
Since every tour has $n-1$ edges and the returned $C$ denotes the minimum number of common edges of the two tours, the diameter of the optimal TSP tours of $G$ is $ 2(n-1)-C $.
Since $ C < n$, the overall running time of this algorithm is $ O( 4^n \cdot n^5 )$; thus, the desired time bound in (2) follows.
\end{proof}
\clearpage
\section{
Relating Max-Min Diverse Solutions to Hamming Codes}
\label{sec:aqndhardness}

While our diversity measure, the sum of symmetric differences, has its merits, it also has some drawbacks. In particular, it is susceptible to algorithms that return two clusters of $k/2$ solutions centered on the diameter of the solution space.
With this in mind, the minimum pairwise distance has been a well-studied alternative diversity measure~\cite{eiben2024determinantal,baste2019fpt}. This measure is generally considered more challenging than the sum diversity measure, evidenced by the fact that all known results are of the FPT type, and no poly time approximation algorithms are known for \emph{any} problem.

In this section, we show that for many optimization problems%%% (including triangulation)
, computing a set of diverse solutions that maximize the minimum pairwise Hamming distance is closely related to the well-studied problem of computing optimal Hamming codes (see \Cref{thm:codes}). However, no efficient algorithms are currently known for this problem, and exact solutions are available for only a limited number of instances~\cite{Brouwer23,Brouwer90}, thus indicating its difficulty.

%%%\vspace{2mm}
Let $A_2(n, d)$ denote the maximum number of binary codewords of given length $n$ (i.e. elements of $[2]^n$) in which every two codewords have Hamming distance at least $d$.
There is no known efficient algorithm to compute $A_2(n, d)$ in general, and the exact values of only a limited number of instances are currently known. See, for example, \cite{Brouwer90,Brouwer23}. Note that $A_2(n, d)$ can be as large as $2^n$. To avoid basing the hardness on the output size, we focus ourselves on the computation of $A_2(n, d)$ when $d > n/2$. By the Plotkin bound~\cite{Plotkin60}, $A_2(n, d) = O(n)$ in such cases.

%%%By the analogue of Plotkin's bound for $q$-ary codes~\cite{Blake76}, $A_q(n, d) = O(n)$ in such cases.

For many optimization problems, computing a set of diverse solutions with minimum pairwise Hamming distance maximized is related to $A_2(n, d)$. We have:

\cctrihardness*

\subsection{The Knapsack Problem}\label{sec:codefirst}

We consider the %%%second 
first
problem in~\cref{thm:codes}, the max-min version of diverse knapsacks. 

\begin{proof}
Let $I_n$ be a $2n$-item instance in which for each $i \in [n]$ the $(2i-1)$-th and $(2i)$-th items both have weight $2^{i}$ and value $4^i$. Let the knapsack have size $2^{n+1} - 2$. So any optimal packing of $I_n$ contains exactly one of the $(2i-1)$-th and $(2i)$-th items for each $i \in [n]$. 

For each optimal packing $\mathcal{O}$, define $X_{\mathcal{O}}$ as a binary codeword of length $n$ so that for each $i \in [n]$ the $i$-th bit in $X_\mathcal{O}$ is $0$ if and only if  $\mathcal{O}$ contains the $(2i-1)$-th item. Otherwise $\mathcal{O}$ contains the $(2i)$-th item and the $i$-th bit in $X_\mathcal{O}$ is $1$. Consequently, $I_n$ has $m$ optimal packings with the minimum pairwise Hamming distance at least $d$ if and only if $A_2(n, d) \ge m$. This suffices to yield a reduction from computing $A_2(n, d)$ to finding a diverse set of $k = O(n)$ packings for $I_n$. Since we require $d > n/2$, $A_2(n, d) = O(n)$, as mentioned at the beginning of this section. The reduction works as follows. We perform a binary search on $g$ in the range $[1, O(n)]$. To verify whether $A_2(n, d) \ge g$, we can ask whether $I_n$ contains $k = g$ packings with minimum pairwise Hamming distance at least $d$. If the answer is ``Yes,'' set $g$ to be a larger value; otherwise, set $g$ to be a smaller value. Thus, we can compute $A_2(n, d)$ by invoking the diverse knapsack problem $O(\log n)$ times.
%%%The remaining part works similarly as that in the proof in~\cref{sec:codefirst}.
\end{proof}

\subsection{Minimum $st$-Cuts}

We consider the second problem in~\cref{thm:codes}, the max-min version of diverse minimum $st$-cuts for directed graphs. 
This result complements the fact that the max-sum version of this problem is in P~\cite{de2023finding}.

\begin{proof}
We construct $G$ as the graph depicted in~\cref{fig:hamcuts}. 
That is, $G$ consists of vertices $1, 2, \ldots, n$ and two additional vertices $s$ and $t$ as well as a directed edge from $s$ to $i$ and another from $i$ to $t$ for each $i \in [n]$. 
Clearly, to disconnect $s$ from $t$ by removing the minimum number of edges, one must remove exactly one of the directed edges $(s, i)$ and $(i, t)$ for each $i \in [n]$. 
Note that the choice for each $i \in [n]$ can be made independently. 
Therefore, $G$ has $m$ minimum $st$-cuts with the minimum pairwise Hamming distance at least $d$ if and only if $A_2(n, d) \ge m$. The remaining part works similarly as that in the proof in~\cref{sec:codefirst}.
\end{proof}

\begin{figure}[!h]
\centering
\captionsetup{width=.8\linewidth}
\includegraphics[scale=1.0]{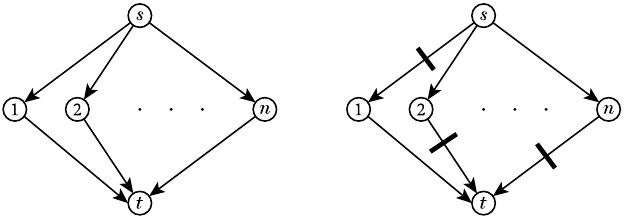}
\caption{An $(n+2)$-vertex directed graph $G = ([n] \cup \{s, t\}, E)$ in which each minimum $st$-cut contains exactly $n$ edges. \label{fig:hamcuts}}
\end{figure}

\end{document}